\newcommand{\per}{{per}}
\newcommand{\bigo}{{\mathcal O}}
\newcommand{\ra}{\rightarrow}
\newcommand{\LCP}{{\mathit{LCP}}}
\newcommand{\LPF}{{\mathit{LPrF}}}
\newcommand{\LPdF}{{\mathit{LPF}}}
\newcommand{\LP}{{\mathit{LPal}}_\alpha}
\newcommand{\LR}{{\mathit{LRep}}_\alpha}
\newtheorem{remark}{Remark}
\newtheorem{lemma}{Lemma}
\newtheorem{problem}{Problem}
\newtheorem{theorem}{Theorem}
\author{Marius Dumitran\affiliationmark{1} \and Pawe\l{} Gawrychowski\affiliationmark{2} \and Florin Manea\affiliationmark{3}}
\title[Longest Gapped Repeats and Palindromes]{Longest Gapped Repeats and Palindromes
\thanks{This is an extension of the conference papers of \cite{fct}, presented at the 20th International Symposium on
Fundamentals of Computation Theory, FCT 2015, and of \cite{mfcs}, presented at the 40th International Symposium on
Mathematical Foundations of Computer Science, MFCS 2015.}}
\affiliation{
  Faculty of Mathematics and Computer Science, University of Bucharest \\
  Institute of Computer Science, University of Wroc\l{}aw\\
  Department of Computer Science, Kiel University}
\keywords{combinatorial pattern matching, gapped repeats, gapped palindromes}
\begin{document}
\publicationdetails{19}{2017}{4}{4}{1337}
\maketitle
\begin{abstract}
A gapped repeat (respectively, palindrome) occurring in a word $w$ is a factor $uvu$ (respectively, $u^Rvu$) of $w$. In such a repeat (palindrome) $u$ is called the arm of the repeat (respectively, palindrome), while $v$ is called the gap. 
We show how to compute efficiently, for every position $i$ of the word $w$, the longest gapped repeat and palindrome occurring at that position, provided that the length of the gap is subject to various types of restrictions. That is, that for each position $i$ we compute the longest prefix $u$ of $w[i..n]$ such that $uv$ (respectively, $u^Rv$) is a suffix of $w[1..i-1]$ (defining thus a gapped repeat $uvu$ -- respectively, palindrome $u^Rvu$), and the length of $v$ is subject to the aforementioned restrictions.
\end{abstract}

\section{Introduction}
Gapped repeats and palindromes have been investigated for a long time (see, e.g., \cite{Gu97,Brodal,KK_SPIRE,KK09,KolpakovPPK14,power_of_SA,Cro2011} and the references therein), with motivation coming especially from the analysis of DNA and RNA structures, where tandem repeats or hairpin structures play important roles in revealing structural and functional information of the analysed genetic sequence (see \cite{Gu97,Brodal,KK09} and the references therein). 

A gapped repeat (respectively, palindrome) occurring in a word $w$ is a factor $uvu$ (respectively, $u^Rvu$) of $w$. The middle part $v$ of such a structure is called gap, while the two factors $u$ (respectively, the factors $u^R$ and $u$) are called left and right arms. Generally, the previous works were interested in finding all the gapped repeats and palindromes, under certain numerical restrictions on the length of the gap or on the relation between the length of the arm of the repeat or palindrome and the length of the gap.

A classical problem for palindromes asks to find the longest palindromic factor of a word (see, \cite{Manacher}). This is our first inspiration point in proposing an alternative point of view in the study of gapped repeats and palindromes. As a second inspiration point for our work, we refer to the longest previous factor table (LPF) associated to a word. This data structure was introduced and considered in the context of efficiently computing Lempel-Ziv-like factorisations of words (see \cite{IlieLPF,power_of_SA}). Such a table provides for each position $i$ of the word the longest factor occurring both at position $i$ and once again on a position $j<i$. Several variants of this table were also considered by \cite{power_of_SA}: the longest previous reverse factor ($\LPF$), where we look for the longest factor occurring at position $i$ and whose mirror image occurs in the prefix of $w$ of length $i-1$, or the longest previous non-overlapping factor, where we look for the longest factor occurring both at position $i$ and somewhere inside the prefix of length $i-1$ of $w$. Such tables may be seen as providing a comprehensive image of the long repeats and symmetries occurring in the analysed word. 

According to the above, in our work we approach the construction of longest previous gapped repeat or palindrome tables: for each position $i$ of the word we want to compute the longest factor occurring both at position $i$ and once again on a position $j<i$ (or, respectively, whose mirror image occurs in the prefix of length $i-1$ of $w$) such that there is a gap (subject to various restrictions) between $i$ and the previous occurrence of the respective factor (mirrored factor). Similar to the original setting, this should give us a good image of the long gapped repeats and symmetries of a~word.

A simple way to restrict the gap is to lower bound it by a constant; i.e., we look for factors $uvu$ (or $u^Rvu$) with $|v|>g$ for some $g\geq 0$. The techniques of~\cite{power_of_SA} can be easily adapted to compute for each position $i$ the longest prefix $u$ of $w[i..n]$ such that there exists a suffix $uv$ (respectively, $u^Rv$) of $w[1..i-1]$, forming thus a factor $uvu$ (respectively, $u^Rvu$) with $|v|>g$. Here we consider three other different types of restricted gaps.

We first consider the case when the length of the gap is between a lower bound $g$ and an upper bound $G$, where $g$ and $G$ are given as input (so, may depend on the input word). This extends naturally the case of lower bounded~gaps. 
\begin{problem}\label{LPFgG}
Given $w$ of length $n$ and two integers $g$ and $G$, such that $0\leq g< G\leq n$, construct the arrays $\LPF_{g,G}[\cdot]$ and $\LPdF_{g,G}[\cdot]$ defined for $1\leq i\leq n$:
\begin{itemize} 
\item[a.] $\LPF_{g,G}[i]=\max\{|u|\mid$ there exists $v$ such that $u^Rv$ is a suffix of $w[1..i-1]$ and $u$ is prefix of $w[i..n]$, with $ g\leq  |v|< G\}$.
\item[b.]$\LPdF_{g,G}[i]=\max\{|u|\mid$ there exists $v$ such that $uv$ is a suffix of $w[1..i-1]$ and $u$ is prefix of $w[i..n]$, with $ g\leq |v|< G\}$.
\end{itemize}
\end{problem}
We are able to solve Problem \ref{LPFgG}(a) in linear time $\bigo(n)$. Problem \ref{LPFgG}(b)  is solved here in $\bigo(n\log n)$ time. Intuitively, in the case of gapped palindromes, when trying to compute the longest prefix $u$ of $w[i..n]$ such that $u^Rv$ is a suffix of $w[1..i-1]$ with $g<|v|\leq G$, we just have to compute the longest common prefix between $w[i..n]$ and the words $w[1..j]^R$ with $g<i-j\leq G$. The increased difficulty in solving the problem for repeats (reflected in the increased complexity of our algorithm) seems to come from the fact that when trying to compute the longest prefix $u$ of $w[i..n]$ such that $uv$ is a suffix of $w[1..i-1]$ with $g<|v|\leq G$, it is hard to see where the $uv$ factor may start, so we have to somehow try more variants for the length of $u$. \cite{Brodal} give an algorithm that finds all maximal repeats (i.e., repeats whose arms cannot be extended) with gap between a lower and an upper bound, running in $\bigo(n\log n+z)$ time, where $z$ is the number of such repeats. It is worth noting that there are words (e.g., $(a^2b)^{n/3}$, from \cite{Brodal}) that may have $\Theta(n G)$ maximal repeats $uvu$ with $|v|<G$. Hence, for $G>\log n$ and $g=0$, for instance, our algorithm is faster than an approach that would first use the algorithms of \cite{Brodal} to get all maximal repeats, and then process them somehow to solve Problem \ref{LPFgG}(b). 

The data structures we construct allow us to trivially find in linear time the longest gapped palindrome having the length of the gap between $g$ and $G$, and in $\bigo(n \log n)$ time the longest gapped repeat with the length of the gap between the bounds $g$ and $G$.  

In the second case, the gaps of the repeats and palindromes we investigate are only lower bounded; however, the bound on the gap allowed at each position is defined by a function depending on the~position.
\begin{problem}\label{LPFg(i)}
Given $w$ of length $n$ and the values $g(1),\ldots,g(n)$ of $g:\{1,\ldots,n\}\ra \{1,\ldots,n\}$, construct the arrays $\LPF_{g}[\cdot]$ and $\LPdF_{g}[\cdot]$ defined for $1\leq i\leq n$:
\begin{itemize}
\item[a.]$\LPF_{g}[i]=\max\{|u|\mid$  there exists $v$ such that $u^Rv$ is a suffix of $w[1..i-1]$ and $u$ is prefix of $w[i..n]$, with $g(i)\leq |v|\}$.
\item[b.]$\LPdF_{g}[i]=\max\{|u|\mid$  there exists $v$ such that $uv$ is a suffix of $w[1..i-1]$ and $u$ is prefix of $w[i..n]$, with $g(i)\leq |v|\}$.
\end{itemize}
\end{problem}
The setting of this problem can be seen as follows. An expert preprocesses the input word (in a way specific to the framework in which one needs this problem solved), and detects the length of the gap occurring at each position (so, computes $g(i)$ for all $i$). These values and the word are then given to us, to compute the arrays defined in our problems. We solve both problems in linear time. Consequently, we can find in linear time the longest gapped palindrome or repeat whose gap fulfils the length restriction defined by the position where this palindrome or repeat occurs (as above).

Finally, following \cite{KK09,KolpakovPPK14}, we analyse gapped repeats $uvu$ or palindromes $u^Rvu$ where the length of the gap $v$ is upper bounded by the length of the arm $u$ multiplied by some factor. More precisely, \cite{KolpakovPPK14} investigate {\em $\alpha$-gapped repeats}: words $uvu$ with $|uv|\leq \alpha|u|$. Similarly, \cite{KK09} analyse {\em $\alpha$-gapped palindromes}, i.e., words $u^Rvu$ with $|uv|\leq \alpha|u|$.  For $\alpha=2$, these structures are called long armed repeats (or pairs) and palindromes, respectively; for $\alpha=1$, they are squares and palindromes of even length, respectively. Intuitively, one is interested in repeats or palindromes whose arms are roughly close one to the other; therefore, the study of $\alpha$-gapped repeats and palindromes was rather focused on the cases with small $\alpha$. Here, we address the general case, of searching in a word $w$ $\alpha$-gapped repeats or palindromes for $\alpha\leq |w|$.

\begin{problem}\label{LLAP}
Given  $w$ of length $n$ and a number $\alpha\leq n$, construct the arrays $\LP[\cdot]$ and $\LR[\cdot]$, defined for $1\leq i\leq n$:
\begin{itemize}
\item[a.]$\LP[i]=\max\{|u|\mid$ there exists $v$ such that $u^Rv$ is a suffix of $w[1..i-1], u$ is a prefix of $w[i..n]$, and $|uv|\leq \alpha |u|\}$.
\item[b.]$\LR[i]=\max\{|u|\mid$ there exists $v$ such that $uv$ is a suffix of $w[1..i-1], u$ is a prefix of $w[i..n]$, and and $|uv|\leq \alpha |u|\}$.
\end{itemize}
\end{problem}

The problem of constructing the set $S$ of all factors of a word of length $n$ which are maximal $\alpha$-gapped repeats of palindromes (i.e., the arms cannot be extended simultaneously with one symbol to the right or to the left to get a longer similar structure) was thoroughly considered, and finally settled by \cite{CroKolKu2015,STACS2016} (see also the references therein). In both these papers, it is shown that the number of $\alpha$-gapped repeats or palindromes a word of length $n$ may contain is $\Theta(\alpha n)$. Using as starting point the algorithm presented in \cite{fct}, that finds the longest $\alpha$-gapped repeat or palindrome (without constructing the set of all such structures), \cite{STACS2016} give an algorithm finding all maximal $\alpha$-gapped repeats and $\alpha$-gapped palindromes in optimal time $\Theta(\alpha n)$.  Here, we first present the algorithm of \cite{fct} for the identification of the longest $\alpha$-gapped repeat or palindrome contained in a word, and briefly explain how it was extended to output all maximal $\alpha$-gapped repeats and palindromes in a word. Then we use the algorithm of \cite{STACS2016} and a linear time algorithm finding the longest square/palindrome centred at each position of a word to solve Problem~\ref{LLAP} in linear time. 

Our algorithms are generally based on efficient data-structures. On one hand, we use efficient word-processing data structures like suffix arrays, longest common prefix structures, or dictionaries of basic factors. On the other hand, we heavily use specific data-structures for maintaining efficiently collections of disjoint sets, under union and find ope\-rations. Alongside these data-structures, we make use of a series of remarks of combinatorial nature, providing insight in the repetitive structure of the words. 

\section{Preliminaries}
The computational model we use to design and analyze our algorithms is the standard unit-cost RAM with logarithmic word size, which is generally used in the analysis of algorithms. In this model, the memory word size is logarithmic in the size of the input. 

Let $V$ be a finite alphabet; $V^*$ denotes the set of all finite words over $V$. In the upcoming algorithmic problems, we assume that the words we process are sequences of integers (i.e., over integer alphabets). In general, if the input word has length $n$ then we assume its letters are in $\{1,\ldots,n\}$, so each letter fits in a single memory-word. This is a common assumption in stringology (see, e.g., the discussion by \cite{KaSaBu06}).

The \emph{length} of a word $w\in V^*$ is denoted by $\left|w\right|$. The \emph{empty word} is denoted by ${\lambda}$. 
A word $u\in V^*$ is a \emph{factor} of $v\in V^*$ if $v=xuy$, for some $x, y\in V^*$; we say that $u$ is a \emph{prefix} of $v$, if $x={\lambda}$, and a \emph{suffix} of $v$, if $y={\lambda}.$
We denote by $w[i]$ the symbol occurring at position $i$ in $w,$ and by $w[i..j]$ the factor of $w$ starting at position $i$ and ending at position $j,$ consisting of the catenation of the symbols $w[i], \ldots, w[j],$ where $1\leq i\leq j\leq n$; we define $w[i..j]=\lambda$ if $i>j$. The powers of a word $w$ are defined recursively by $w^0={\lambda}$ and $w^n=ww^{n-1}$  for $n\geq1$. 
If $w$ cannot be expressed as a nontrivial power (i.e., $w$ is not a repetition) of another word, then $w$ is \emph{primitive}. 
A \emph{period} of a word $w$ over $V$ is a positive integer $p$ such that $w[i]=w[j]$ for all $i$ and $j$ with $i\equiv j\pmod{p}$. Let $\per(w)$ be the smallest~period~of~$w$. A word $w$ with $\per(w)\leq\frac{|w|}{2}$ is called periodic; a periodic $w[i..j]$ (with $p=\per(w[i..j])<\frac{j-i+1}{2}$) is a run if it cannot be extended to the left or right to get a word with the same period $p$, i.e., $i = 1$ or $w[i-1] \neq w[i+p-1]$,
and, $j = n$ or $w[j + 1] \neq w[j - p + 1]$. \cite{KK99} showed that the number of runs of a word is linear and their list (with a run $w[i..j]$ represented as the triple $(i,j,\per(w[i..j])$) can be computed in linear time.
The exponent of a run $w[i..j]$ occurring in $w$ is defined as $\frac{j-i+1}{\per(w[i..j])}$; the sum of the exponents of all runs in a word of length $n$ is $\bigo(n)$ (see \cite{KK99}). 

For a word $u$, $|u|=n$, over $V\subseteq \{1,\ldots,n\}$ we build in $\bigo(n)$ time the suffix array as well as data structures allowing us to retrieve in constant time the length of the longest common prefix of any two suffixes $u[i..n]$ and $u[j..n]$
of $u$, denoted $\LCP_u(i,j)$ (the subscript $u$ is omitted when there is no danger of confusion). Such structures are called $\LCP$ data structures in the following. For details, see, e.g., \cite{KaSaBu06,Gu97}, and the references therein. %Similarly, 
In the solutions of the problems dealing with gapped palindromes inside a word $w$ (Problems \ref{LPFgG}(a), \ref{LPFg(i)}(a), and \ref{LLAP}(a)) we construct the suffix array and $\LCP$ data structures for 
the word $u=w0w^R$, where $0$ is a new symbol lexicographically smaller than all the symbols of $V$; this takes $\bigo(|w|)$ time. To check whether $w[i..j]$ occurs at position $\ell$ in $w$ (respectively, $w[i..j]^R$ occurs at position $\ell$ in $w$) we check whether $\ell+(j-i+1)\leq n$ and $LCP_u(i,\ell)\geq j-i+1$ (respectively, $\LCP_u(\ell,2|w|-j+w))\geq j-i+1$). To keep the notation free of subscripts, when we measure the longest common prefix of a word $w[1..j]^R$ and word $w[i..n]$ we write $\LCP(w[1..j]^R,w[1..i])$, and this is in fact an $\LCP$-query on $u=w0w^R$; when we measure the longest common prefix of a word $w[j..n]$ and word $w[i..n]$ we write $\LCP(j,i)$, and this is in fact an $\LCP$-query on $w$.

The suffix array of $w0w^R$ allows us to construct in linear time a list ${\mathcal L}$ of the suffixes $w[i..n]$ of $w$ and of the mirror images $w[1..i]^R$ of the prefixes of $w$ (which correspond to the suffixes of length less than $|w|$ of $w0w^R$), ordered lexicographically. Generally, we denote by $Rank[i]$ the position of $w[i..n]$ in the ordered list ${\cal L}$ of these factors, and by $Rank_R[i]$ the position of $w[1..i]^R$ in ${\mathcal L}$. 

The dictionary of basic factors (introduced by~\cite{DBF}) of a word $w$ (DBF for short) is a data structure that labels the factors $w[i..i+2^k-1]$ (called basic factors), for $k\geq 0$ and $1\leq i\leq n-2^k+1$, such that every two identical factors of $w$ get the same label and we can retrieve~the~label of any basic factor in $\bigo(1)$ time. The DBF of a word of length $n$ is constructed in $\bigo(n\log n)$ time. 

Note that a basic factor $w[i..i+2^k-1]$ occurs either at most twice in any factor $w[j..j+2^{k+1}-1]$ or the occurrences of $w[i..i+2^k-1]$ in $w[j..j+2^{k+1}-1]$ form a run of period $\per(w[i..i+2^k-1])$ (so the corresponding positions where $w[i..i+2^k-1]$ occurs in $w[j..j+2^{k+1}-1]$ form an arithmetic progression of ratio $\per(w[i..i+2^k-1])$, see~\cite{KociumakaSPIRE2012}). Hence, the occurrences of  $w[i..i+2^k-1]$ in $w[j..j+2^{k+1}-1]$ can be presented in a compact manner: either at most two positions, or the starting position of the progression and its ratio. For $c\geq 2$, the occurrences of the basic factor $w[i..i+2^k-1]$ in $w[j..j+c 2^{k}-1]$ can be also presented in a compact manner: the positions (at most $c$) where $w[i..i+2^k-1]$ occurs isolated (not inside a run) and/or at most $c$ maximal runs that contain the overlapping occurrences of $w[i..i+2^k-1]$, each run having period $\per(w[i..i+2^k-1])$.

\begin{figure}\begin{center}
\includegraphics[width=\linewidth]{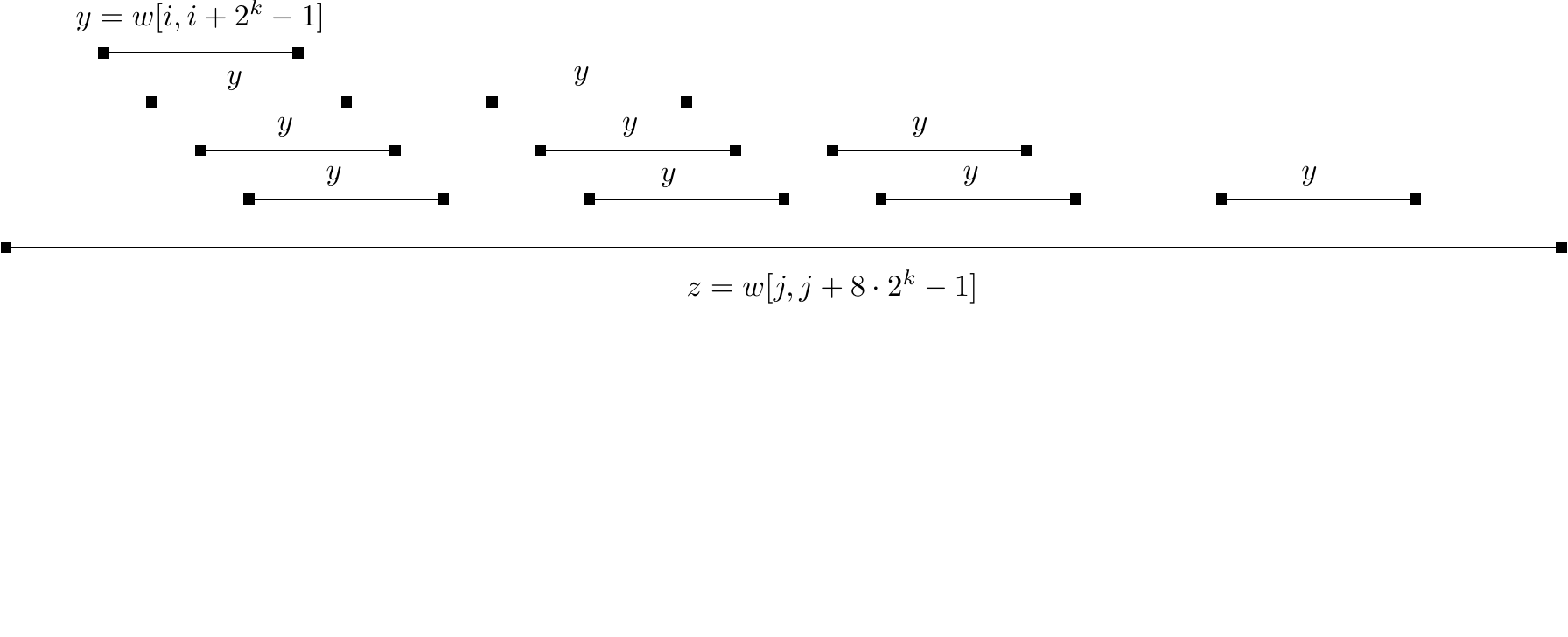}
\end{center}
\vspace*{-2.5cm}
\caption{Occurrences of the basic factors $y=w[i..i+2^k-1]$ in $z=w[j..j+8\cdot2^k-1]$. The overlapping occurrences are part of runs, and they can be returned as the pair formed of the first occurrence of $y$ from each run and the period of $y$. The representation of the occurrences of $y$ in $z$ will return $4$ elements: $3$ runs and one separate occurrence.}
\end{figure}

\begin{remark}\label{rem_DBF}
Using the DBF of a word $w$ of length $n$, given a number $\ell>0$ we can produce in $\bigo(n\log n)$ time a data structure answering the following type of queries in $\bigo(1)$ time: ``Given $i$ and $k$ return the compact representation of the occurrences of the basic factor $w[i..i+2^k-1]$ inside the basic factor $w[i-\ell-2^{k+1}..\ i-\ell-1]$''. Similarly, given $\ell>0$ and a constant $c>0$ (e.g., $c=10$), we can produce in $\bigo(n\log n)$ time a data structure answering the following type of queries in $\bigo(1)$ time: ``Given $i$ and $k$ return the compact representation of the occurrences of the basic factor $w[i..i+2^k-1]$ in $w[i-\ell-c2^{k}..i-\ell-1]$ ''. 

Indeed, once we construct the dictionary of basic factors of $w$, we reorganise it such that for each distinct basic factor we have an array with all the positions where it occurs, ordered increasingly. Now, we traverse each such array, keeping track of the current occurrence $w[i..i-2^k-1]$ and a window containing its occurrences from the range between $i-\ell-c 2^{k}$ and $i-\ell$ (and a compact representation of these occurrences); when we move in our traversal of this array to the next occurrence of the current basic factors, we also slide the window in the array, and, looking at the content of the previous window and keeping track of the occurrences that were taken out and those that were added to its content, we can easily obtain in constant time a representation of the occurrences of the considered basic factors inside the new window. 
\end{remark}

The previous remark is extended by the following lemma in a more general setting: $c$ is no longer a constant, and in a query we look, this time, for the occurrences of a basic factor $w[i..i+2^k-1]$ in factors $z=w[j..j+c 2^k -1]$, where there is no relation between $i$ and $j$. 
\begin{lemma}\label{find_occ_range}
Given a word $w$ of length $n$ and a number $c\geq 2$, we can preprocess $w$ in time $\bigo(n\log n)$ such that given any basic factor $y=w[i..i+2^k-1]$ and any factor $z=w[j..j+c 2^k -1]$, with $k\geq 0$, we can compute in $\bigo(\log \log n + c)$ time a compact representation of all the occurrences of $y$ in $z$.
\end{lemma}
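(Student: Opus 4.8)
The plan is to combine the dictionary of basic factors with a fast predecessor structure, and then exploit the fact that the occurrences of a basic factor inside a window of roughly twice its length form an arithmetic progression of ratio equal to its smallest period.

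First I would build the DBF of $w$ in $\bigo(n\log n)$ time and, exactly as in the proof of Remark~\ref{rem_DBF}, reorganise it so that for every level $k$ and every label of a basic factor of length $2^k$ we store the sorted array of the starting positions where the corresponding basic factor occurs in $w$. Over all $k$ and all labels these arrays have $\bigo(n\log n)$ entries in total and can be produced within $\bigo(n\log n)$ time (a counting sort per level). On each such sorted array I would additionally build a predecessor structure answering successor queries in $\bigo(\log\log n)$ time (for instance a $y$-fast trie over the universe $\{1,\dots,n\}$); since such a structure is built in time linear in the number of stored elements, the whole preprocessing stays within $\bigo(n\log n)$ time and space.

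To answer a query for $y=w[i..i+2^k-1]$ and $z=w[j..j+c\,2^k-1]$, I would first read in $\bigo(1)$ time the label $\ell$ of $y$ from the DBF and retrieve the sorted array $A$ of occurrences of the corresponding basic factor together with its predecessor structure. Every occurrence of $y$ inside $z$ starts at a position of $[j,\,j+(c-1)2^k]$, and I would partition this interval into $\bigo(c)$ consecutive blocks $B_0,B_1,\dots$ of length $2^k$. One successor query locates in $A$ the first occurrence that is at least $j$; afterwards a single pointer sweeps $A$ from left to right, and for each block $B_m$ I look at the (at most) next three entries of $A$ falling into $B_m$. If at most two entries lie in $B_m$, I output them as isolated occurrences and advance the pointer past them. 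If at least three entries lie in $B_m$, then all occurrences of $y$ starting in $B_m$ are contained in a factor of length at most $2^{k+1}$, so by the cited structural property they form an arithmetic progression whose ratio $p$ equals $\per(y)$; moreover $p$ is exactly the difference of the first two entries read, so in $\bigo(1)$ time I can output the run (first occurrence, ratio $p$, last occurrence), where the last occurrence is obtained from $p$, the first occurrence, and the right endpoint of $B_m$ by a single division, and advance the pointer accordingly. The sweep stops as soon as the pointer leaves $[j,\,j+(c-1)2^k]$. Each of the $\bigo(c)$ blocks costs $\bigo(1)$ and the only other operation is the single successor query, so the query time is $\bigo(\log\log n + c)$; a final $\bigo(c)$ pass merges the per-block outputs (abutting runs of equal ratio, and isolated occurrences absorbed by a neighbouring run) into the compact form with at most $c$ isolated occurrences and at most $c$ maximal runs described before the statement.

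The hard part will be the $\bigo(1)$-per-block step: I must decide, without scanning all occurrences, whether the occurrences of $y$ in $B_m$ are few or form a run, and in the latter case recover the ratio and the last element in constant time. This is precisely where the structural fact about occurrences of a basic factor inside a factor of twice its length is needed — it bounds the number of isolated occurrences per block and guarantees that three witnessed occurrences already reveal the period, so that the rest of the block is determined arithmetically from the block's right endpoint. The other fiddly ingredient is the bookkeeping of the sweeping pointer across block boundaries, so that no occurrence is reported twice and the pointer is advanced by the correct amount whenever a block yields a run.
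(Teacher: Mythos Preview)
Your proof is correct and follows the same overall strategy as the paper: build the DBF, organise per-label sorted arrays of positions, equip each with a predecessor structure, then answer a query by one successor lookup followed by a left-to-right sweep that outputs $\bigo(c)$ runs and isolated occurrences. The only substantive difference is in how the sweep detects and skips over a run. The paper, after locating the successor of $j$ and the predecessor of $j+c2^k-1$, looks at two consecutive entries of the array; if they overlap, their distance is $\per(y)$, and a single $\LCP$ query on $w$ reveals how far to the right that periodicity extends, hence how many occurrences the run contains and how far to advance the pointer. You instead chop the query window into $\bigo(c)$ blocks of length $2^k$, use the structural fact about occurrences in a window of length $2^{k+1}$ to handle each block in $\bigo(1)$ by pure arithmetic, and stitch the per-block pieces together in a final pass. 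Your route avoids $\LCP$ queries in this lemma (though the paper has $\LCP$ structures anyway), at the cost of the block bookkeeping and the merge pass; the paper's route is slightly more direct because it recovers a whole maximal run in one shot rather than re-discovering it block by block.
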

\begin{proof}
We construct the dictionary of basic factors of a word of length $n$ in $\bigo(n\log n)$ time and reorganise it such that for each basic factor we have an array with all its occurrences, ordered by their starting positions. For each such array we construct data structures that allow predecessor/successor search in $\bigo(\log\log n)$ time with linear time preprocessing w.r.t. its length (see, e.g., \cite{Boas75}). When we have to return the occurrences of $y=w[i..i+2^k-1]$ in $z=w[j..j+c 2^k -1]$, we search in the basic factors-array corresponding to $w[i..i+2^k-1]$ the successor of $j$ and, respectively, the predecessor of $j+c2^k -1$ and then return a compact representation of the occurrences of $w[i..i+2^k-1]$ between these two values. This representation can be obtained in $\bigo(c)$ time. We just have to detect those occurrences that form a run; this can be done with a constant number of $\LCP$ queries. Indeed, for two consecutive occurrences, we compute the length of their overlap, which gives us a period of $w[i..i+2^k-1]$. Then we look in $w$ to see how long can the run with this period be extended to the right, which gives us the number of occurrences of  $w[i..i+2^k-1]$ in that run. As their starting positions form an arithmetic progression, we can represent them compactly. So, we return the representation of the occurrences of $w[i..i+2^k-1]$ from this run, and then move directly to the first occurrence of $w[i..i+2^k-1]$ appearing after this run and still in the desired range; as there are at most $\bigo(c)$ runs and separate occurrences of the given basic factor in the desired range, the conclusion follows.
\end{proof}

In the following we make use of a result by \cite{Gawrychowski11}, where it is shown that one can also construct data structures that allow the fast identification of the suffixes of a word that start with a given basic factor.
\begin{lemma}(\cite{Gawrychowski11})\label{find_node}
A word $w$ of length $n$ can be processed in $\bigo(n)$ time such that given any basic factor $w[i..i+2^k-1]$, with $k\geq 0$, we can retrieve in $\bigo(1)$ time the range of the suffix array of $w$ of suffixes starting with $w[i..i+2^k-1]$. Equivalently, we can find the node of the suffix tree of $w$ closest to the root, such that the label of the path from the root to that node has the prefix $w[i..i+2^k-1]$.
\end{lemma}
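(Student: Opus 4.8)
The plan is to reduce the query to a range‑predecessor question on the LCP array, using in an essential way that the length of the queried factor is a power of two (this is exactly the point where the ``basic factor'' hypothesis pays off; in general terms we are answering a weighted level‑ancestor query on the suffix tree of $w$ whose target depth is restricted to a power of two). Alongside the suffix array of $w$, the arrays $Rank[\cdot]$, and the LCP data structures — all available in $\bigo(n)$ time — we also have the LCP array $H[2..n]$, where $H[j]$ is the length of the longest common prefix of the suffixes occupying positions $j-1$ and $j$ in suffix‑array order. Fix a basic factor $\beta=w[i..i+2^k-1]$ and put $p=Rank[i]$, so $w[i..n]$ occupies position $p$ of the suffix array and certainly has $\beta$ as a prefix. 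The suffixes of $w$ having $\beta$ as a prefix occupy a contiguous range $[\ell,r]$ of the suffix array, and by the standard characterisation of LCP‑intervals $[\ell,r]$ is the maximal interval containing $p$ all of whose internal $H$‑values are $\ge 2^k$; concretely
\[
\ell=\max\bigl(\{1\}\cup\{j\le p : H[j]<2^k\}\bigr),\qquad
r=\min\bigl(\{n\}\cup\{j-1 : j>p,\ H[j]<2^k\}\bigr).
\]
So it suffices, given $p$ and $k$, to find the nearest index $j\le p$ and the nearest index $j>p$ with $H[j]<2^k$. From $[\ell,r]$ the requested suffix‑tree node is immediate: if $\ell=r$ it is the leaf for $w[i..n]$, and otherwise it is the internal node spanning $[\ell,r]$, whose string‑depth equals $\min_{\ell<j\le r}H[j]$; this is read off with one range‑minimum query on $H$ and the usual suffix‑array/suffix‑tree correspondence.

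The key observation is that, since the threshold is a power of two, the predicate ``$H[j]<2^k$'' depends only on $c[j]:=\lfloor\log_2 H[j]\rfloor$ (with the convention $c[j]=-1$ when $H[j]=0$), and then it reads ``$c[j]<k$''; crucially $c$ is an array over an alphabet of size $\sigma=\bigo(\log n)$. Thus the lemma reduces to the self‑contained problem: preprocess an array $c[1..n]$ over an alphabet of size $\bigo(\log n)$ in $\bigo(n)$ time so that, given $p$ and $k$, one returns in $\bigo(1)$ time the nearest index on each side of $p$ with value $<k$. I would solve this by a two‑level micro/macro (``four‑Russians'') decomposition: cut $c$ into blocks of $\Theta(\log n/\log\log n)$ entries so that a block together with an in‑block position and a threshold fits in a machine word and indexes a precomputed table of size $o(n)$ returning the in‑block answer; if the answer lies outside $p$'s block, pass to the array of block minima — itself an array over the same small alphabet, of length $\bigo(n\log\log n/\log n)$ — and recurse once more, after which the top array is short enough that one may simply tabulate, for every position and every threshold, the two nearest eligible indices in $o(n)$ space. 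Each level answers in $\bigo(1)$, and all tables and block‑minimum arrays are built in $\bigo(n)$ time.

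The main obstacle is keeping the preprocessing at $\bigo(n)$ rather than $\bigo(n\log n)$: the naive approach — maintaining, for each of the $\bigo(\log n)$ admissible thresholds $2^k$, a separate ``nearest‑drop'' structure on $H$ — already costs $\Theta(n\log n)$, so the bound on the alphabet size of $c$ must be exploited to compress across thresholds by word‑packing, which is precisely what the micro/macro decomposition achieves. Everything else — the reduction to the LCP array, the characterisation of the interval $[\ell,r]$, and the recovery of the suffix‑tree node from the suffix‑array range — is routine.
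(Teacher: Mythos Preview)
The paper does not give its own proof of this lemma: it is quoted verbatim from \cite{Gawrychowski11} and used as a black box. So there is no ``paper's proof'' to compare against here.

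That said, your argument is sound and is a reasonable self-contained route to the result. The reduction of the suffix-array range to two ``nearest index with $H[j]<2^k$'' queries is standard and correct, and the passage from $H$ to $c[j]=\lfloor\log_2 H[j]\rfloor$ is exactly where the power-of-two hypothesis earns its keep, collapsing the threshold space to $\bigo(\log n)$ values. For the micro/macro part, your parameter choice $b=\Theta(\log n/\log\log n)$ is right: each $c$-entry takes $\bigo(\log\log n)$ bits, so a block packs into $\bigo(1)$ words and the universal table indexed by $(\text{block encoding},\text{in-block position},\text{threshold})$ has size $n^{\epsilon+o(1)}=o(n)$ for suitably small $\epsilon$, and is buildable in $o(n)$ time by brute force over all encodings. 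Two small points you should make explicit: (i) once the block-minima level identifies the block containing the answer, you still need one more in-block lookup from the correct side to recover the exact index --- the same universal table handles this, but say so; (ii) at the top level, building the full answer table by one linear scan per threshold costs $\bigo((n/b^2)\cdot\log n)=o(n)$, which is within budget but deserves a sentence.

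As for how this compares to the cited source: Gawrychowski's original argument is phrased in terms of weighted level ancestors on the suffix tree rather than predecessor queries on the LCP array, but the two viewpoints are equivalent via the suffix-tree/suffix-array correspondence you invoke at the end, and the combinatorial core --- exploiting that only $\bigo(\log n)$ distinct thresholds matter --- is the same.
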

We can now show the following lemma. 
\begin{lemma}\label{find_occ_small}
Given a word $v$, $|v|=\alpha \log n$, we can process $v$ in time $\bigo(\alpha \log n)$ time such that given any basic factor $y=v[j\cdot2^k+1..(j+1)2^{k}]$, with $j,k\geq 0$ and $j2^k+1> (\alpha-1)\log n$, we can find in $\bigo(\alpha)$ time $\bigo(\alpha)$ bit-sets, each storing $\bigo(\log n)$ bits, characterising all the occurrences of $y$ in $v$.
\end{lemma}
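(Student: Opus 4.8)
The plan is to precompute, for every basic factor $y$ about which a query may be asked, the bit-set of length $|v|=\alpha\log n$ whose $t$-th bit tells whether $y$ occurs in $v$ at position $t$; a query is then answered by copying out the $\bigo(\alpha)$ machine words holding this bit-set. There are only $\bigo(\log n)$ factors $y$ to handle: the hypothesis $j2^k+1>(\alpha-1)\log n$ together with $(j+1)2^k\le \alpha\log n$ forces $2^k\le\log n$, hence $k\le\log\log n$, and for each such $k$ at most $\log n/2^k$ values of $j$ are admissible; thus there are $\sum_k \log n/2^k=\bigo(\log n)$ such factors in total, and storing all of their bit-sets takes $\bigo(\log n)\cdot\bigo(\alpha)=\bigo(\alpha\log n)$ words, matching the preprocessing budget.

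As a preliminary step I would replace the letters of $v$ by their ranks in the sorted list of letters occurring in $v$, so that afterwards $v$ is over the alphabet $\{1,\dots,|v|\}$. Since any two letters of $v$ (equivalently, of $w$) can be compared in constant time through $\LCP$ and $Rank$ queries on the structures already built for $w$, and the $|v|$ letters form a set of integers in $\{1,\dots,n\}$, this renaming is done in $\bigo(|v|)$ time by (radix) sorting.

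The heart of the argument is a doubling computation on $k$. For $k=0$ and each position $p$ in the last length-$\log n$ block I need the bit-set $B_{v[p]}$ marking every position of $v$ that carries the letter $v[p]$. Since at most $\log n$ distinct letters occur in that block, I allocate one bit-set per such letter ($\bigo(\log n)$ bit-sets of $\bigo(\alpha)$ words each, i.e.\ $\bigo(\alpha\log n)$ time), and then scan $v$ once: at position $t$ I set the $t$-th bit of the bit-set of $v[t]$ whenever $v[t]$ occurs in the last block, using a scratch array of size $|v|$ to map a letter to the index of its bit-set; this costs $\bigo(|v|)$ time. Now assume the bit-sets are known for all block-aligned length-$2^{k-1}$ factors of $v$ starting past $(\alpha-1)\log n$. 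Such a factor $y=v[j2^k+1..(j+1)2^k]$ of length $2^k$ splits as $y_1=v[j2^k+1..j2^k+2^{k-1}]$ and $y_2=v[j2^k+2^{k-1}+1..(j+1)2^k]$; these are block-aligned length-$2^{k-1}$ factors occupying the grid cells of indices $2j$ and $2j+1$, and both start past $(\alpha-1)\log n$, so their bit-sets $B_{y_1},B_{y_2}$ are already available. Because $v[t..t+2^k-1]=y$ exactly when $v[t..t+2^{k-1}-1]=y_1$ and $v[t+2^{k-1}..t+2^k-1]=y_2$, we get
\[
B_y \;=\; B_{y_1} \;\wedge\; \big(B_{y_2}\text{ shifted right by }2^{k-1}\text{ positions}\big),
\]
and since $2^{k-1}<\log n$ this combination costs $\bigo(\alpha)$ time per factor $y$. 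Summing over $k\le\log\log n$ and over the $\bigo(\log n/2^k)$ relevant factors of each length $2^k$, the precomputation runs in $\sum_k\bigo(\alpha\cdot\log n/2^k)=\bigo(\alpha\log n)$ time, and every later query is answered in $\bigo(\alpha)$ time by returning the stored $B_y$.

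I expect two points to require the most care. First, one must check that the recursion is well founded: that for every relevant length-$2^k$ factor both halves are again relevant length-$2^{k-1}$ factors whose bit-sets have already been produced, and that the shift-and-mask identity above is exactly correct near the right end of $v$ (a bit $t$ of $B_{y_2}$ is meaningful only for $t\le|v|-2^{k-1}+1$, which is precisely the range of valid occurrence positions of $B_y$). Second, the base case $k=0$ — the alphabet renaming and the letter-to-bit-set bucketing — must be carried out strictly within the $\bigo(|v|)$ budget, which is exactly why the renaming to $\{1,\dots,|v|\}$ is performed up front; everything else reduces to routine bookkeeping with $\bigo(\alpha)$-word bit-sets.
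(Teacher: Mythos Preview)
Your proof is correct and takes a genuinely different route from the paper. The paper builds, for the base case $\alpha=1$, the suffix tree of $v$, augments every node with a bit-set marking the positions where its path label occurs (computed bottom-up by OR-ing the children's bit-sets), and uses Lemma~\ref{find_node} to jump from a queried basic factor to the right node in $\bigo(1)$ time; for general $\alpha$ it reduces to the constant-$\alpha$ case by covering $v$ with the $\alpha-1$ overlapping windows $v[i\log n+1..(i+2)\log n]$, each concatenated with the last block $v[(\alpha-1)\log n+1..\alpha\log n]$. Your argument instead exploits directly that only $\bigo(\log n)$ factors can ever be queried and builds their occurrence bit-sets by a DBF-style doubling: the identity $B_y=B_{y_1}\wedge(B_{y_2}\;\text{shifted by}\;2^{k-1})$ together with the bound $2^k\le\log n$ lets you assemble each level in $\bigo(\alpha)$ word operations per factor, for the same $\bigo(\alpha\log n)$ total. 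Your approach is more elementary---no suffix tree, no appeal to Lemma~\ref{find_node}, no window decomposition---while the paper's suffix-tree scheme would in principle also answer queries for factors that are not block-aligned, a generality that is unused here.

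One small caveat: your claim that the alphabet renaming of $v$ runs in $\bigo(|v|)$ time ``by radix sorting'' is not literally justified when the letters of $v$ lie in $\{1,\dots,n\}$ while $|v|=\alpha\log n$; plain radix sort on that range is not linear in $|v|$. The paper sidesteps exactly the same issue by implicitly assuming $v$ is already over an alphabet of size $\bigo(|v|)$ (so that Farach's linear-time suffix-tree construction applies); in the theorems that invoke this lemma the re-encoding is performed globally, amortised over all the factors considered. You should make the same assumption explicit, or state that the renaming is supplied by the caller.
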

\begin{proof}
We first show how the proof works for $\alpha=1$. 

We first build the suffix tree for $v$ in $\bigo(\alpha \log n)$ time (see~\cite{Farach97}). We further process this suffix tree such that we can find in constant time, for each factor $v[j\cdot 2^k+1..(j+1)2^{k}]$, the node of the suffix tree which is closest to the root with the property that the label of path from the root to it starts with $v[j\cdot 2^k+1..(j+1)2^{k}]$. According to Lemma \ref{find_node}, this can be done in linear time. 

Now, we augment the suffix tree in such a manner that for each node we store an additional bit-set, indicating the positions of $v$ where the word labelling the path from the root to the respective node occurs. Each of these bit-sets, of size $\bigo(\log n)$, can be stored in constant space; indeed, each $\log n$ block of bits from the bit-set can be seen in fact as a number between $1$ and $n$ so we only need to store a constant number of numbers smaller than $n$; in our model, each such number fits in a memory word. Computing the bit-sets can be done in a bottom up manner in linear time: for a node, we need to make the union of the bit-sets of its children, and this can be done by doing a bitwise {\bf or} operation between all the bit-sets corresponding to the children. So, now, checking the bit-set associated to the lowest node of the suffix tree such that the label of the path from the root to that node starts with $v[j\cdot 2^k+1..(j+1)2^{k}]$ we can immediately output a representation of this factor's occurrences in $v$.

This concludes the proof for $\alpha=1$. 

For $\alpha>1$, we just have to repeat the algorithm in the previous proof for the words $v[i\log n+1..(i+2)\log n]v[(\alpha-1)\log n+1 ..\alpha\log n]$, for $0\leq i\leq \alpha-2$, which allows us to find all the occurrences of the basic factors of $v[(\alpha-1)\log n+1 ..\alpha\log n]$ in $v$. The time is clearly $\bigo(\alpha \log n)$. 

\end{proof}

Note that each of the bit-sets produced in the above lemma can be stored in a constant number of memory words in our model of computation. Essentially, this lemma states that we can obtain in $\bigo(\alpha \log n)$ time a representation of size $\bigo(\alpha)$ of all the occurrences of $y$ in $v$. 

\begin{remark}\label{find_occ_small_range}
By Lemma \ref{find_node}, given a word $v$, $|v|=\alpha \log n$,  and a basic factor $y=v[j\cdot2^k+1,(j+1)2^{k}]$, with $j,k\geq 0$ and $j2^k+1> (\alpha-1)\log n$, we can produce $\bigo(\alpha)$ bit-sets, each containing exactly $\bigo(\log n)$ bits, characterising all the occurrences of $y$ in $v$. Let us also assume that we have access to all values $\log x$ with $x\leq n$ (which can be ensured by a $\bigo(n)$ preprocessing). Now, using the bit-sets encoding the occurrences of $y$ in $v$ and given a factor $z$ of $v$, $|z|=c|y|$ for some $c\geq 1$, we can obtain in $\bigo(c)$ time the occurrences of $y$ in $z$: the positions (at most $c$) where $y$ occurs outside a run and/or at most $c$ runs containing the occurrences of $y$. Indeed, the main idea is to select by bitwise operations on the bit-sets encoding the factors of $v$ that overlap $z$ the positions where $y$ occurs (so the positions with an $1$). For each two consecutive such occurrences of $y$ we detect whether they are part of a run in $v$ (by $\LCP$-queries on $v$) and then skip over all the occurrences of $y$ from that run (and the corresponding parts of the bit-sets) before looking again for the $1$-bits in the bit-sets.
 \end{remark}

Some of our solutions rely on an efficient solution for the \emph{disjoint set union-find} problem. This problem asks to maintain a family consisting initially of $d$ disjoint singleton sets from the universe $U=[1,n+1)$ (shorter for $\{1,\ldots,n\}$) so that given any element we can locate its current set and return the minimal (and/or the maximal) element of this set (operation called {\em find-query}) and we can merge two disjoint sets into one (operation called {\em union}). In our framework, we know from the beginning the pairs of elements whose corresponding sets can be joined. Under this assumption, a data-structure fulfilling the above requirements can be constructed in $\bigo(d)$ time such that performing a sequence of $m$ find and union operations takes $\bigo(m)$ time in the computation model we use (see \cite{Gabow83}). As a particular case, this data structure solves with $\bigo(n)$ preprocessing time and $\bigo(1)$ amortised time per operation the \emph{interval union-find} problem, which asks to maintain a partition of the universe $U=[1,n+1)$ into a number of $d$ disjoint intervals, so that given an element of $U$ we can locate its current interval, and we can merge two adjacent intervals of the partition. 

\begin{remark}\label{weighted_tree}As a first consequence of the algorithms of \cite{Gabow83}, we recall a folklore result, stating that we can process in $\bigo(n)$ time a weighted tree with $\bigo(n)$ nodes and all weights in $\bigo(n)$ so that we can compute off-line, also in linear time, the answer to $\bigo(n)$ weighted level ancestor queries on the nodes of this tree (where such a query asked for the first node on a path from a given node to the root such that the sum of the weights of the edges between the given and the returned node is greater than a given weight). 
\end{remark}

In the solutions of both tasks of Problem \ref{LPFgG}, we use the following variant of the interval union-find problem.
\begin{remark}\label{rem_Union_Find} 
Let $U=[1,n+1)$. We are given integers $d,\ell>0$ and for each~$i\leq \ell$ a partition ${\mathcal P}_i$ of $U$ in $d$ intervals, a sequence of $d$ find-queries and a sequence of $d$ union-operations to be performed alternatively on ${\mathcal P}_i$ (that is, we perform one find query then the one union operation, and so on, in the order specified in the sequences of union and find-queries). We can maintain the structure and obtain the answers to all find-queries in $\bigo(n+\ell d)$ total time. 

To see why this holds, assume that we are given the sequence $Q_i=\langle x^i_1,\ldots,x^i_d\rangle$ of $d$ find-queries and the sequence $U_i=\langle y^i_1,\ldots,y^i_d\rangle$ of $d$ union-operations to be performed alternatively on this partition. As defined above, first we perform the first find query in which we search for the interval containing $x_1$, then the first union between the interval ending on $y_1$ and the one starting on $y_1$, then the second find query on the updated partition, in which we search the interval containing $x_2$, and so on. In the following we show how to obtain the answers to all find-queries in $\bigo(n+\ell d)$ time. 

As a first step, we sort in time $\bigo(n+\ell d)$ all the ends of the intervals in ${\mathcal P}_i$ and the elements of $Q_i$ for all $i$ at once, using radix-sort. Then we separate them (now sorted) according to the partition they correspond to: so, we have for each $i$ an ordered list of the ends of the intervals of ${\mathcal P}_i$ and the elements of $Q_i$. Now we basically have all the data needed to be able to run the the algorithms of~\cite{Gabow83} in $\bigo(d)$ time for each partition (e.g., the time used in our setting by the sub-routines defined and used by \cite{Gabow83} is exactly just the one the required to apply the results of that paper, for each of the partitions). In conclusion, the total time needed to process one partition and to perform all the find queries and union operations on it is $\bigo(d)$. This adds up to a total time of $\bigo(n+\ell d)$. 
\end{remark}

We further give another lemma related to the union-find data structure.
\begin{lemma}\label{stabbing}
Let $U=[1,n+1)$. We are given $k$ intervals $I_1,\ldots, I_k$ included in~$U$. Also, for each $j\leq k$ we are given a positive integer $g_j\leq n$, the weight of the interval $I_j$. We can compute in $\bigo(n+k)$ time the values $H[i]=\max\{g_j\mid i\in I_j\}$ (or, alternatively, $h[i]=\min\{g_j\mid i\in I_j\}$) for all $i\leq n$.  
\end{lemma}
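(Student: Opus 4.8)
The plan is to reduce the computation to a sequence of operations on an interval union-find data structure, processing the intervals in order of their weights. I describe the algorithm for $H[\cdot]$ (the maximum); the case of $h[\cdot]$ is entirely symmetric, obtained by reversing the order in which the intervals are processed. The intuition is the classical ``painting'' technique: scan the intervals from heaviest to lightest and, whenever an interval $I_j$ is considered, assign the weight $g_j$ to exactly those positions inside $I_j$ that no heavier interval has already claimed; a union-find structure lets us jump over already-claimed positions so that the total work is linear.

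Concretely, I would first sort the intervals $I_1,\ldots,I_k$ in non-increasing order of their weights $g_j$; since all weights lie in $\{1,\ldots,n\}$, this is done by counting sort in $\bigo(n+k)$ time. Then I set up an interval union-find structure over the universe $[1,n+2)$: the elements $1,\ldots,n+1$ are initially singletons, $n+1$ being a sentinel, and the only unions ever performed merge the set containing some $i$ with the set containing $i+1$. The key point is that the set of pairs that may ever be merged, namely $\{(i,i+1)\mid 1\le i\le n\}$, is fixed in advance, so by the algorithm of \cite{Gabow83} (cf.\ the discussion of interval union-find preceding Remark~\ref{weighted_tree}) the structure is built in $\bigo(n)$ time and a sequence of $m$ find/union operations costs $\bigo(m)$ time. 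I maintain the invariant that $\mathrm{find}(i)$ returns the smallest index $\ge i$ that has not yet received a value, and I initialise $H[i]:=0$ for all $i$ (the final value at positions covered by no interval). Now I process the intervals in the sorted order: for $I_j=[a_j,b_j]$ with weight $g_j$, I repeatedly take $i:=\mathrm{find}(a_j)$ and, while $i\le b_j$, I set $H[i]:=g_j$, perform the union of the set of $i$ with the set of $i+1$ (so that afterwards $\mathrm{find}(a_j)$ skips over $i$), and loop again. Correctness is the standard argument: when a position $i$ is assigned, the interval $I_j$ responsible has the largest weight among all intervals processed so far, and any interval of strictly larger weight containing $i$ would already have claimed $i$; hence $H[i]=g_j=\max\{g_{j'}\mid i\in I_{j'}\}$, and a position covered by at least one interval is eventually claimed by the first (heaviest) interval containing it that finds it still unassigned.

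For the running time, each of the $n$ positions is assigned at most once, so there are at most $n$ union operations in total; the number of find operations is $\bigo(n+k)$ — at most one ``wasted'' find per interval (the final iteration whose result already exceeds $b_j$) plus one find per assignment. Together with the $\bigo(n+k)$ counting sort and the $\bigo(n)$ set-up, the whole computation is $\bigo(n+k)$. The variant $h[\cdot]$ is obtained by sorting the intervals in non-decreasing order of weight instead, with an identical analysis.

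The only delicate point — and the step I expect to require the most care — is justifying that the union-find operations run in amortised constant time rather than the $\bigo(\alpha(n))$ of a generic structure. This is not automatic: it relies on the fact that we are in the special interval-union-find regime where only adjacent sets are ever merged and this merge pattern is known offline, which is exactly the hypothesis under which the result of \cite{Gabow83} applies. Everything else (the counting sort, the painting loop, the correctness invariant) is routine.
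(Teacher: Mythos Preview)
Your proposal is correct and takes essentially the same approach as the paper: both process the intervals in decreasing order of weight (using counting/radix sort) and use the interval union-find structure of \cite{Gabow83} to ``paint'' each position exactly once with the first (heaviest) interval that covers it. The only differences are cosmetic bookkeeping---the paper merges each newly assigned cell with its left neighbour and jumps via the right endpoint of the current union-find block, whereas you merge with the right neighbour and let $\mathrm{find}$ return the next unassigned position---but the underlying idea, the correctness argument, and the $\bigo(n+k)$ accounting are identical.
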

\begin{proof}
Let us assume that $I_j=[a_j,b_j)$ where $1\leq a_j\leq n$ and $1\leq b_j\leq n+1$, for all $1\leq j\leq k$. We first show how the array $H[\cdot]$ is computed.  

We sort the intervals $I_1, I_2,\ldots, I_k$ with respect to their starting positions $a_j$, for $1\leq j\leq k$. Then, we produce for each $g$ from $1$ to $n$ the list of the intervals $I_j$ that have $g_j=g$ (again, sorted by their starting positions). Using radix-sort we can achieve this in time $\bigo(n+k)$. Further, we set up a disjoint set union-find data structure for the universe $U=[1,n]$. Initially, the sets in our structure are the singletons $\{1\}.\ldots,\{n\}$; the only unions that we can a make while maintaining this structure are between the set containing $i$ and the set containing $i+1$, for all $1\leq i\leq n-1$. Therefore, we can think that our structure only contains intervals $[a,b)$; initially, we have in our union-find data structures, for all $i\leq n$, the sets $\{i\}=[i,i+1)$. Now, we process the intervals that have weight $g$, for each $g$ from $n$ to $1$ in decreasing order; that is, the intervals are considered in decreasing order of their weight. So, assume that we process the input intervals of weight $g$. Assume that at this moment, $U$ is partitioned in some intervals (some of them just singletons); initially, as mentioned above, we only have singletons $[i,i+1)$. Let $I_j=[a,b)$ be an input interval of weight $g$. Let now $\ell=a$ (in the following, $\ell$ will take different values between $a$ and $b$). We locate the interval $[c_1,c_2)$  (of the union-find structure we maintain) where $\ell $ is located; if this is a singleton and $H[\ell ]$ was not already set, then we set $H[\ell ]=g$. Further, unless $\ell $ is $a$, we merge the interval containing $\ell $ to that containing $\ell-1$ and set $\ell =c_2$; if $\ell =a$ we just set $\ell =c_2$.  In both cases, we repeat the procedure while $\ell <b$. We process in this manner all the intervals $I_j$ with $g_j=g$, in the order given by their starting positions, then continue and process the intervals with weight $g-1$, and so~on.

The algorithm computes $H[\cdot]$ correctly. Indeed, we set $H[\ell ]=g$ when we process an interval $I_j$ of weight $g$ that contains $\ell $, and no other interval of greater weight contained $\ell $. To see the complexity of the algorithm we need to count the number of union and find operations we make. First, we count the number of union operations. For this, it is enough to note that for each element $\ell $ we might make at most $2$ unions: one that unites the singleton interval $[\ell ,\ell +1)$ to the interval of $\ell -1$, which has the form $[a,\ell )$ for some $a$, and another one that unites the interval of $\ell +1$ to the one of $\ell $. So, this means that we make at most $\bigo(n)$ union operations. For the find operations, we just have to note that when an interval $I_j$ is processed the total number of finds is $\bigo(|I_j \setminus U|+2)$, where $U$ is the union of the intervals that were processed before $I_j$. This shows that the total number of find operations is $\bigo(k+|\cup_{j=1,k}I_j|)=\bigo(n+k)$. 

By the results of \cite{Gabow83}, as we know the structure of the unions we can make forms a tree, our algorithm runs in $\bigo(n+k)$ time. 

The computation of $h[\cdot]$ is similar. The only difference is that we consider the intervals in increasing order of their weight.
\end{proof}

We conclude this section with a lemma that will be used in the solutions of Problem \ref{LLAP}. It shows how to compute in linear time the length of the longest square centred at each position of a given word. 
\begin{lemma}\label{centred_squares}
Given a word $w$ of length $n$ we can compute in $\bigo(n)$ time the values $SC[i]=\max\{|u|\mid u$ is both a suffix of $w[1..i-1]$ and a prefix of $w[i..n]\}$.
\end{lemma}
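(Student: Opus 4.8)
The plan is to extract from the runs of $w$ a family of $\bigo(n)$ weighted intervals and then read off the whole array $SC[\cdot]$ by a single call to Lemma~\ref{stabbing}. First I would compute, in $\bigo(n)$ time, the list of all runs of $w$ together with their periods, using the algorithm of~\cite{KK99} (where a run is a maximal factor $w[a..b]$ whose smallest period $p$ satisfies $2p\le b-a+1$, which is the notion for which~\cite{KK99} proves that the number of runs is $\bigo(n)$ and that $\sum_r e(r)=\bigo(n)$, writing $e(r)=\frac{b-a+1}{p}$ for the exponent of such a run).

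Next I would establish the following characterisation of $SC[i]$ in terms of runs. A length $\ell\geq 1$ is a legal arm value at centre $i$ exactly when $w[i-\ell..i-1]=w[i..i+\ell-1]$, i.e.\ when the factor $w[i-\ell..i+\ell-1]$ has period $\ell$. Given such an $\ell$, extend $w[i-\ell..i+\ell-1]$ maximally to the left and to the right while keeping its smallest period $p$: this produces a run $w[a..b]$ of period $p$ with $a\leq i-\ell$ and $b\geq i+\ell-1$, and by the Fine and Wilf periodicity lemma $p$ divides $\ell$; hence $\ell$ is a positive multiple of $p$ and $\ell\leq\min(i-a,\,b-i+1)$. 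Conversely, if $w[a..b]$ is any run of period $p$ and $\ell$ is a positive multiple of $p$ with $\ell\leq\min(i-a,\,b-i+1)$, then $a\leq i-\ell$ and $i+\ell-1\leq b$, so $w[i-\ell..i-1]$ and $w[i..i+\ell-1]$ are two occurrences inside $w[a..b]$ at distance $\ell$ and hence equal; thus $\ell$ is a legal arm value at $i$, and moreover $1\leq i-\ell$ and $i+\ell-1\leq n$ hold automatically. Combining the two implications,
\[
SC[i]=\max\bigl\{\ell\geq 1\ \big|\ \text{some run }w[a..b]\text{ of period }p\text{ has }p\mid\ell\text{ and }\ell\leq\min(i-a,\,b-i+1)\bigr\},
\]
with the usual convention that the empty maximum is $0$.

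The reduction is then immediate. For each run $w[a..b]$ of period $p$ and each multiple $\ell\in\{p,2p,\ldots,p\lfloor\frac{b-a+1}{2p}\rfloor\}$ of $p$, I create one interval $I=[a+\ell,\,b-\ell+1]$ (non-empty since $\ell\leq\frac{b-a+1}{2}$, and contained in $[1,n]$) carrying the weight $g=\ell$. For a fixed run, the largest weight of one of its intervals covering a position $i$ is the largest multiple of $p$ not exceeding $\min(i-a,\,b-i+1)$; taking the maximum over all runs and invoking the displayed characterisation shows that the maximal weight of an interval covering $i$ is exactly $SC[i]$. The number of intervals produced is $\sum_r\lfloor e(r)/2\rfloor\leq\tfrac12\sum_r e(r)=\bigo(n)$, so I may run Lemma~\ref{stabbing} on these $k=\bigo(n)$ weighted intervals and obtain $SC[i]=\max\{g_j\mid i\in I_j\}$ for all $i$ (with $SC[i]=0$ when no interval covers $i$) in time $\bigo(n+k)=\bigo(n)$.

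I expect the only genuinely delicate step to be the combinatorial characterisation above: checking that the maximal extension of a square window to a factor with the same smallest period is indeed a run, and that a run of exponent $e$ spawns only $\lfloor e/2\rfloor$ intervals, so that the sum-of-exponents bound of~\cite{KK99} keeps the total number of intervals linear. Computing the runs, forming the intervals, and the union--find pass internal to Lemma~\ref{stabbing} are all routine and clearly run in linear time.
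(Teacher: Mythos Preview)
Your proposal is correct and follows essentially the same approach as the paper: reduce to runs, observe that every square centred at $i$ sits inside some run and has arm length a multiple of that run's period, generate one weighted interval per (run, multiple) pair, bound the total number of intervals by the sum of exponents, and invoke Lemma~\ref{stabbing}. Your write-up is in fact slightly more careful than the paper's in spelling out, via Fine and Wilf, why the arm length must be a multiple of the run's period.
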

\begin{proof}
Note that each square $u^2$ occurring in a word $w$ is part of a maximal run $w[i'..j']=p^\alpha p'$, where $p$ is primitive and $p'$ is a prefix of $p$, and $u=q^\ell$, where $q$ is a cyclic shift of $p$ (i.e., $q$ is a factor of $p^2$ of length $|p|$) and $\ell \leq \frac{\alpha}{2}$. 

So, if we consider a maximal run $r=p^\alpha p'$ and some $\ell\leq \frac{\alpha}{2}$, we can easily detect the possible centre positions of the squares having the form $(q^\ell)^2$ contained in this run, with $q$ a cyclic shift of $p$. These positions occur consecutively in the word $w$: the first is the $(|p|\ell+1)^{th}$ position of the run, and the last is the one where the suffix of length $|p|{\ell}$ of the run starts. So they form an interval $I_{r,\ell}$ and we associate to this interval the weight $g_{r,\ell}=|p|\ell$ (i.e., the length of an arm of the square). In this way, we define $\bigo(n)$ intervals (as their number is upper bounded by the sum of the exponents of the maximal runs of $w$), all contained in $[1,n+1)$, and each interval having a weight between $1$ and $n$. By Lemma \ref{stabbing}, we can process these intervals so that we can determine for each $i\in [1,n+1)$ the interval of maximum weight containing $i$, or, in other words, the maximum length $SC[i]$ of a square centred on $i$. This procedure runs in $\bigo(n)$ time. 
 \end{proof}

It is worth noting that using the same strategy as in the proof of Lemma~\ref{centred_squares}, one can detect the minimum length of a square centred at each position of a word (also called the local period at that position) in linear time. Indeed, in this case we note that the square of minimum length centred at some position of the input word must be primitively rooted. Then, we repeat the same strategy as in the proof of Lemma \ref{centred_squares}, but only construct the intervals $I_{r,\ell}$ for the case when $\ell=1$ (i.e., the intervals $I_{r,1}$) for all runs $r$, with the corresponding weights. Then we use Lemma \ref{stabbing} for these intervals to determine the interval of minimum weight that contains each position of the word.
This leads to an alternative solution to the problem of computing the local periods of a word, solved by~\cite{MFCS_Lecroq}. Compared to the solution of \cite{MFCS_Lecroq}, ours uses a relatively involved data structures machinery (disjoint sets union-find structures, as well as an algorithm finding all runs in a word), but is much shorter and seems conceptually simpler as it does not require a very long and detailed combinatorial analysis of the properties of the input word. 

The same strategy allows solving the problem of computing in linear time, for integer alphabets, the length of the shortest (or longest) square ending (or starting) at each position of a given word; this improves the results by  \cite{kosarajuCPM,XuCPM}, where such a result was only shown for constant size alphabets. Let us briefly discuss the case of finding the shortest square ending at each position of the word; such squares are, clearly,  primitively rooted. So, just like in the case of searching squares centred at some position, we can determine for each run an interval of positions where a square having exactly the same period as the run may end. The period of each run becomes the weight of the interval associated to that run. Then we use again Lemma \ref{stabbing}  for these intervals to determine the interval of minimum weight that contains each position of the word. Thus, we determine for each position the shortest square ending on it. In the case when we want the longest square ending at each position, just like in the case of centred squares, we define for each run several intervals. Fortunately, this does not increase asymptotically the complexity.

In the following we apply the machinery developed in this section to identify the longest gapped repeats and palindromes occurring in a word.

\section{Lower and upper bounded gap}
In this section we present the solutions of Problems \ref{LPFgG}(a) and \ref{LPFgG}(b). With these problems solved, we can immediately retrieve the longest gapped repeat and palindrome contained in the input word of these problems, such that the length gap is between a given lower bound and a given upper bound.

\begin{theorem}\label{thLPrFgG}
Problem \ref{LPFgG}(a) can be solved in linear time.
\end{theorem}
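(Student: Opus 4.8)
The plan is to recast $\LPF_{g,G}$ as a family of longest‑common‑prefix queries on the word $u=w0w^R$ and then to answer all of them together in linear total time.

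\emph{Reduction.} Fix a position $i$ and a gap length $d$ with $g\le d<G$, and set $j=i-d-1$. I would first observe that there is a factor $u^Rvu$ of $w$ with $u$ a prefix of $w[i..n]$ and $|v|=d$ if and only if $w[i..i+|u|-1]$ is a prefix of $w[1..j]^R$: the requirement that $u^Rv$ be a suffix of $w[1..i-1]$ with $|v|=d$ forces $v=w[i-d..i-1]$ and $u^R=w[j-|u|+1..j]$, and $u^R=(u)^R$ is exactly the stated prefix condition. Moreover the side conditions $i-d-|u|\ge 1$ and $i+|u|-1\le n$ are then automatic, since the length of a common prefix of two words is at most the length of either word (so $|u|\le j$ and $|u|\le n-i+1$). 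Hence, writing $c(i,j):=\LCP(w[i..n],w[1..j]^R)$ — which is an $\LCP$-query on $u=w0w^R$ (it equals $\LCP_u(i,2n+2-j)$), computable in constant time after $\bigo(n)$ preprocessing — one gets
\[
\LPF_{g,G}[i]=\max\{\,c(i,j)\;:\;\max(1,i-G)\le j\le i-g-1\,\},
\]
with the convention that the maximum over an empty set is $0$. The windows $[\max(1,i-G),i-g-1]$ slide by exactly one as $i$ grows by one.

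\emph{Windows via rank-neighbours.} I would build in $\bigo(n)$ time the suffix array of $u$ and its $\LCP$ data structure, together with the sorted list $\mathcal L$ of the suffixes $w[i..n]$ and the reversed prefixes $w[1..j]^R$, and the arrays $Rank$ and $Rank_R$. By the range-minimum description of $\LCP$ on a suffix array, for fixed $i$ the value $c(i,j)=\LCP_u(i,2n+2-j)$ is non-increasing as the $\mathcal L$-rank of $w[1..j]^R$ moves away from $Rank[i]$ in either direction; so $\max_j c(i,j)$ over the current window is attained either at the reversed prefix whose $\mathcal L$-rank is the predecessor of $Rank[i]$ or at the one whose $\mathcal L$-rank is its successor, among the reversed prefixes presently in the window. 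Since the window slides by one per step, reversed prefixes enter and leave it one at a time (FIFO order). Thus the problem reduces to maintaining those reversed prefixes under these insertions and deletions and, for each $i$, retrieving the two $\mathcal L$-rank neighbours of $Rank[i]$ among them and performing two constant-time $\LCP$-queries.

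\emph{Linear time, and the main obstacle.} All updates and queries are fixed in advance, so they are processed off-line. Deletions are easy: I would keep an interval union-find structure over the positions of $\mathcal L$ (\cite{Gabow83}) recording which reversed prefixes are still active, a deletion being a merge of a reversed prefix with its neighbour and a rank-neighbour query being a find-query. The delicate point — and the step I expect to be the main obstacle — is to handle the insertions created by the sliding window without losing linearity. The plan is to organise the scan off-line so that on each stretch the active set only shrinks: treat the range $i\le G$ on its own (there the left endpoint of the window is pinned to $1$, and scanning with $i$ decreasing performs deletions only), and for the rest cut the positions into blocks of $G-g$ consecutive indices, on each of which only $\bigo(G-g)$ reversed prefixes are ever relevant, so that, using the variant of Remark~\ref{rem_Union_Find}, the total number of union and find operations stays $\bigo(n)$. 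Combined with the $\bigo(n)$-time construction of the suffix array, the $\LCP$ structure, the list $\mathcal L$ and the union-find structures, this yields an $\bigo(n)$-time algorithm. The reduction and the rank-neighbour reformulation are routine; it is the bookkeeping of how the window of admissible reversed prefixes evolves, and keeping the number of union/find operations linear, that needs the care.
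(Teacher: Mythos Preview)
Your approach is essentially the paper's: reduce $\LPF_{g,G}[i]$ to finding the rank-neighbours of $Rank[i]$ among the reversed prefixes $w[1..j]^R$ with $j$ in the sliding window $[i-G,i-g-1]$, cut the positions $i$ into blocks of length $\delta=G-g$, and use the interval union-find structure of Remark~\ref{rem_Union_Find} with deletions only. The one step you leave implicit, and which the paper spells out, is \emph{how} to arrange each block so that only deletions occur: split the window into a left piece $J'_i$ (from $i-G$ to the end of the range containing $i-G$) and a right piece $J''_i$ (from the start of the next range to $i-g-1$); as $i$ increases within a block $J'_i$ only loses elements, and as $i$ decreases $J''_i$ only loses elements, so a forward pass computes the best candidate $f_i\in J'_i$ and a backward pass the best candidate $s_i\in J''_i$, each using a single union-find partition per block-boundary crossing. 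Your sentence ``organise the scan off-line so that on each stretch the active set only shrinks'' is exactly this idea, but the two-pass decomposition is the concrete device that makes it work; merely observing that $\bigo(G-g)$ reversed prefixes are relevant per block does not by itself eliminate insertions.
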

\begin{proof}
Let $\delta=G-g$; for simplicity, assume that $n$ is divisible by $\delta$. Further, for $1\leq i\leq n$, 
let $u$ be the longest factor which is a prefix of $w[i..n]$ such that $u^R$ is a suffix of some $w[1..k]$ with $g< i-k \leq G$; then, $B[i]$ is the rightmost position $j$ such that $g< i-j\leq G$ and $u^R$ is a suffix of $w[1..j]$. Knowing $B[i]$ means knowing $\LPF_{g,G}[i]$: we just have to return $\LPF(w[i..n],w[1..j]^R)$. 

We split the set $\{1,\ldots,n\}$ into $\frac{n}{\delta}$ ranges of consecutive numbers: $I_0=\{1,2,\ldots,\delta\}$, $I_1=\{\delta+1, \delta+2,\ldots, 2\delta\}$, and so on. Note that for all $i$ in some range $I_k=\{k\delta+1,k\delta+2,\ldots,(k+1)\delta\}$ from those defined above, there are at most three consecutive ranges where some $j$ such that $g< i-j\leq G$ may be found: the range containing $k\delta-G+1$, the range containing $(k+1)\delta-g-1$, and the one in between these two (i.e., the range containing $k\delta+1-g$). Moreover, for some fixed $i$, we know that we have to look for $B[i]$ in the interval $\{i-G,\ldots,i-g-1\}$, of length $\delta$; when we search $B[i+1]$ we look at the interval $\{i+1-G,\ldots,i-g\}$. So, basically, when trying to find $B[i]$ for all $i\in I_k$, we move a window of length $\delta$ over the three ranges named above, and try to find for every content of the window (so for every $i$) its one element that fits the description of $B[i]$. The difference between the content of the window in two consecutive steps is not major: we just removed an element and introduced a new one. Also, note that at each moment the window intersects exactly two of the aforementioned three ranges. We try to use these remarks, and maintain the contents of the window such that the update can be done efficiently, and the values of $B[i]$ (and, implicitly, $\LPF_{g,G}[i]$) can be, for each $i\in I$, retrieved very fast. Intuitively, grouping the $i$'s on ranges of consecutive numbers allows us to find the possible places of the corresponding $B[i]$'s for all $i\in I_k$ in $\bigo(\delta)$ time.

\begin{figure}[H]
\begin{center}
\includegraphics[width=\linewidth]{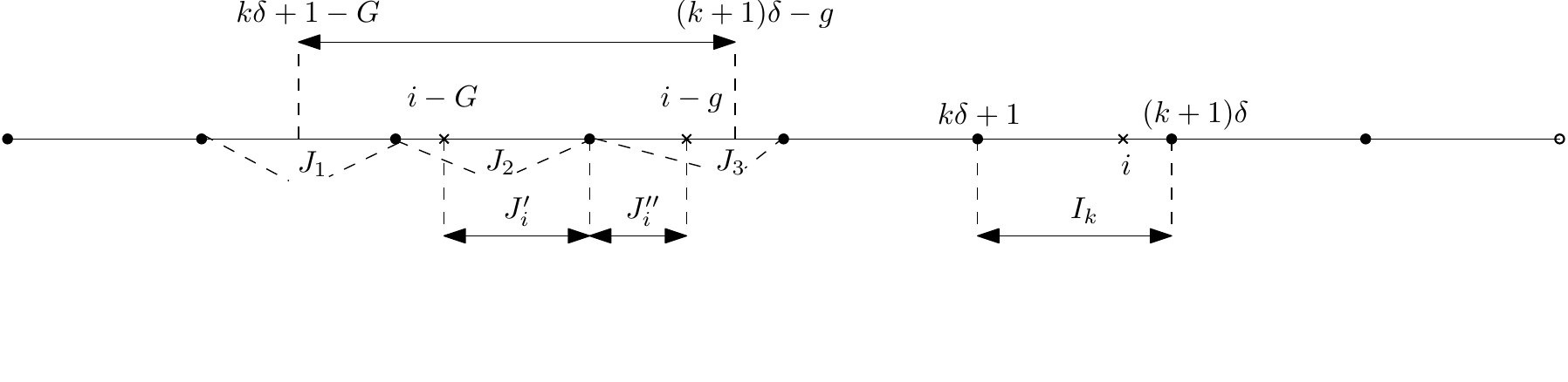}
\end{center}
\vspace{-1cm}
\caption{Proof of Theorem \ref{thLPrFgG}: Construction of the ranges $J_1,J_2,J_3$ for some range $I_k$, and of the sliding window $J'_i\cup J''_i$ for some $i\in I_k$}
\end{figure}

We now go into more details. As we described in the preliminaries, let ${\mathcal L}$ be the lexicographically ordered list of the the suffixes $w[i..n]$ of $w$ and of the mirror images $w[1..i]^R$ of the prefixes of $w$ (which correspond to the suffixes of $w^R$). For the list ${\mathcal L}$ we compute the arrays $Rank[\cdot]$ and $Rank_R[\cdot]$. 
We use the suffix array for $w0w^R$ to produce for each of the ranges $I_k$ computed above the set of suffixes of $w$ that start in the respective range (sorted lexicographically, in the order they appear in the suffix array) and the set of prefixes of $w$ ending in $I$, ordered lexicographically with respect to their mirror image. 

We consider now one of the ranges of indexes $I_k=\{k\delta+1,k\delta+2,\ldots, (k+1)\delta\}$ from the above, and show how we can compute the values $B[i]$, for all $i\in I_k$. For some $i\in I_k$ we look for the maximum $\ell$ such that there exists a position $j$ with $w[j-\ell+1..j]^R=w[i..i+\ell-1]$, and $i-G\leq j\leq i-g$. As already explained, for the $i$'s of $I_k$ there are three consecutive ranges where the $j$'s corresponding to the $i$'s of $I_k$ may be found; we denote them $J_1,J_2,J_3$. 

Now, for an $i\in I_k$ we have that $B[i]$ (for which $g<i-B[i]\leq G$) belongs to $J'_i\cup J''_i$, where $J'_i$ is an interval starting with $i-G$ and extending until the end of the range that contains $i-G$ (which is one of the ranges $J_1,J_2,J_3$) and $J''_i$ is an interval ending with $i-g-1$, which starts at the beginning of the range that contains $i-g-1$ (which is the range that comes next after the one containing $i-G$). Referencing back to the intuitive explanation we gave at the beginning of this proof, $J'_i\cup J''_i$ is the window we use to locate the value of $B[i]$ for an $i\in I_k$.  

To compute $B[i]$ for some $i$, take $f_i\in J'_i$ such that $\LCP(w[1..f_i]^R,w[i..n])\geq \LCP(w[1..j']^R,w[i..n])$ for all $j'\in J'_i$. Similarly, take $s_i\in J''_i$ such that for all $j'\in J''_i$ we have $\LCP(w[1..s_i]^R,w[i..n])\geq \LCP(w[1..j']^R,w[i..n])$. Once $s_i$ and $f_i$ computed, we set $B[i]=s_i$ if $\LCP(w[1..s_i]^R,w[i..n])\geq \LCP(w[1..f_i]^R,w[i..n])$; we set $B[i]=f_i$, otherwise. So, in order to compute, for some $i$, the value $B[i]$, that determines $\LPF_{g,G}[i] $,  we first compute $f_i$ and $s_i$.

We compute for all the indices $i\in I_k$, considered in increasing order, the values $f_i$. 
We consider for each $i\in I_k$ the interval $J'_i$ and note that $J'_i\setminus J'_{i+1}=\{i-G\}$, and, if $J'_i$ is not a singleton (i.e., $J'_i\neq \{i-G\}$) then $J'_{i+1}\subset J'_i$. 
If $J'_i$ is a singleton, than $J'_{i+1}$ is, in fact, one of the precomputed range $I_{p}$, namely the one which starts on position $i+1-G$ (so, $p=\frac{i-G}{\delta}$). 
 
These relations suggest the following approach. We start with $i=k\delta+1$ and consider the set of words $w[1..j]^R$ with $j\in J'_i$; this set can be easily obtained in $\bigo(\delta)$ time by finding first the range $J_1$ in which $i-G$ is contained (which takes $\bigo(1)$ time, as $J_1=I_p$ for $p=\left\lfloor\frac{i-G}{\delta}\right\rfloor$), and then selecting from $J_1$ of the set of prefixes $w[1..d]$ of $w$, ending in $J_1$ with $d\geq i-G$ (ordered lexicographically with respect to their mirror image). The ranks corresponding to these prefixes in the ordered list ${\mathcal L}$ (i.e., the set of numbers $Rank_R[d]$) define a partition of the universe $U=[0,2n+1)$ in at most $\delta+2$ disjoint intervals. So, we can maintain an interval union-find data structures like in Remark \ref{rem_Union_Find}, where the ranks are seen as limits of the intervals in this structure. We assume that the intervals in our data structure are of the form $[a,b)$, with $a$ and $b$ equal to some $Rank_R[d_a]$ and $Rank_R[d_b]$, respectively. The first interval in the structure is of the form $[0,a)$, while the last is of the form $[b,2n+1)$. We now find the interval to which $Rank[i]$ belongs; say that this is $[a,b)$. This means that the words $w[1..d_a]^R$ and $w[1..d_b]^R$ are the two words of $\{w[1..d]^R\mid d\in J'_i\}$ which are closest to $w[i..n]$ lexicographically ($w[1..d_a]^R$ is lexicographically smaller, $w[1..d_b]$ is greater). Clearly, $f_i=d_a$ if $\LCP(w[1..d_a]^R,w[i..n])\geq \LCP(w[1..d_b]^R,w[i..n])$ and $f_i=d_b$, otherwise (in case of a tie, we take $f_i$ to be the greater of $d_a$ and $d_b$). So, to compute $f_i$ we query once the union-find data structure to find $a$ and $b$, and the corresponding $d_a$ and $d_b$, and then run two more $\LCP$ queries. 

When moving on to compute $f_{i+1}$, we just have to update our structure and then run the same procedure. Now, $i-G$ is no longer a valid candidate for $f_{i+1}$, and it is removed from $J'_i$. So we just delete it from the interval union-find data structure, and merge the interval ending right before $Rank_R[i-G]$ and the one starting with $Rank_R[i-G]$. This means one union operation in our interval union-find structure. Then we proceed to compute $f_{i+1}$ as in the case of $f_i$.

The process continues until $J'_i$ is a singleton, so $f_i$ equals its single element.

Now, $i-G$ is the last element of one of the ranges $J_1,J_2,$ or $J_3$; assume this range is $I_p$. So far, we performed alternatively at most $\delta$ find queries and $\delta$ union operations on the union-find structure. Now,  instead of updating this structure, we consider a new interval partition of $[0,2n+1)$ induced by the ranks of the~$\delta$ prefixes ending in $I_{p+1}$. When computing the values $f_i$ for $i\in I_k$ we need to consider a new partition of $U$ at most once: at the border between $J_1$ and $J_2$.

It is not hard to see from the comments made in the above presentation that our algorithm computes $f_i\in I_k$ correctly. In summary, in order to compute $f_i$, we considered the elements $i\in I_k$ from left to right, keeping track of the left part $J'_i$ of the window $J'_i\cup J''_i$ while it moved from left to right through the ranges $J_1,J_2$ and $J_3$. Now, to compute the values $s_i$ for $i\in I_k$, we proceed in a symmetric manner: we consider the values $i$ in decreasing order, moving the window from right to left, and keep track of its right part $J''_i$. 

As already explained, by knowing the values $f_i$ and $s_i$ for all $i\in I_k$ and for all $k$, we immediately get $B[i]$ (and, consequently $\LPF_{g,G}[i]$) for all $i$. 

We now evaluate the running time of our approach. We can compute, in $\bigo(n)$ time, from the very beginning of our algorithm the partitions of $[1,2n-1)$ we need to process (basically, for each $I_k$ we find $J_1$, $J_2$ and $J_3$ in constant time, and we get the three initial partitions we have to process in $\bigo(\delta)$ time), and we also know the union-operations and find-queries that we will need to perform for each such partition (as we know the order in which the prefixes are taken out of the window, so the order in which the intervals are merged). In total we have $\bigo(n/\delta)$ partitions, each having initially $\delta+2$ intervals, and on each we perform $\delta$ find-queries and $\delta$-union operations. So, by Remark \ref{rem_Union_Find}, we can preprocess this data (once, at the beginning of the algorithm) in $\bigo(n)$ time, to be sure that the time needed to obtain the correct answers to all the find queries is $\bigo(n)$. So, the total time needed to compute the values $f_i$ for all $i\in I_k$ and for all $k$ is $\bigo(n)$. Similarly, the total time needed to compute the values $s_i$ for all $i$ is $\bigo(n)$. Then, for each $i$ we get $B[i]$ and $\LPF_{g,G}[i]$ in $\bigo(1)$ time.

Therefore, Problem \ref{LPFgG}(a) can be solved in linear time.
\end{proof}

Before giving the formal solution for Problem \ref{LPFgG}(b), we give a short intuitive description of how this algorithm is different from the previous one. In this case, when trying to construct the repeat $uvu$ with the longest arm occurring at a certain position, we need to somehow restrict the range where its left arm may occur. To this end, we restrict the length of the arm, and, consequently, search for $u$ with $2^k\leq |u|\leq 2^{k+1}$, for each $k\leq \log n$. Such a factor always start with $w[i+1..i+2^k]$ and may only occur in the factor $w[i-G-2^{k+1}..i-g]$. If $2^k\geq G-g$, using the dictionary of basic factors data structures (from~\cite{DBF}) and the results by \cite{KociumakaSPIRE2012}, we get a constant size representation of the occurrences of $w[i+1..i+2^k]$ in that range (which are a constant number of times longer than the searched factor), and then we detect which one of these occurrences produces the repeat $uvu$ with the longest arm. If $2^k<G-g$, we use roughly the same strategy to find the repeat $uvu$ with the longest arm and $u$ starting in a range of radius $2^{k+1}$ centred around $i-G$ or in a range of length $2^{k+1}$ ending on $i-g$ (again, these ranges are just a constant number of times longer than the searched factor). To detect a repeat starting between $i-G+2^{k+1}$ and $i-g-2^{k+1}$ we use the strategy from the solution of Problem \ref{LPFgG}(a); in that case, we just have to return the longest common prefix of $w[i..n]$ and the words $w[j..n]$ with $i-G+2^{k+1}\leq j\leq i-g-2^{k+1}$. Overall, this approach can be implemented to work in $\bigo(n \log n)$ time.
\begin{theorem}\label{sol_LPFgG}
Problem \ref{LPFgG}(b) can be solved in $\bigo(n\log n)$ time.
\end{theorem}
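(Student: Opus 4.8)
The plan is to handle, for each position $i$, the right arm $u$ bucketed by length: for every $k$ with $0\le k\le \log n$ we compute the best candidate with $2^k\le |u|<2^{k+1}$, and then take the maximum over all $k$. Fix such a $k$. Since $|u|\ge 2^k$, the arm $u$ always begins with the basic factor $y=w[i+1..i+2^k]$, and since $|uv|\le G$ and $|u|<2^{k+1}$, any left occurrence of $u$ (hence of $y$) must lie inside the window $W_i=w[i-G-2^{k+1}+1..\,i-g]$, whose length is $G-g+2^{k+1}$. The first step is therefore: locate the occurrences of $y$ inside $W_i$ in a compact form, and for each such occurrence starting at position $p$ check whether it extends to a full arm, i.e. compute $\LCP(p,i)$ and intersect with the length constraints $2^k\le |u|<2^{k+1}$ and $g\le i-(p+|u|)<G$, recording the longest valid $u$.

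I would split into two regimes according to the relative size of $2^k$ and $\delta:=G-g$. \emph{Regime $2^k\ge \delta$:} here $|W_i|=O(2^k)$, so $W_i$ is only a constant number of basic-factor lengths long, and by the dictionary of basic factors together with the run-structure of occurrences of a basic factor inside a window of constant relative length (Remark~\ref{rem_DBF}, using \cite{KociumakaSPIRE2012}) we get an $O(1)$-size representation of the occurrences of $y$ in $W_i$: a constant number of isolated occurrences plus a constant number of runs. For the isolated occurrences we just do one $\LCP$ query each; for the occurrences forming a run $r$ of period $q$, the candidates give arms of the form (a prefix of) a power of a cyclic shift of the run's period, and among them the one producing the longest valid arm at position $i$ can be read off with $O(1)$ extra $\LCP$ queries (the longest common extension of $w[i..n]$ with the periodic region pins down how far the match goes, and then we pick the rightmost admissible starting position inside the run). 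This costs $O(1)$ per $(i,k)$, hence $O(n\log n)$ overall. \emph{Regime $2^k<\delta$:} then $W_i$ is long, so I would cut it into three sub-windows as in the proof of Theorem~\ref{thLPrFgG}: a left band of radius $2^{k+1}$ around $i-G$, a right band of length $2^{k+1}$ ending at $i-g$, and the middle part, the starting positions in $[\,i-G+2^{k+1},\,i-g-2^{k+1}\,]$. For the two bands the window again has length $O(2^k)$, so the same DBF/run argument gives the best candidate in $O(1)$ time. For the middle part, any arm $u$ of length in $[2^k,2^{k+1})$ that starts at some $j$ in that middle range lies entirely inside the range, so its contribution is simply $\min\{\LCP(j,i),\,2^{k+1}-1\}$ maximized over admissible $j$ — exactly the "longest common prefix of $w[i..n]$ with a suffix in a contiguous range of starting positions" subproblem, which I handle with the sliding-window interval union-find machinery of Remark~\ref{rem_Union_Find} and Theorem~\ref{thLPrFgG}, grouping the $i$'s into blocks of $\delta$ consecutive positions so that the set of relevant middle ranges changes $O(1)$ times per block. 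For fixed $k$ this is $O(n)$ total (plus the $O(n)$ preprocessing of the union-find data), so summing over $k$ gives $O(n\log n)$.

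Finally I assemble: for each $i$, $\LPdF_{g,G}[i]$ is the maximum over $k$ of the best valid arm found in the two regimes above, and I must double-check the boundary conditions so that each admissible arm is counted in exactly (or at least) one bucket $k$ and that the gap constraint $g\le|v|<G$ is verified exactly rather than merely the weaker "fits in $W_i$" condition — this bookkeeping is routine. The main obstacle, and the place where care is genuinely needed, is the periodic case in both regimes: when $y$ has many occurrences inside the relevant window they are organized into runs, and one must argue combinatorially that, although there can be $\Theta(2^k)$ such occurrences, the longest arm among the squares/repeats they induce at position $i$ is determined by $O(1)$ $\LCP$ queries against the run — essentially because the match length of $w[i..n]$ against the periodic stretch is fixed, so the best starting position is forced to be the rightmost admissible element of the arithmetic progression. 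Getting this extremal selection right (and compatible with the interval $[2^k,2^{k+1})$ on $|u|$ and with $g\le|v|<G$) is the delicate step; everything else is an orchestration of the data structures already set up in the preliminaries and in Theorem~\ref{thLPrFgG}.
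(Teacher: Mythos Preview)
Your proposal is correct and matches the paper's proof essentially step for step: the split into the regime $2^k\ge\delta$ (the paper's case $k\ge k_0=\lfloor\log\delta\rfloor$), handled via the DBF/run representation of Remark~\ref{rem_DBF}, and the regime $2^k<\delta$ with its three-band decomposition (two short border bands treated by DBF, the middle by the sliding interval union-find of Theorem~\ref{thLPrFgG}) is exactly the paper's argument. The only slip is cosmetic --- $y$ should be $w[i..i+2^k-1]$ rather than $w[i+1..i+2^k]$, since $u$ is a prefix of $w[i..n]$ --- and your flagging of the periodic-run case as the place requiring care is precisely where the paper also spends its effort.
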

\begin{proof}
For $1\leq i\leq n$, let $B[i]$ denote the value $j$ such that $w[j..j+\LPdF_{g,G}[i]-1]=w[i..i+\LPdF_{g,G}[i]-1]$ and $g<i-(j+\LPdF_{g,G}[i]-1)\leq G$. 
In other words, to define $B[i]$, let $u$ be the longest factor which is both a prefix of $w[i..n]$ and a suffix of some $w[1..k]$ with $g< i-k \leq G$; then, $B[i]$ is the rightmost position $j$ such that $g< i-(j+|u|-1)\leq G$ and $u$ is a suffix of $w[1..j+|u|-1]$. Clearly, knowing $B[i]$ means knowing $\LPF_{g,G}[i]$. 

Intuitively, computing $B[i]$ is harder in this case than it was in the case of the solution of Problem \ref{LPFgG}(a). Now we have no real information where $B[i]$ might be, we just know the range of $w$ where the longest factor that occurs both at $B[i]$ and at $i$ ends. So, to determine $B[i]$ we try different variants for the length of this factor, and see which one leads to the right answer. 

Let $\delta=G-g$ and $k_0=\lfloor \log \delta\rfloor$; assume w.l.o.g. that $n$ is divisible by $\delta$. 

As already noted, the solution used in the case of gapped palindromes in the proof of Theorem \ref{thLPrFgG} does not work anymore in this case: we do not know for a given $i$ a range in which $B[i]$ is found. So, we try to restrict the places where $B[i]$ can occur. We split our discussion in two cases.

In the first case, we try to find, for each $i$ and each $k\geq k_0$, the factor $uvu$ with the longest $u$, such that the second $u$ occurs at position $i$ in $w$, $ 2^k \leq |u|< 2^{k+1}$ and $g<|v|\leq G$. Clearly, $u$ should start with the basic factor $w[i..i+2^k-1]$, and the left arm $u$ in the factor $uvu$ we look for should have its prefix of length $2^k$ (so, a factor equal to the basic factor $w[i..i+2^k-1]$) contained in the factor $w[i-G-2^{k+1}..i-g-1]$, whose length is \centerline{$G-g+2^{k+1}\leq 2^{k_0+1}+2^{k+1}-1\leq 2^{k+2}-1\leq 4\cdot 2^k.$} So, by Remark \ref{rem_DBF}, using the dictionary of basic factors we can 
retrieve a compact representation of the occurrences of $w[i..i+2^k-1]$ in $w[i-G-2^{k+1}..i-g-1]$: these consist in a constant number of isolated occurrences and a constant number of runs. For each of the isolated occurrences $w[j'..j'+2^k-1]$ we compute $\LCP(j',i)$ and this gives us a possible candidate for $u$; we measure the gap between the two occurrences of $u$ (the one at $j'$ and the one at $i$) and if it is between smaller than $G$, we store this $u$ as a possible solution to our problem (we might have to cut a suffix of $u$ so that the gap is also longer than $g$). Further, each of the runs has the form $p^\alpha p'$, where $|p|$ is the period of $w[i..i+2^k-1]$ and $p'$ is a prefix of $p$ (which varies from run to run); such a run cannot be extended to the right: either the period breaks, or it would go out of the factor $w[i-G-2^{k+1}..i-g-1]$. Using a $\LCP$ query we find the longest factor $p^\beta p''$ with period $|p|$, occurring at position $i$. For a run $p^\alpha p'$, the longest candidate for the first $u$ of the factor $uvu$ we look for starts with $p^\gamma p'''$ where $\gamma=\min\{\alpha,\beta\}$ and $p'''$ is the shortest of $p'$ and $p''$; if $p'=p''$, then this factor is extended with the longest factor that occurs both after the current run and after $p^\beta p''$ and does not overlaps the minimal gap (i.e., end at least $g$ symbols before $i$). This gives us 
a candidate for the factor $u$ we look for (provided that the gap between the two candidates for $u$ we found is not too large). In the end, we just take the longest of the candidates we identified (in case of ties, we take the one that starts on the rightmost position), and this gives the factor $uvu$ with the longest $u$, such that the second $u$ occurs at position $i$ in $w$, $ 2^k \leq |u|< 2^{k+1}$ and $g<|v|\leq G$. 

\begin{figure}\begin{center}
\includegraphics[width=\linewidth]{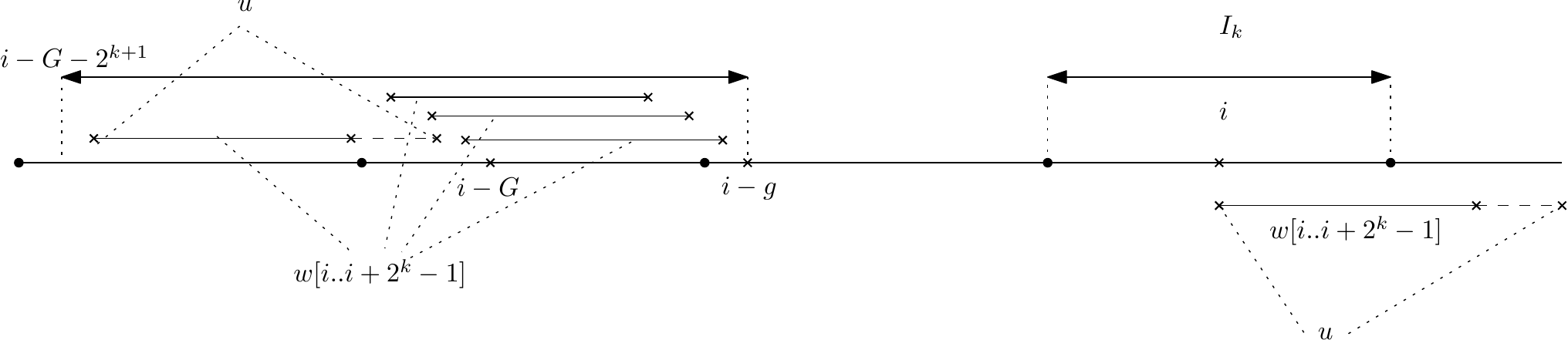}
\end{center}
\vspace{-0.5cm}
\caption{Proof of Theorem \ref{sol_LPFgG}: Occurrences of $w[i..i+2^k-1]$ inside $w[i-G-2^{k+1}..i-g-1]$: one separate occurrence, and a run containing three occurrences. The separate occurrence can be prolonged to produce a gapped palindrome $uvu$, which is not, however, long enough as it does not reach the range between $i-G$ and $i-g$. The rightmost occurrence in the run produces the gapped repeat $w[i..i+2^k-1] v' w[i..i+2^k-1]$, which fulfils the conditions imposed on the gap.}
\end{figure}

Iterating this process for all $k$ and $i$ as above, we obtain for each $i$ the factor $uvu$ with the longest $u$, such that the second $u$ occurs at position $i$ in $w$, $ 2^{k_0} \leq |u|$ and $g<|v|\leq G$.
By Remark \ref{rem_DBF}, the time spent in this computation for some $i$ and $k$ is constant, so the overall time used in the above computation is $\bigo(n\log n)$. Clearly, if for some $i$ we found such a factor $u$, then this gives us both $B[i]$ and $\LPdF_{g,G}[i]$; if not, we continue as follows.

In the second case, we try to identify  for each $i$ the factor $uvu$ with the longest $u$, such that the second $u$ factor occurs at position $i$ in $w$, $u<  2^{k_0}$ and $g<|v|\leq G$. Again, we consider each $k\leq k_0-1$ separately and split the discussion in three cases.

Firstly, for each $i$, we find the factor $uvu$ with the longest $u$, such that the second $u$ occurs at position $i$ in $w$, $2^k \leq u<  2^{k+1}$, $g<|v|\leq G$, and the first $u$ has its prefix of length $2^k$ in the factor $w[i-G-2^{k+1}..i-G+2^k]$, whose length is $2^k+2^{k+1}+1\leq 4\cdot 2^k$. This can be done similarly to the above (report all the occurrences of $w[i..i+2^k-1]$ in that range, and try to extend them to get $u$); for all $i$ and all $k$ takes $\bigo(n\log \delta)$ time. 

Secondly, for each $i$, we find the factor $uvu$ with the longest $u$, such that the second $u$ occurs at position $i$ in $w$, $2^k \leq u<  2^{k+1}$, $g<|v|\leq G$, and the first $u$ has its prefix of length $2^k$ in the factor $w[i-g-2^{k+1}..i-g-1]$, whose length is $2^{k+1}$. Again, this can be done like above, and for all $i$ and all $k$ takes $\bigo(n\log \delta)$ time. 

The third and more complicated subcase is when the first $u$ starts in the factor $w[i-G..i-g-2^{k+1}-1]$, of length $\delta_k=\delta-2^{k+1}$. Let $g_k=g+2^{k+1}+1$; obviously, $\delta_k=G-g_k$. Note that, in this case, every factor of length at most $2^{k+1}$ starting in $w[i-G..i-g-2^{k+1}-1]$ ends before $i-g$, so it is a valid candidate for the $u$ we look for.
In this case we can follow the algorithm from the proof of Theorem \ref{thLPrFgG}. We split the set $\{1,\ldots,n\}$ into ranges of consecutive numbers: $I_0=\{1,2,\ldots,\delta_k\}$, $I_1=\{\delta_k+1, \delta_k+2,\ldots, 2\delta_k\}$, and so on. For some $ I_\ell=\{\ell\delta_k+1,\ldots,(\ell+1)\delta_k\}$, considering all $i\in I_\ell$ there are three consecutive ranges from the ones defined above, where the first $u$ of the factor $uvu$ we look for may occur. The first (leftmost, with respect to its starting position) such range is the one containing $\ell\delta_k-G$, and let us denote it $J_1$; the last one (rightmost) is the one that contains $(\ell+1)\delta_k-g_k-1$, and we denote it by $J_3$. Clearly, between the ranges containing $\ell\delta_k-G$ and $(\ell+1)\delta_k-g_k-1$, respectively, there is exactly one complete range, call it $J_2$. 

Moreover, for a precise $i\in I_\ell$ the possible starting positions of the left arm $u$ of the repeat $uvu$ of the type we are searching for (i.e., with the gap between $g$ and $G$), with the second $u$ starting on $i$, form a contiguous range $J'_i\cup J''_i$ where $J'_i$ is an interval starting with $i-G$ and extending until the end of the range $J_p$ that contains $i-G$, and $J''_i$ is an interval ending with $i-g_k-1$, contained in $J_{p+1}$ (when $p<3$). Like before, $J'_i\cup J''_i$ can be seen as the content of a window that slides through $J_1,J_2,J_3$ while searching for $B[i]$. We denote by $f_i$ the position of $J'_i$ such that $LCP(f_i,i)$ is maximum among all such positions; we denote by $s_i$ the position of $J''_i$ such that $LCP(s_i,i)$ is maximum among all such positions. Then we just have to check where, at $f_i$ or $s_i$, occurs a longer factor that also occurs at $i$. We just explain how to compute $f_i$. 

We start with $i=\ell\delta_k+1$ and consider the set of words $w[j..n]$ with $j\in J'_i$; this set can be easily obtained in $\bigo(\delta)$ time by finding first the range $J_1$ in which $i-G$ is contained and then selecting from $J_1$ of the set of words $w[d..n]$ of $w$ starting in $J_1$ with $d\geq i-G$ (ordered lexicographically). The ranks corresponding to these suffixes in the suffix array of $w$ (i.e., the set of numbers $Rank[d]$) define a partition of the universe $U=[0,n+1)$ in at most $\delta_k+2$ disjoint intervals. So, we can maintain an interval union-find data structures like in Remark \ref{rem_Union_Find}, where the ranks are seen as limits of the intervals in this structure. We assume that the intervals in our data structure are of the form $[a,b)$, with $a$ and $b$ integers that are equal to some $Rank[d_a]$ and $Rank[d_b]$. The first interval in the structure is of the form $[0,a)$, while the last is of the form $[b,n+1)$. Recall that we want to compute $f_i$. To this end, we just have to find the interval to which $Rank[i]$ belongs; say that this is $[a,b)$. This means that the words $w[d_a..n]$ and $w[d_b..n]$ are the two words of the set $\{w[d..n]\mid d\in J'_i\}$ which are closest to $w[i..n]$ lexicographically ($w[d_a..n]$ is lexicographically smaller, while $w[d_b..n]$ is lexicographically greater). Clearly, $f_i=d_a$ if $\LCP(d_a,i)\geq \LCP(d_b,i)$ and $f_i=d_b$, otherwise (in case of a tie, we take $d_a$ if $d_b<d_a$, and $d_b$ otherwise). So, to compute $f_i$ we have to query once the union-find data structure to find $a$ and $b$, and the corresponding $d_a$ and $d_b$, and then compute the answer to two $\LCP$ queries. 

When moving on to compute $f_{i+1}$, we just have to update our structure. In this case, $i-G$ is no longer a valid candidate for $f_{i+1}$, as it should be removed from $J'_i$. So we just delete it from the interval union-find data structure, and merge the interval ending right before $Rank[i-G]$ and the one starting with $Rank[i-G]$. This means one union operation in our interval union-find structure. Then we proceed to compute $f_{i+1}$ just in the same manner as in the case of $f_i$.

The process continues in this way until we try to compute $f_i$ for $J'_i$ being a singleton. This means that $i-G$ is the last element of one of the ranges $J_1$, $J_2$, or $J_3$; let us assume that this range is $I_p$ from the ranges defined above. Clearly, this time $f_i$ is the single element of $J'_i$. Until now, we performed alternatively at most $\delta_k$ find operations and $\delta_k$ union operations on the union-find data structure. Further, instead of updating the union-find data structure, we consider a new interval partition of $[0,n+1)$ induced by the ranks of the $\delta_k$ suffixes starting in $I_{p+1}$. Note that when computing the values $f_i$ for $i\in I_\ell$ we need to consider a new partition of $U$ once: at the border between $J_1$ and $J_2$.

Now, by the same arguments as in the proof for gapped palindromes, the process takes $\bigo(n)$ in total for all intervals $I_\ell$ (so, when we iterate $\ell$), for each $k$. In this way we find the factor $uvu$ with the longest $u$, such that the second $u$ occurs at position $i$ in $w$, $2^k \leq u<  2^{k+1}$, $g<|v|\leq G$, and the first $u$ starts in $w[i-G..i-g-2^{k+1}-1]$. Iterating for all $k$, we complete the computation of $B[i]$ and $\LPdF_{g,G}[i]$ for all $i$; the needed time is $\bigo(n\log \delta)$ in this case.

Considering the three cases above leads to finding for each $i$ the factor $uvu$ with the longest $u$, such that the second $u$ factor occurs at position $i$ in $w$, $u<  2^{k_0}$ and $g<|v|\leq G$. The complexity of this analysis is $\bigo(n\log \delta)$. After concluding this, we get the value $B[i]$ for each $i$. Therefore, the entire process of computing $B[i]$ and $\LPdF_{g,G}[i]$ for all $i$ takes $\bigo(n\log n)$ time.
\end{proof}

We conclude this section with the following consequence of Theorems~\ref{thLPrFgG} and~\ref{sol_LPFgG}.

\begin{theorem}
Given $w$ of length $n$ and two integers $g$ and $G$, such that $0\leq g< G\leq n$, we can find in linear time the gapped palindrome $u^Rvu$ occurring in that word with the longest arm $u$ and $g\leq |v|<G$.\\
Given $w$ of length $n$ and two integers $g$ and $G$, such that $0\leq g< G\leq n$, we can find in $\bigo(n \log n)$ time the gapped repeat $uvu$ occurring in that word with the longest arm $u$ and $g\leq |v|<G$.
\end{theorem}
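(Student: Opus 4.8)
The plan is to obtain both statements as immediate corollaries of Theorems~\ref{thLPrFgG} and~\ref{sol_LPFgG}, followed by a single linear-time pass over the computed arrays.

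For the palindrome, I would first invoke Theorem~\ref{thLPrFgG} to compute in $\bigo(n)$ time the array $\LPF_{g,G}[\cdot]$, together with the auxiliary array $B[\cdot]$ maintained inside its proof, where $B[i]$ records the rightmost position at which the reverse of the longest admissible left arm at $i$ may end. I would then scan $i$ from $1$ to $n$, keeping an index $i^\star$ that maximises $\LPF_{g,G}[i]$. The factor of $w$ delimited by the right arm starting at $i^\star$ (of length $\LPF_{g,G}[i^\star]$) and by the left arm ending at $B[i^\star]$ is a gapped palindrome $u^Rvu$ with $g\le|v|<G$, and it has the longest possible arm among all such structures in $w$: indeed, any gapped palindrome $u'^Rv'u'$ with $g\le|v'|<G$ whose right arm starts at a position $i$ satisfies $|u'|\le \LPF_{g,G}[i]\le \LPF_{g,G}[i^\star]$. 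The scan costs $\bigo(n)$, so the total running time is linear.

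For the repeat, the argument is the same with $\LPdF_{g,G}[\cdot]$ in place of $\LPF_{g,G}[\cdot]$: I would run the algorithm of Theorem~\ref{sol_LPFgG} to build this array (and the associated array $B[\cdot]$, with $B[i]$ the rightmost starting position of a longest admissible left arm) in $\bigo(n\log n)$ time, then scan for the index $i^\star$ maximising $\LPdF_{g,G}[i]$ and read off the corresponding repeat $uvu$ from the pair $(i^\star,B[i^\star])$ exactly as above. The extra scan is $\bigo(n)$ and is absorbed into the $\bigo(n\log n)$ bound.

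There is no real obstacle; the only point to verify is that the tables produced by the two theorems already expose enough information to exhibit the actual factor $u^Rvu$ (resp.\ $uvu$), not merely the length of its arm. This holds because $i^\star$ pins down one arm and $B[i^\star]$ pins down the other, so both $u$ and the gap $v$ are fully determined, and the gap-length condition $g\le|v|<G$ is guaranteed by the very definition of $B[\cdot]$ in the respective proofs.
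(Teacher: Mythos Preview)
Your proposal is correct and matches the paper's own proof, which simply says that after solving Problem~\ref{LPFgG} one scans the arrays $\LPF_{g,G}[\cdot]$ and $\LPdF_{g,G}[\cdot]$ for their maxima. Your additional remark about reading off the actual factor from the auxiliary array $B[\cdot]$ is a harmless (and helpful) elaboration on the same idea.
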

\begin{proof}
It is immediate that, after solving Problem \ref{LPFgG} for the word $w$, we just have to check which is the longest gapped palindrome (respectively, repeat) stored in the array $LPrF_{g,G}[\cdot]$ (respectively, $\LPdF_{g,G}[\cdot]$). 
 \end{proof}

\section{Lower bounded gap}
To solve Problem \ref{LPFg(i)}(a) we need to find, for some position $i$ of $w$, the factor $w[1..j]^R$ with $j\leq i-g(i)$ that occurs closest to $w[1..i]$ in the lexicographically ordered list ${\mathcal L}$ of all the suffixes $w[k..n]$ of $w$ and of the mirror images $w[1..k]^R$ of its prefixes. In the following we show how to do this for all $i$ in $\bigo(n)$ time, by reducing it to answering $n$ find queries in an extended interval union-find data~structure.
\begin{theorem}\label{LPFg(i)_sol}
Problem \ref{LPFg(i)}(a) can be solved in linear time.
\end{theorem}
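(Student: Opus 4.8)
The plan is to reduce the computation of the whole array $\LPF_g[\cdot]$ to a batch of nearest-neighbour queries in two interval union-find structures. First I would rewrite the quantity to be computed as
$$\LPF_g[i]=\max\{\LCP(w[1..j]^R,w[i..n])\mid 1\le j\le i-1-g(i)\},$$
with the usual convention that $\max\emptyset=0$: from such a $j$ and $\ell=\LCP(w[1..j]^R,w[i..n])$ we obtain the arm $u=w[i..i+\ell-1]$, whose mirror $u^R=w[j-\ell+1..j]$ is a suffix of $w[1..j]$, so that $u^Rv$ is a suffix of $w[1..i-1]$ with $|v|=i-1-j\ge g(i)$; conversely, any admissible arm of length $\ell$ forces the existence of an index $j\le i-1-g(i)$ with $\LCP(w[1..j]^R,w[i..n])\ge\ell$. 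Since $\LCP(x,w[i..n])$ is non-increasing as $x$ moves away from $w[i..n]$ in either direction along the lexicographically sorted list ${\mathcal L}$, this maximum is attained by one of the two reversed prefixes $w[1..j]^R$, with $j\le i-1-g(i)$, whose rank in ${\mathcal L}$ is respectively the predecessor and the successor of $Rank[i]$ among $\{Rank_R[j]\mid j\le i-1-g(i)\}$.

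So it suffices, for every $i$, to find those two ranks and then to answer two $\LCP$-queries on $w0w^R$. The difficulty is that the threshold $t_i=i-1-g(i)$ is not monotone in $i$, hence the admissible family of reversed prefixes does not vary monotonically when the positions are scanned in their natural order. I would circumvent this by sorting the positions by $t_i$ (radix sort in $\bigo(n)$ time; positions with $t_i<1$ are discarded, as they have $\LPF_g[i]=0$) and processing them in \emph{decreasing} order of $t_i$: now the admissible family only shrinks, i.e.\ reversed prefixes get \emph{deactivated} one at a time. Deactivations of markers together with ``nearest active marker'' queries are precisely what interval union-find supports: over the universe $[1,|{\mathcal L}|+1)$ of ranks I keep two partitions into intervals, one in which every block has its minimal element equal to the rank of an active reversed prefix (so a find returns the predecessor) and one in which every block has its maximal element equal to such a rank (a find returns the successor); deactivating $w[1..j]^R$ is a single union merging the block of $Rank_R[j]$ with the neighbouring block on the appropriate side. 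An inverse table of $Rank_R[\cdot]$ turns a returned rank back into the corresponding index $j$, and the matching $\LCP$ value is then read in $\bigo(1)$.

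The algorithm performs $\bigo(n)$ deactivations and $\bigo(n)$ find-queries on each of the two structures over a universe of size $\bigo(n)$, and the whole sequence of unions is known in advance --- the reversed prefixes leave the admissible window in the fixed order $j=n,n-1,\dots$, and each union merges two currently adjacent intervals, so the tree of possible unions is simply the static interval tree on $[1,|{\mathcal L}|+1)$. Thus the results of \cite{Gabow83} apply and the total running time, together with the $\bigo(n)$ preprocessing for the suffix array and $\LCP$ structures of $w0w^R$, is $\bigo(n)$. The one genuinely subtle step is this reduction: realising that ordering the queries by the threshold $t_i$ monotonises the admissible window, and that reversing time turns the resulting insertions into deletions, which is exactly what makes interval union-find --- and hence linear time --- applicable.
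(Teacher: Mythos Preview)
Your argument is correct, and it takes a genuinely different route from the paper's. The paper also reduces the problem to finding, for each $i$, the lexicographic neighbours of $w[i..n]$ among $\{w[1..j]^R\mid j\le i-g(i)\}$, but it then processes the positions $i$ in \emph{suffix-array order}: it maintains a monotone stack $\mathcal{A}_i$ of candidates $j$ (increasing simultaneously in $j$ and in $Rank_R[j]$), observes that the query ``largest $j\in\mathcal{A}_i$ with $j<i-g(i)$'' can be asked against the stack, records the stack history as a tree whose rightmost path is the current stack, and finally answers all queries off-line as weighted level-ancestor queries on that tree (Remark~\ref{weighted_tree}). Your idea of sorting the queries by the threshold $t_i=i-1-g(i)$ instead, so that the admissible set of reversed prefixes only \emph{shrinks} and every operation becomes a deletion, lets you bypass the stack/tree machinery entirely and fall back on plain interval union-find---essentially the same tool the paper uses for Problem~\ref{LPFgG}(a). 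This is shorter and conceptually cleaner; the price is that it relies on being able to radix-sort the thresholds, which is immediate here since $t_i\in[-n,n]$, whereas the paper's stack-plus-level-ancestor technique would still work even if the thresholds were arbitrary (not sortable in linear time). One small wording remark: your phrase ``the tree of possible unions is simply the static interval tree'' is imprecise---what matters for \cite{Gabow83} is that the union tree is fixed by the known deletion order $j=n,n-1,\dots$, and the paper's Preliminaries already grant $\bigo(1)$ amortised interval union-find in exactly this situation.
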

\begin{proof}
In a preprocessing step of our algorithm, we produce the suffix array of $w0w^R$ and the lexicographically ordered list ${\mathcal L}$ of the suffixes of $w[i..n]$ of $w$ and of the mirror images $w[1..i]^R$ of the prefixes of $w$ (which correspond to the suffixes of $w^R$). For the list ${\mathcal L}$ we compute the arrays $Rank[\cdot]$ and $Rank_R[\cdot]$. 

We first want to find, for each $i$, the prefix $w[1..j]$, such that $i-j>g(i)$, $w[1..j]^R$ occurs before $w[i..n]$ in ${\mathcal L}$ (i.e., $Rank_R[j]<Rank[i]$), and the length of the common prefix of $w[1..j]^R$ and $w[i..n]$ is greater or equal to the length of the common prefix of $w[1..j']^R$ and $w[i..n]$ for $j'$ such that $Rank_R[j']<Rank[i]$; for the prefix $w[1..j]$ as above, the length of the common prefix of $w[1..j]^R$ and $w[i..n]$ is denoted by $\LPF^{<}_{g}[i]$, while $B_<[i]$ denotes $j$. If $w[i..n]$ has no common prefix with any factor $w[1..j]^R$ with $i-j>g(i)$ and $Rank_R[j]<Rank[i]$, then  $\LPF^{<}_{g}[i]=0$ and $B_<[i]$ is not defined; as a convention, we set $B_<[i]$ to $-1$

Afterwards, we compute for each $i$ the prefix $w[1..j]$, such that $i-j>g(i)$, $w[1..j]^R$ occurs after $w[i..n]$ in ${\mathcal L}$ (i.e., $Rank_R[j]>Rank[i]$), and the length of the common prefix of $w[1..j]^R$ and $w[i..n]$ is greater or equal to the length of the common prefix of $w[1..j']^R$ and $w[i..n]$ for $j'$ such that $Rank_R[j']>Rank[i]$; for the prefix $w[1..j]$ as above, the length of the common prefix of $w[1..j]^R$ and $w[i..n]$ is denoted by $\LPF^{>}_{g}[i]$, while $B_>[i]$ denotes $j$. Clearly, $\LPF_g[i]=\max\{\LPF^{<}_g[i], \LPF^{>}_g[i]\}$. If $w[i..n]$ has no common prefix with any factor $w[1..j]^R$ with $i-j>g(i)$ and $Rank_R[j]>Rank[i]$, then  $\LPF^{>}_{g}[i]=0$ and $B_>[i]$ is set to $-1$

For simplicity, we just present an algorithm computing $\LPF^{<}_g[\cdot]$ and $B_<[\cdot]$. The computation of $\LPF^{>}_g[\cdot]$ is performed in a similar way.

The main idea behind the computation of $\LPF^{<}_g[i]$, for some $1\leq i\leq n$, is that if $w[1..j_1]$ and $w[1..j_2]$ are such that $Rank_R[j_2]<Rank_R[j_1]<Rank[i]$ and $j_1<j_2<i$ then definitely $B_<[i]\neq j_2$. Indeed, $\LCP(w[1..j_2]^R,w[i..n])<\LCP(w[1..j_1]^R,w[i..n])$, and, moreover, $i-j_2<i-j_1$. So, if $i-j_2>g(i)$ then also $i-j_1>g(i)$, and it follows that $\LPF^{<}_g[i]\geq \LCP(w[1..j_1]^R,w[i..n])> \LCP(w[1..j_2]^R,w[i..n])$, so $B_<[i]$ cannot be $j_2$. This suggests that we could try to construct for each $i$ an ordered list ${\mathcal A}_i$ of all the integers $j\leq n$ such that $Rank_R[j]<i$ and moreover, if $j_1$ and $j_2$ are in ${\mathcal A}_i$ and $j_1<j_2$ then also $Rank_R[j_1]<Rank_R[j_2]$. 

Now we describe how to implement this. Let us now consider $i_1$ and $i_2$ which occur on consecutive positions of the suffix array of $w$, such that $Rank[i_1]<Rank[i_2]$. The list ${\mathcal A}_{i_2}$ can be obtained from ${\mathcal A}_{i_1}$ as follows. We consider one by one, in the order they appear in $Rank_R[\cdot]$, the integers $j$ such that $Rank[i_1]<Rank_R[j]<Rank[i_2]$, and for each of them update a temporary list ${\mathcal A}$, which initially is equal to ${\mathcal A}_{i_1}$. When a certain $j$ is considered, we delete from the right end of the list ${\mathcal A}$ (where ${\mathcal A}$ is ordered increasingly from left to right) all the values $j'>j$; then we insert $j$ in ${\mathcal A}$. When there are no more indices $j$ that we need to consider, we set ${\mathcal A}_{i_2}$ to be equal to ${\mathcal A}$. It is clear that the list ${\mathcal A}_{i_2}$ is computed correctly. 

Now, for each $i$ we need to compute the greatest $j\in {\mathcal A}_i$ such that $j<i-g(i)$. As ${\mathcal A}_i$ is ordered increasingly, we could obtain $j$ by performing a predecessor search on ${\mathcal A}_i$ (that is, binary searching the greatest $j$ of the list, which is smaller than $i-g(i)$), immediately after we computed it, and save the answer in $B_<[i]$. However, this would be inefficient. Before proceeding, we note that if we compute the lists ${\mathcal A}_i$ for the integers $i$ in the order they appear in the suffix array of $w$, then it is clear that the time needed to compute all these lists is linear. Indeed, each $j\leq n$ is introduced exactly once in the temporary list, and then deleted exactly once from it. Doing the above mentioned binary searches would add up to a total of $\bigo(n \log n)$. We can do better than that.

Now we have reduced the original problem to a data-structures problem. We have to maintain an increasingly ordered (from left to right) list ${\mathcal A}$ of numbers (at most $n$, in the range $\{1,\ldots,2n\}$, each two different), subject to the following update and query operations. This list can be updated by the following procedure: we are given a number $j$, we delete from the right end of ${\mathcal A}$ all the numbers greater than $j$, then we append $j$ to ${\mathcal A}$. By this update, the list remains increasingly ordered.  The following queries can be asked: for a given $\ell$, which is the rightmost number of ${\mathcal A}$, smaller than $\ell$? 
We want to maintain this list while $n$ updates are executed, and $n$ queries are asked at different moments of time. 
Ideally, the total time we can spend in processing the list during all the updates should be $\bigo(n)$, and, after all the updates are processed, we should be able to provide in $\bigo(n)$ time the correct answer for all the queries (i.e., if a certain query was asked after $k$ update operations were performed on the list, we should return the answer to the query with respect to the state of the list after those $k$ update operations were completed).

Next we describe our solution to this problem.

We use a dynamic tree data-structure to maintain the different stages of ${\mathcal A}$. Initially, the tree contains only one path: the root $0$ and the leaf $2n+1$. When an update is processed, in which a number $j$ is added to the list, we go up the rightmost path of the tree (from leaf to root) until we find a node with a value smaller than $j$. Then $j$ becomes the rightmost child of that node (i.e., $j$ is a leaf). Basically, the rightmost path of a tree after $k$ updates contains the elements of ${\mathcal A}$ after those $k$ updates, preceded by $0$. When a query is asked we associate that query with the leaf corresponding to the rightmost leaf of the tree at that moment. In this way, we will be able to identify, after all updates were processed, the contents of the list at the moments of time the queries were, respectively, asked: we just have to traverse the path from the node of the tree associated to that query (this node was a leaf when the query is asked, but after all the updates were processed might have become an internal node) to the root.

The tree can be clearly constructed in linear time: each node is inserted once on the rightmost tree, and it disappears from this rightmost tree (and will not be reinserted there) when a smaller value is inserted in the tree.

In this new setting, the queries can be interpreted as weighted level ancestor queries on the nodes of the constructed tree (where the weight of an edge is the difference between the two nodes bounding it). Considering that the size of the tree is 
$\bigo(n)$, all weights are also $\bigo(n)$, there are $\bigo(n)$ queries, and these queries are to be answered off-line, it follows (see Remark~\ref{weighted_tree}) that we can return the answers to all these queries in $\bigo(n)$ time. 

This completes the linear solution to our problem. 
\end{proof}

To solve Problem \ref{LPFg(i)}(b) we use the following lemma.
\begin{lemma}\label{overlapping_LPF}
Given a word $w$, let $L[i]=\min\{j\mid j<i, \LCP(j,i)\geq \LCP(k,i)$ for all $k<i\}$. The array $L[\cdot]$ can be computed in linear time.
\end{lemma}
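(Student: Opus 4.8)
The plan is to compute $L[\cdot]$ essentially by a variant of the standard longest-previous-factor computation, but read off in a way that picks the \emph{leftmost} previous occurrence realizing the maximal LCP, rather than an arbitrary one. First I would build the suffix array of $w$ together with the $\LCP$ data structures (in $\bigo(n)$ time), and recall the classical fact that for any $i$, if we look at the two suffixes adjacent to $w[i..n]$ in sorted order among $\{w[j..n] : j<i\}$ — call them the \emph{predecessor} $w[p..n]$ and the \emph{successor} $w[q..n]$ in rank among earlier positions — then $\max_{k<i}\LCP(k,i)$ equals $\max(\LCP(p,i),\LCP(q,i))$, and in fact \emph{every} $k<i$ achieving this maximum has rank lying (weakly) between $\mathrm{Rank}[p]$ and $\mathrm{Rank}[q]$, and shares exactly that LCP with $i$. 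So the candidate set of indices realizing the max is a contiguous block in the suffix array, bounded on one side by $p$ or $q$; among all positions in that block that are $<i$, we want the minimum index. This is the reformulation I would drive toward.

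The next step is to turn the ``leftmost index in a suffix-array range, subject to being $<i$'' requirement into something handled by the machinery already developed in the paper. I would process the positions $i=1,\dots,n$ in order of their rank in the suffix array (exactly as in the proof of Theorem~\ref{LPFg(i)_sol}), maintaining the set of already-seen positions; as there, scanning positions in suffix-array order lets one incrementally maintain, for the currently processed $i$, the predecessor and successor among earlier positions using an interval union-find structure over rank space, in amortized $\bigo(1)$ per step. That already gives the two candidates $p$ and $q$ and hence $\LCP$ value for free. The remaining subtlety is that we do not just want \emph{a} witness but the \emph{minimum} one over the whole tie block. Here I would use Lemma~\ref{stabbing} (or the same disjoint-set-with-known-union-tree idea): for each position $j$, the set of positions $i$ for which $j$ lies in the relevant tie block, with the right LCP, corresponds to an interval of ranks / an interval of processing times, and $j$ is a candidate for $L[i]$ with ``priority'' equal to $-j$ (we want the smallest $j$). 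Reading off, for each $i$, the maximum-priority interval containing it then yields $L[i]$, and Lemma~\ref{stabbing} does this in $\bigo(n)$.

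An alternative, possibly cleaner, route — and the one I would actually write up if it works out — is to reduce directly to the weighted-level-ancestor / dynamic-tree device used at the end of the proof of Theorem~\ref{LPFg(i)_sol}: there one already maintains, per position $i$, a monotone list of earlier positions and extracts a predecessor query from it; the same structure, queried for the \emph{leftmost} earlier position whose suffix agrees with $w[i..n]$ on at least $\max_{k<i}\LCP(k,i)$ symbols, can be answered off-line in linear total time, because the tie block's left endpoint in the monotone list is determined by the two neighbors $p,q$ which we already track. In either implementation the total running time is $\bigo(n)$.

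I expect the main obstacle to be the bookkeeping in the claim that every $k<i$ attaining $\max_{k<i}\LCP(k,i)$ occupies a contiguous rank interval adjacent to $p$ or $q$ — this needs the standard monotonicity of LCP along the suffix array, i.e.\ that $\LCP(k,i)$ as a function of $\mathrm{Rank}[k]$ is unimodal with peak at $\mathrm{Rank}[i]$, together with the observation that the tie value is itself $\min$ of the $\LCP$ over the gap — and then arguing carefully that restricting to $k<i$ does not break the ``one of $p,q$ is an endpoint'' structure. Once that combinatorial lemma is pinned down, plugging it into Lemma~\ref{stabbing} or the dynamic-tree apparatus is routine, and linear time follows.
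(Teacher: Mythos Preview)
Your sketch identifies the right starting point --- suffix array, rank-predecessor $p$ and rank-successor $q$ of $i$ among earlier positions --- but the reduction to Lemma~\ref{stabbing} has a real gap. You assert that ``every $k<i$ achieving this maximum has rank lying (weakly) between $\mathrm{Rank}[p]$ and $\mathrm{Rank}[q]$''; this is false. The set $\{k:\LCP(k,i)\geq m\}$ (over \emph{all} $k$) is indeed a contiguous rank interval, but once you intersect with $\{k<i\}$ contiguity is lost: between $p$ and some further $k<i$ with the same $\LCP$ there may sit positions $\geq i$ in rank order. Consequently, for a fixed $j$, the set of $i$ for which $j$ lies in the tie block \emph{and} $j<i$ is not an interval in any coordinate you name, so the hypothesis of Lemma~\ref{stabbing} is not available. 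You flag precisely this as ``the main obstacle'' and then assume it resolves; it does not. The alternative route via the dynamic tree of Theorem~\ref{LPFg(i)_sol} is left too vague to evaluate: you would have to specify exactly which query is posed to the tree and argue why it returns the \emph{leftmost} witness, not just some witness.

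The paper's proof sidesteps the contiguity issue entirely by a direct monotone-list computation. It splits into $B_{<}[i]$ (minimum-index witness with rank below $\mathrm{Rank}[i]$) and $B_{>}[i]$ symmetrically, then takes the better of the two. For $B_{<}$ it processes positions in suffix-array order and maintains a list ${\mathcal A}$ of candidates that is increasing simultaneously in index and in rank. The trick you are missing is this: once $B_{<}[i_1]$ has been read off for the current $i_1$, every list element with index larger than $B_{<}[i_1]$ is deleted before moving on. Those elements share the same $\LCP$ with $i_1$ as $B_{<}[i_1]$ does, and since for any later $i_2$ (in rank order) one has $\LCP(j,i_2)=\min(\LCP(j,i_1),\LCP(i_1,i_2))$, they will also share the same $\LCP$ with every later $i_2$ while having larger index --- so they are permanently dominated and can be discarded. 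After this pruning the answer for the next query sits at the top of the list, and since each position is inserted once and deleted once the whole scan is linear. This amortised-deletion argument is the missing idea; neither your Lemma~\ref{stabbing} reduction nor your weighted-level-ancestor reduction supplies it.
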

\begin{proof}
We first produce the suffix array of $w$. We denote by $Rank[i]$ the position of $w[i..n]$ in the suffix array of $w$. 

We first want to find, for each $i$, the suffix $w[j..n]$, such that $j<i$ is minimum with $Rank[j]<Rank[i]$ and $\LCP(j,i)\geq \LCP(j',i)$ for all $j'<i$ such that $Rank[j']<Rank[i]$.
For the suffix $w[j..n]$ as above, let $B_<[i]=j$. If no such $j$ exists, we set $B_<[i]=-1$. 

Similarly, we compute, for each $i$, the suffix $w[j..n]$, such that $j<i$ is minimum with $Rank[j]>Rank[i]$ and $\LCP(j,i)\geq \LCP(j',i)$ for all $j'<i$ such that $Rank[j']>Rank[i]$.
For the suffix $w[j..n]$ as above, let $B_>[i]=j$. Again, if no such $j$ exists, we set $B_>[i]=-1$.  

For simplicity, we just present an algorithm computing $B_<[\cdot]$. The computation of $B_>[\cdot]$ is performed in a similar way. Then, $L[i]=B_>[i]$ if $\LCP(B_>[i],i)>\LCP(B_<[i],i)$ or $\LCP(B_>[i],i)=\LCP(B_<[i],i)$ and $B_>[i]>B_<[i]$; similarly, $L[i]=B_<[i]$ if $\LCP(B_>[i],i)<\LCP(B_<[i],i)$ or $\LCP(B_>[i],i)=\LCP(B_<[i],i)$ and $B_>[i]<B_<[i]$.

The main idea behind the computation of $ B_<[i]$, for some $1\leq i\leq n$, is that if $w[j_1..n]$ and $w[j_2..n]$ are such that $Rank_R[j_2]<Rank_R[j_1]<Rank[i]$ and $j_1<j_2<i$ then definitely $B_<[i]\neq j_2$. Indeed, $\LCP(j_2,i)\leq \LCP(j_1,i)$, and, moreover, $j_2>j_1$, so $j_1$ is a better candidate than $j_2$ for $B_<[i]$. This suggests that we should construct for each $i$ an ordered list ${\mathcal A}_i$ of all the integers $j\leq n$ such that $Rank_R[j]<i$ and moreover, if $j_1$ and $j_2$ are in ${\mathcal A}_i$ and $j_1<j_2$ then also $Rank_R[j_1]<Rank_R[j_2]$. If $j'$ is on top of ${\mathcal A}_i$, then $B[i]$ is just the minimum $j$ of ${\mathcal A}_i$ such that $\LCP(j,i)=\LCP(j',i)$.

Let us now consider $i_1$ and $i_2$ which occur on consecutive positions of the suffix array of $w$, such that $Rank[i_1]<Rank[i_2]$. Assume that we already computed $B[i_1]$. The list ${\mathcal A}_{i_2}$ can be obtained from ${\mathcal A}_{i_1}$ by deleting from the right end of the list ${\mathcal A}_{i_1}$ (where ${\mathcal A}_{i_1}$ is ordered increasingly from left to right) all the values $j>i_1$. Also, to compute $B[i_2]$ more efficiently, we also delete all the values of ${\mathcal A}_{i_1}$ that are greater than $B[i_1]$. Indeed, for some $k>B[i_1]$ we have that $\LCP(B[i],i_1)=\LCP(B[i],i_2)$ and $\LCP(k,i_1)=\LCP(k,i_2)$, so $B[i_1]$ is a better candidate for $B[i_2]$ than $k$. Afterwards, we insert $i_1$ in the updated list ${\mathcal A}_{i_1}$, to obtain a list ${\mathcal A}_{i_2}$ from which $B[i_2]$ can be computed. Clearly, this entire process takes linear time, and leads to a correct computation of $B_<[\cdot]$. 

Afterwards, we compute $B_>[\cdot]$ similarly, and get $L[\cdot]$ in linear time.
\end{proof}

Another lemma shows how the computation of the array $\LPdF_g[\cdot]$ can be connected to that of the array $L[\cdot]$. For an easier presentation, let $B[i]$ denote the leftmost starting position of the longest factor $x_i$ that occurs both at position~$i$ and at a position $j$ such that $j+|x_i|\leq i-g(i)$; if there is no such factor $x_i$, then $B[i]=-1$. In other words, the length of the factor $x_i$ occurring at position $B[i]$ gives us $\LPdF_g[i]$. In fact, $\LPdF_g[i]=\min\{\LCP(B[i],i), i-g(i) -B[i]\}$. 

Now, let $L^1[i]=L[i]$ and $L^k[i]=L[L^{k-1}[i]$, for $k\geq 2$; also, we define \\
\centerline{$L^+[i]=\{L[i],L[L[i]],L[L[L[i]]],\ldots\}$.} 

The following lemma shows the important fact that $B[i]$ can be obtained just by looking at the values of $L^+[i]$. More precisely, $B[i]$ equals $L^{k}[i]$, where $k$ is obtained by looking at the values $L^j[i]\leq i-g(i)$ and taking the one such that the factor starting on it and ending on $i-g(i)-1$ has a maximal common prefix with $w[i..n]$. Afterwards, Theorem \ref{LPFg(i)_sol_rep} shows that this check can be done in linear time for all $i$, thus solving optimally Problem \ref{LPFg(i)}. 

\begin{lemma}\label{iterated_L}
For a word $w$ of length $n$ and all $1\leq i\leq n$ such that $B[i]\neq -1$, we have that $B[i]\in L^+[i]$. 
\end{lemma}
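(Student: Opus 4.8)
The plan is to prove the statement by induction on the rank of $w[i..n]$ in the suffix array, reusing the mechanism behind Lemma~\ref{overlapping_LPF}. Fix $i$ with $B[i]\neq -1$, and let $x_i$ be the factor realizing $\LPdF_g[i]$, so $x_i$ occurs at $B[i]$, at $i$, and $B[i]+|x_i|\le i-g(i)$. First I would recall the list ${\mathcal A}_i$ of Lemma~\ref{overlapping_LPF}: the increasing sequence of indices $j<i$ that are ``left-to-right maxima'' of $Rank$ among indices below $i$, i.e.\ $j_1<j_2$ in ${\mathcal A}_i$ implies $Rank[j_1]<Rank[j_2]<Rank[i]$ (together with the symmetric list ${\mathcal A}'_i$ for $Rank[j]>Rank[i]$). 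The key link is: these lists are precisely $L^+[i]$ (plus possibly $i$ at the front), because $L[i]$ is the smallest index attaining the maximal $\LCP$ with $i$, and iterating $L$ walks down exactly the maxima that survive after discarding everything with a strictly larger index and not-larger $\LCP$. So what I really must show is that $B[i]$ is one of those surviving maxima.

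The heart of the argument is the following observation about $B[i]$: among all indices $j<i$ with $\LCP(j,i)\ge |x_i|$, the value $B[i]$ is the smallest one, because $B[i]$ is defined as the \emph{leftmost} starting position of the longest factor occurring at $i$ that ends at least $g(i)$ symbols before $i$; any $j$ with $\LCP(j,i)\ge|x_i|$ and $j<B[i]$ would give the factor of length $|x_i|$ at an even earlier position while still ending before $i-g(i)$ (since $j+|x_i|<B[i]+|x_i|\le i-g(i)$), contradicting minimality of $B[i]$. Consequently, if some index $j'$ with $j'<B[i]$ satisfied $Rank$ lying strictly between $Rank[B[i]]$ and $Rank[i]$ on the same side, then $\LCP(j',i)\ge\min(\LCP(j',B[i]\text{'s suffix}),\ldots)$ — more precisely, suffixes whose ranks are between those of $w[B[i]..n]$ and $w[i..n]$ have $\LCP$ with $w[i..n]$ at least $\LCP(B[i],i)\ge|x_i|$, so such a $j'$ would also qualify and be smaller than $B[i]$, again contradicting minimality. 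Hence $B[i]$ is a left-to-right maximum of $Rank$ among indices $<i$ on its side, i.e.\ $B[i]\in{\mathcal A}_i$ or $B[i]\in{\mathcal A}'_i$, which is exactly $B[i]\in L^+[i]$.

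The induction is needed to make the ${\mathcal A}_i = L^+[i]$ identification rigorous: I would prove by induction on $Rank[i]$ that ${\mathcal A}_i\setminus\{i\} = L^+[i]$ (and symmetrically for ${\mathcal A}'_i$), using the incremental construction in the proof of Lemma~\ref{overlapping_LPF} — passing from $i_1$ to the next suffix $i_2$ in suffix-array order, one deletes from ${\mathcal A}_{i_1}$ all entries exceeding $B[i_1]$, inserts $i_1$, and this is precisely the recurrence $L^+[i_2]$ obeys once one checks $L[i_2]=B[i_2]$ lands on the right element. The one subtlety — and the step I expect to be the main obstacle — is handling the ``$g(i)$ slack'' correctly: the factor $x_i$ is forced to end before position $i-g(i)$, whereas the maxima in ${\mathcal A}_i$ know nothing about $g(i)$; I must argue that truncating the surviving maximum down to the length allowed by the gap constraint still yields $x_i$, i.e.\ that no \emph{longer} common factor is lost by the truncation, which follows because $B[i]$ was chosen leftmost among \emph{all} length-$|x_i|$ occurrences and a longer one ending before $i-g(i)$ would contradict maximality of $\LPdF_g[i]$. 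Threading this gap bookkeeping through the two-sided ($Rank$ above / below) case analysis is the only place real care is required; everything else is the bookkeeping already set up in Lemmas~\ref{overlapping_LPF} and~\ref{iterated_L}'s preamble.
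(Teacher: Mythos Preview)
Your central observation --- that $B[i]$ is the \emph{smallest} index $j<i$ with $\LCP(j,i)\ge |x_i|$ --- is correct and is in fact the whole proof. The gap is the step where you pass from ``$B[i]$ is a surviving maximum on its side'' to ``$B[i]\in L^+[i]$'' via the claimed identification ${\mathcal A}_i\cup{\mathcal A}'_i=L^+[i]$ (or, as you later write, ${\mathcal A}_i\setminus\{i\}=L^+[i]$). That identification is false, and the proposed induction would not establish it. Take three positions $j_1<j_2<i$ with $Rank[j_1]<Rank[i]<Rank[j_2]$ and $\LCP(j_1,i)>\LCP(j_2,i)$. Then ${\mathcal A}_i=\{j_1\}$, ${\mathcal A}'_i=\{j_2\}$, but $L[i]=j_1$ and $L^+[i]=\{j_1\}$: the position $j_2$ survives on its own side yet is globally dominated by $j_1$ (smaller index, larger $\LCP$), so it never appears in the $L$-chain. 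The per-side lists miss cross-side domination; in general $L^+[i]\subsetneq{\mathcal A}_i\cup{\mathcal A}'_i$, so membership in the latter does not give the lemma.

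The fix is to drop ${\mathcal A}_i$ entirely and characterise $L^+[i]$ directly: $j\in L^+[i]$ if and only if every $j'<j$ satisfies $\LCP(j',i)<\LCP(j,i)$. One direction is immediate from the definition of $L$; for the other, use that for $j'<L^k[i]$ one has $\LCP(j',L^k[i])=\LCP(j',i)$ (since $\LCP(j',i)<\LCP(L^k[i],i)$ forces the common prefix with $L^k[i]$ to have the same length), so the $L$-chain is exactly the sequence of strict $\LCP(\cdot,i)$-records read from right to left. With this in hand your minimality observation finishes instantly: no $j'<B[i]$ has $\LCP(j',i)\ge|x_i|$, and since $\LCP(B[i],i)\ge|x_i|$, a fortiori no $j'<B[i]$ has $\LCP(j',i)\ge\LCP(B[i],i)$; hence $B[i]\in L^+[i]$. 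No induction on $Rank[i]$, no ${\mathcal A}_i$, and no gap-bookkeeping subtlety is needed --- the ``$g(i)$ slack'' was already absorbed when you argued minimality of $B[i]$.

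The paper's own proof takes a different route: it argues by contradiction, splitting on whether some element of $L^+[i]$ lies strictly below $j=B[i]$, and in the main case locates consecutive chain elements $L^{k}[i]<j<L^{k-1}[i]$ to derive $\LCP(j,L^{k-1}[i])>\LCP(L^k[i],L^{k-1}[i])$, contradicting the definition of $L[L^{k-1}[i]]$. Your (repaired) argument is shorter and more transparent, because it isolates the record characterisation of $L^+[i]$ as a standalone fact rather than rediscovering it inside the contradiction.
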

\begin{proof}
Let us assume, for the sake of a contradiction, that $B[i]=j\notin L^+[i]$. This means that $j<i-g(i)$ and $\min\{\LCP(j,i), i-g(i) -j\}\geq \min\{\LCP(j',i), i-g(i) -j'\}$ for all $j'< i-g(i)$. We further consider two simple cases.

In the first case, there is $j'<j$ such that $j'\in L^+[i]$; take $j'$ to be the greatest number less than $j$ that belongs to $L^+[i]$. Then, it is clear that the longest factor that occurs both at $j'$ and at $i$ ends before $i-g(i)-1$. Otherwise, we would have $\min\{\LCP(j,i), i-g(i) -j\}\leq i-g(i)-j < \min\{\LCP(j',i), i-g(i) -j'\}=i-g(i) -j'$, so we would have $B[i]=j'$, a contradiction. So, it follows immediately, that $\LCP(j,i)>\LCP(j',i)$. Now, if $j'=L^k[i]$ (that is $L[L[\ldots L[i]\ldots]]$, where $L$ is applied $k$ times on $i$), then there exists $k'<k$ such that $j''=L^{k'}[i]>j$ and $\LCP(j'',i)>\LCP(j',i)$ and $\LCP(j',i)=\LCP(L[j''],i)$. Clearly, this means that $j$ occurs in the suffix array of $w$ closer to the suffix $w[i..n]$ than the suffix $w[L[j'']..n]$, but farther than the suffix $w[j''..n]$. So, $\LCP(j,j'')>\LCP(L[j''],j)$, which is a contradiction to the definition of $L[j'']$. In conclusion, we cannot have that there is $j'<j$ such that $j'\in L^+[i]$.

Now, assume that there is no $j'<j$ such that $j'\in L^+[i]$. As $j\notin L^+[i]$, this leads immediately to a contradiction. Since $B[i]\neq -1$, we have that the suffix starting at position $j$ has at least the prefix of length one common with all the factors starting at positions from $L^+[i]$. Therefore, at least one element from $L^+[i]$ should be less or equal to $j$. 
This concludes our proof.
\end{proof}

\begin{theorem}\label{LPFg(i)_sol_rep}
Problem \ref{LPFg(i)}(b) can be solved in linear time.
\end{theorem}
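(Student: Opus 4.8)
The plan is to combine Lemma~\ref{overlapping_LPF} and Lemma~\ref{iterated_L} and to reduce the computation of $\LPdF_g[\cdot]$ to a single efficient sweep over the tree defined by the pointers $L[\cdot]$. First I would compute, using Lemma~\ref{overlapping_LPF}, the array $L[\cdot]$ in linear time, and form the forest $\mathcal{T}$ on the vertex set $\{1,\dots,n\}$ in which the parent of $i$ is $L[i]$ (a vertex with $L[i]=-1$ becomes a root, and since $L[i]<i$ always, $\mathcal{T}$ is genuinely a forest with edges pointing to smaller indices). By Lemma~\ref{iterated_L}, for every $i$ with $B[i]\neq-1$ the sought position $B[i]$ lies on the path in $\mathcal{T}$ from $i$ to its root, i.e.\ $B[i]\in L^+[i]$. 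Recall $\LPdF_g[i]=\min\{\LCP(B[i],i),\,i-g(i)-B[i]\}$ and, more precisely, $B[i]$ is the ancestor $j=L^k[i]$ of $i$ that maximises $\min\{\LCP(j,i),\,i-g(i)-j\}$ among those ancestors with $j<i-g(i)$.

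The key structural observation I would exploit is that as one walks up the path $i,L[i],L^2[i],\dots$, the quantity $\LCP(L^k[i],i)$ is nondecreasing in $k$: each application of $L$ moves to a strictly smaller index that is at least as close to $w[i..n]$ in the suffix array, hence has an $\LCP$ with $w[i..n]$ that is at least as large (this is exactly the monotonicity exploited in the proof of Lemma~\ref{iterated_L}). Meanwhile the other term $i-g(i)-L^k[i]$ is \emph{increasing} in $k$ because $L^k[i]$ strictly decreases. Therefore the function $k\mapsto\min\{\LCP(L^k[i],i),\,i-g(i)-L^k[i]\}$ behaves well: the first, $\LCP$-term only grows, so among ancestors with $L^k[i]<i-g(i)$ the optimum is obtained either at the \emph{deepest} ancestor still satisfying the inequality (where the gap-term is smallest but the $\LCP$-term already matches $\LPdF_g[i]$ whenever the true $x_i$ does not reach past $i-g(i)-1$) or, if the $\LCP$-term is the binding constraint, again at that same deepest qualifying ancestor since moving further up cannot increase $\min\{\cdot,\cdot\}$ once $\LCP$ has saturated to its eventual value along the path and the gap term is only shrinking as we go \emph{down}. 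So $B[i]$ is simply the first ancestor $j$ of $i$ (walking up from $i$) with $j<i-g(i)$, equivalently the first node on the $i$-to-root path whose value drops below the threshold $i-g(i)$.

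With that reduction, the task becomes: for each $i$, find the nearest ancestor of $i$ in $\mathcal{T}$ whose label is less than $i-g(i)$. Since the labels strictly decrease along every root-ward path, ``label $<i-g(i)$'' is a monotone predicate along the path, so this is a level-ancestor-type query; more precisely, attaching to the edge from a child $c$ to its parent $p$ the weight $c-p>0$, the prefix sums of these weights along a path from $i$ are exactly $i-L[i],\,i-L^2[i],\dots$, so finding the first ancestor with label below $i-g(i)$ is the weighted level ancestor query ``first ancestor $j$ with $i-j\ge i-g(i)$'' that is answered off-line. By Remark~\ref{weighted_tree} (the folklore consequence of \cite{Gabow83}), all $n$ such queries on a tree with $\bigo(n)$ nodes and $\bigo(n)$ weights can be answered off-line in $\bigo(n)$ time. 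Once $B[i]$ is known for each $i$, one $\LCP$ query yields $\LPdF_g[i]=\min\{\LCP(B[i],i),\,i-g(i)-B[i]\}$, and $B[i]=-1$ whenever no qualifying ancestor exists; the whole procedure is linear. The main obstacle I expect is the second paragraph: carefully pinning down which ancestor on the path is $B[i]$ when the two terms of the $\min$ trade off, and arguing (via Lemma~\ref{iterated_L} together with the $\LCP$-monotonicity) that it is always the \emph{shallowest} ancestor below the threshold — so that the problem really is a clean single weighted-level-ancestor query rather than an optimisation over the whole path; everything after that is bookkeeping and an invocation of Remark~\ref{weighted_tree}.
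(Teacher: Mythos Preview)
Your reduction to a single weighted level-ancestor query rests on a false monotonicity claim. You assert that $\LCP(L^k[i],i)$ is nondecreasing in $k$, but the opposite holds: by definition $L[i]$ realises $\max_{j<i}\LCP(j,i)$, so every earlier position, in particular $L^2[i]<L[i]$, satisfies $\LCP(L^2[i],i)\le\LCP(L[i],i)$; inductively the sequence $\LCP(L^k[i],i)$ is \emph{nonincreasing}. With the correct direction the two terms in $\min\{\LCP(L^k[i],i),\,i-g(i)-L^k[i]\}$ move opposite ways as $k$ grows, so the function is unimodal and the optimum is attained at a crossover point, not at the deepest (nor the shallowest; your text uses both) qualifying ancestor. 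Concretely, if $t=i-g(i)$ and two ancestors $j_1>j_2$ are both below $t$ with $\LCP(j_1,i)=10$, $\LCP(j_2,i)=5$, $t-j_1=1$, $t-j_2=9$, then $j_1$ gives $\min\{10,1\}=1$ while $j_2$ gives $\min\{5,9\}=5$, so $B[i]=j_2$ and your query returns the wrong ancestor.

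The paper deals with this crossover explicitly: for each $j$ it precomputes a threshold $end[j]$ (essentially the largest end-position of the left arm for which $L[j]$ is at least as good a start as $j$; one shows $\LCP(L[j],i)=\min\{\LCP(L[j],j),\LCP(j,i)\}$ along the chain, which makes $end[j]$ computable from $j$ and $L[j]$ alone). It then sweeps positions from $n$ down to $1$ with a union-find structure on the $L$-tree, uniting $k$ with $L[k]$ when the sweep passes $end[k]$, and answering the query for $i$ when the sweep reaches $i-g(i)$ by taking the minimum of the current set of $i$. Your level-ancestor machinery could likely be repurposed if you threshold on $end[\cdot]$ along the path rather than on the raw labels, but as written the argument does not go through.
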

\begin{proof}
The main idea in this proof is that, to compute $\LPdF_g[i]$, it is enough to check the elements $j\in L^+[i]$ with $j\leq i-g(i)$, and choose from them the one for which $\min\{\LCP(j,i), i-g(i) -j\}$ is maximum; this maximum will be the value we look for. In the following, we show how we can obtain these values efficiently.

First, if $j=L[i]$ for some $i$, we define the value $end[j]=k\leq \LCP(L[j],i)$, where $|w[j..k]|\leq \min\{\LCP(L[j],i), k -L[j]\}$. Basically, for each position $k'\leq k$, the longest factor $x$ starting at $L[j]$ and ending on $k'$ which also occurs at position $i$ is longer than any factor starting on position $j$ and ending on $k'$ which also occurs at position $i$. Now we note that if $j\in L^+[i]$ is the greatest element of this set such that $end[j]\leq i-g(i)-1$, then $\LPdF_g[i]=\min\{\LCP(j,i), i-g(i) - j\}$. Clearly, $end[j]$ can be computed in constant time for each $j$.

To be able to retrieve efficiently for some $i$ the greatest element of this set such that $end[j]\leq i-g(i)-1$ we proceed as follows.

First we define a disjoint-set union-find data structure on the universe $U=[1,n+1)$, where the unions can be only performed between the set containing $i$ and that containing $L[i]$, for all $i$. Initially, each number between $1$ and $n$ is a singleton set in this structure. Moreover, our structure fulfils the conditions that the efficient union-find data structure of \cite{Gabow83} should fulfil: the unions we make form a tree. 

Further, we sort in linear time the numbers $i-g(i)$, for all $i$; we also sort in linear time the numbers $end[k]$ for all $k\leq n$. We now traverse the numbers from $n$ to $1$, in decreasing order. When we reach position $j$ we check whether $j$ equals $end[k]$ for some $k$; if yes, we unite the set containing $k$ with the set containing $end[k]$ for all $k$ such that $end[k]=j$. Then, if $j=i-g(i)$ for some $i$, we just have to return the minimum of the set containing $i$; this value gives exactly 
the greatest element $j\in L^+[i]$ such that $end[j]\leq i-g(i)-1$. So, as described above, we can obtain from it the value of $\LPdF_g[i]$.  The computation of this array follows from the previous remarks.

\begin{figure}
\begin{center}
\includegraphics[width=\linewidth]{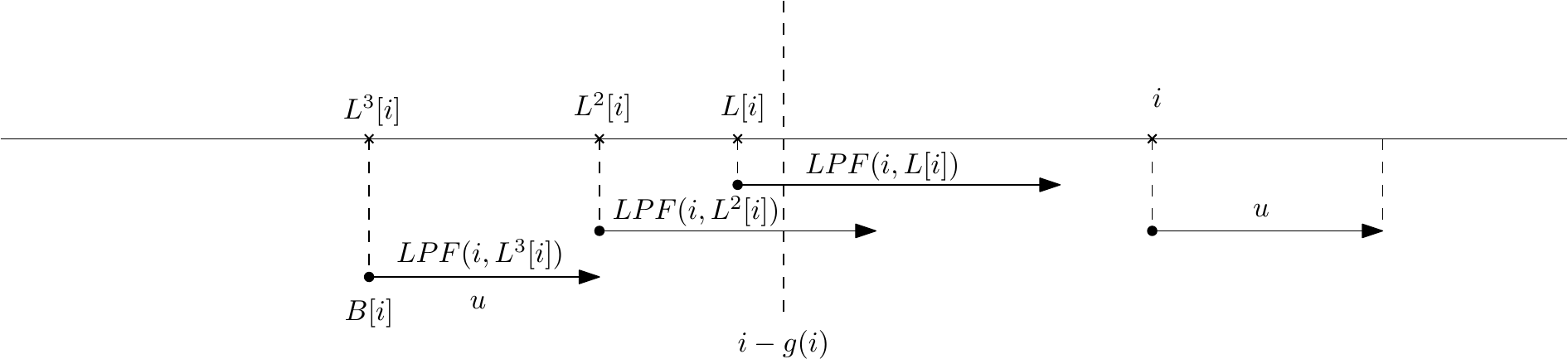}
\end{center}
\caption{Proof of Theorem \ref{LPFg(i)_sol_rep}: We measure the length of the common prefix between each factor starting on $L^k[i]$ and ending on $i-g(i)$ and  the suffix $w[i..n]$. The starting position of the factor that produces the longest such common prefix is $B[i]$. Also, this longest common prefix defines the arm $u$ of the gapped repeat $uvu$.}
\end{figure}

To evaluate the complexity of our approach, note that we do $\bigo(n)$ union operations and $\bigo(n)$ find queries on the union-find data structure. By the results of \cite{Gabow83}, the time needed to construct the union-find data structure and perform these operations on it is also $\bigo(n)$. From every find query we get in constant time the value of a element $\LPdF_g[i]$. So the solution of Problem \ref{LPFg(i)} is linear. 
\end{proof}

Just like in the previous section, the following consequence of the previous two theorems is immediate.

\begin{theorem}
Given $w$ of length $n$ and the values $g(1),\ldots,g(n)$ of $g:\{1,\ldots,n\}\ra \{1,\ldots,n\}$, we can find in linear time the gapped palindrome (or repeat) $u^Rvu$ (respectively, $uvu$) occurring in $w$, with the longest arm $u$, such that if its right arm starts on position $i$ then $|v|\geq g(i)$. 
\end{theorem}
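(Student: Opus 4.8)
The plan is to obtain the statement as an immediate corollary of Theorems~\ref{LPFg(i)_sol} and~\ref{LPFg(i)_sol_rep}, which solve parts (a) and (b) of Problem~\ref{LPFg(i)} in linear time. First I would run the algorithm of Theorem~\ref{LPFg(i)_sol} on $w$ together with the given values $g(1),\ldots,g(n)$ to produce the array $\LPF_g[\cdot]$, and likewise run the algorithm of Theorem~\ref{LPFg(i)_sol_rep} to produce $\LPdF_g[\cdot]$; each of these computations takes $\bigo(n)$ time. By the definitions in Problem~\ref{LPFg(i)}, $\LPF_g[i]$ is exactly the length of the longest arm $u$ of a gapped palindrome $u^Rvu$ whose right arm starts at position $i$ and whose gap satisfies $|v|\geq g(i)$, and $\LPdF_g[i]$ plays the same role for gapped repeats $uvu$.

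Next I would scan the array $\LPF_g[\cdot]$ (respectively $\LPdF_g[\cdot]$) once, in $\bigo(n)$ time, to find an index $i^\star$ realising $\max_i \LPF_g[i]$ (respectively $\max_i \LPdF_g[i]$). Since every admissible gapped palindrome whose right arm starts at $i$ has arm length at most $\LPF_g[i]$, and this bound is attained, the value $\LPF_g[i^\star]$ is precisely the length of the longest arm of any gapped palindrome in $w$ subject to the position-dependent restriction; the same argument applies to repeats via $\LPdF_g[\cdot]$. To output the structure itself rather than just its length, I would, while running the two cited algorithms, also retain the witness positions they already compute: $B_<[i]$ and $B_>[i]$ from the proof of Theorem~\ref{LPFg(i)_sol}, from which the mirrored prefix $w[1..j]^R$ attaining $\LPF_g[i]$ is recovered, and $B[i]$ from the proof of Theorem~\ref{LPFg(i)_sol_rep}, from which the previous occurrence attaining $\LPdF_g[i]$ is recovered. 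Evaluating the appropriate witness at $i^\star$ and reading off the corresponding factor of $w$ then yields $u^Rvu$ (respectively $uvu$) explicitly, still within $\bigo(n)$ time overall.

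There is no real obstacle here: the only thing to be careful about is the (routine) observation that the maximum over $i$ of the per-position array values coincides with the global optimum, which is immediate from the definitions, and that the witness data needed to reconstruct the actual factor is exactly what the algorithms behind Theorems~\ref{LPFg(i)_sol} and~\ref{LPFg(i)_sol_rep} maintain anyway. Hence the whole procedure runs in linear time.
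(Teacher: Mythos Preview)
Your proposal is correct and matches the paper's approach exactly: the paper does not give a separate proof but simply notes that the statement is an immediate consequence of Theorems~\ref{LPFg(i)_sol} and~\ref{LPFg(i)_sol_rep}, i.e., compute $\LPF_g[\cdot]$ and $\LPdF_g[\cdot]$ in linear time and take the maximum entry. Your additional remarks about recovering the witness positions $B_<[\cdot]$, $B_>[\cdot]$, $B[\cdot]$ are fine and make explicit what the paper leaves implicit.
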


\section{$\alpha$-gapped Repeats and Palindromes}

Recall that an $\alpha$-gapped palindrome (respectively, repeat) $w[i..j]vw[i'..j']$ is called maximal if the arms cannot be extended to the right or to the left: neither $w[i..j+1]v'w[i'-1..j']$ nor $w[i-1..j]vw[i'..j'+1]$ (respectively, neither $w[i..j+1]v'w[i'..j'+1]$ nor $w[i-1..j]v''w[i'-1..j']$) are $\alpha$-gapped palindromes (respectively, repeats).
\cite{fct} defined algorithms that find the longest $\alpha$-gapped palindromes and repeats in $\bigo(\alpha n)$ time; these algorithms do not compute the set of all $\alpha$-gapped palindromes or repeats, but just the ones with the longest arms. We present them below.

We first consider the case of $\alpha$-gapped repeats.
\begin{theorem}\label{algorithm_rep_case_aperiodic}
Given a word $w$ of length $n$ and an integer $\alpha\leq  n$, the longest $\alpha$-gapped repeat $uvu$ contained in $w$ can be found in $\bigo(\alpha n)$ time. 
\end{theorem}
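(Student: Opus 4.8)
The plan is to find, for every position where the left arm of a longest $\alpha$-gapped repeat $uvu$ could start, a small set of candidate arm-lengths, and to do this in total time $\bigo(\alpha n)$. First I would fix the scale of the arm: for each $k$ with $0\le k\le\log n$ I look for repeats $uvu$ with $2^k\le|u|<2^{k+1}$. Any such $u$ must start (in its left occurrence) with the basic factor $x=w[i..i+2^k-1]$ for some $i$, and since $|uv|\le\alpha|u|<\alpha 2^{k+1}$, the right occurrence of the prefix $x$ of $u$ must lie inside the window $z_i=w[i..i+(2\alpha+2)2^k-1]$ (taking care at the ends of $w$), a factor of length $\bigo(\alpha 2^k)$. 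So the task reduces to: for each $i$, list the occurrences of the basic factor $x=w[i..i+2^k-1]$ inside $z_i$, and for each such occurrence $j$ extend by an $\LCP$-query $\LCP(i,j)$ to the right and an $\LCS$-query to the left to get the maximal repeat with these two anchored copies of $x$, then keep the longest one whose gap satisfies $|uv|\le\alpha|u|$.

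The key subroutine is exactly Lemma \ref{find_occ_range}: with $\bigo(n\log n)$ preprocessing — which is $\bigo(\alpha n)$ since $\alpha\ge1$, and in fact one wants $\bigo(\alpha n)$ total, so for small $\alpha$ one must be slightly more careful, but $\bigo(n\log n)\subseteq\bigo(\alpha n)$ only when $\alpha\ge\log n$; for $\alpha<\log n$ the bound $\bigo(\alpha n)$ is subsumed by handling the problem with $c=\bigo(\alpha)$ via Remark \ref{rem_DBF} after partitioning the scales, but the cleanest statement is: the occurrences of $x$ in $z_i$ are computed in $\bigo(\log\log n+\alpha)$ time and returned as a compact representation consisting of $\bigo(\alpha)$ isolated occurrences and $\bigo(\alpha)$ runs of period $\per(x)$. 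For each isolated occurrence we do a constant number of $\LCP$/$\LCS$ queries. For each run $p^\beta p'$ (with $|p|=\per(x)$), the occurrences of $x$ form an arithmetic progression, and the corresponding candidate repeats are handled in bulk: one computes, by a constant number of $\LCP$-queries, how far the period $|p|$ extends to the right starting at $i$ and starting at the run, and then the best repeat whose left arm starts in this run is determined by a closed-form choice of which occurrence in the progression to take (the one maximizing the arm length subject to $|uv|\le\alpha|u|$), as in the analysis of runs in the proof of Theorem \ref{sol_LPFgG}. Summing over all $i$ and all $k$, the total number of queries is $\sum_k \bigo(n(\log\log n+\alpha)) = \bigo(\alpha n\log n)$ — which is too slow, so the real argument must charge the work differently: over all $i$ for a fixed $k$, the total number of (isolated occurrence, run) pairs reported is $\bigo(n)$ by the compactness of the DBF representation within overlapping windows (Remark \ref{rem_DBF} with $\ell$ chosen per scale), giving $\bigo(n)$ per scale and $\bigo(n\log n)$ overall; and since the title of the theorem is $\bigo(\alpha n)$ not $\bigo(n\log n)$, one further restricts: arms longer than roughly $n/\alpha$ need only $\bigo(\log\alpha)$ scales handled this way, while short arms $|u|=\bigo(\alpha)$ are handled by a direct scan using Lemma \ref{find_occ_small}-style bit-set machinery on blocks of length $\bigo(\alpha\log n)$.

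The main obstacle — and the part I expect to occupy most of the proof — is exactly this bookkeeping to get $\bigo(\alpha n)$ rather than $\bigo(n\log n)$: one has to split arm lengths into "large" ($|u|\ge n/\alpha$, only $\bigo(\log\alpha)$ relevant scales, handled by $\LCP$/DBF as above) and "small" ($|u|<n/\alpha$), and for small arms exploit that the window $z_i$ has length $\bigo(\alpha|u|)=\bigo(n)$ but that across all $i$ the amortized number of reported occurrences is $\bigo(\alpha n)$ total by a packing/overlap argument on runs. A secondary subtlety is correctness of the run-handling: one must argue that restricting to the single "best" occurrence inside each maximal run of copies of $x$ does not miss a longer repeat, which follows because within a run all these copies induce repeats with the same right-extension, so the longest valid one is the extreme feasible element of the progression; and one must double-check the boundary cases where $uv$ or the window runs off the ends of $w$. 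Once these are in place, outputting the single longest repeat over all $i$ and $k$ is immediate, and the total time is $\bigo(\alpha n)$ as claimed.
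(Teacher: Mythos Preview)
Your high-level anchoring idea is right: fix a scale $k$, anchor on a basic factor of length $2^k$ in one arm, locate its copies in a window of length $\bigo(\alpha 2^k)$ via the compact occurrence representation, and extend by $\LCP$/$\LCSuf$ queries, treating runs in bulk. You also correctly spot the obstacle: iterating over all $n$ starting positions at each of the $\log n$ scales yields $\bigo(\alpha n\log n)$, and the $\bigo(n\log n)$ preprocessing of Lemma~\ref{find_occ_range} applied to $w$ is already too expensive when $\alpha<\log n$.

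The gap is that your proposed remedy --- splitting arm lengths at the threshold $n/\alpha$ and appealing to an amortized overlap argument for small arms --- does not yield the bound and is not how the paper proceeds. The missing idea is a \emph{block encoding}: form the word $w'$ of length $n/\log n$ over an integer alphabet, where each symbol encodes a block of $\log n$ letters of $w$. Then (i) the DBF/occurrence preprocessing of Lemma~\ref{find_occ_range} is applied to $w'$, costing $\bigo(|w'|\log|w'|)=\bigo(n)$; and (ii) at scale $k$ one does \emph{not} try every position, but only the $n/(2^k\log n)$ aligned $k$-blocks $z$ as anchors in the right arm, and for each such $z$ tries the $\log n$ possible offsets $y$ of length $2^{k-1}\log n$ inside $z$ (since the matching copy in the left arm must align to block boundaries in $w'$, some shift of the anchor does). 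Each $(z,y)$ pair is handled in $\bigo(\log\log n+\alpha)$ time, and $\sum_k \frac{n}{2^k\log n}\cdot\log n\cdot\bigo(\log\log n+\alpha)=\bigo(n(\log\log n+\alpha))$. This is $\bigo(\alpha n)$ when $\alpha>\log n$; for $\alpha\le\log n$ the paper runs the scheme only for $k\ge\log\log n$, then recurses once on overlapping windows of length $\bigo((\log n)^2)$, and finally handles arms of length $\bigo(\log n)$ with the bit-set machinery of Lemma~\ref{find_occ_small}. Without this sparsification-by-blocks, your per-scale cost is $\Theta(n)$ anchors rather than $\Theta(n/(2^k\log n))$, and no large/small split along $n/\alpha$ recovers the geometric sum that makes the total $\bigo(\alpha n)$.
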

\begin{proof}
Informally, our approach works as follows (see also Figure \ref{aper}). For each $k$, we try to find the longest $\alpha$-gapped repeat $u_1vu_2=uvu$, with $u_1=u_2=u$, and $2^{k+1} \log n \leq |u| \leq 2^{k+2}\log n$. In each such repeat, the right arm $u_2$ must contain a factor (called $k$-block) $z$, of length $2^k \log n$, starting on a position of the form $j2^k \log n+1$. So, we try each such factor $z$, fixing in this way a range of the input word where $u_2$ could appear. Now, $u_1$ must also contain a copy of $z$. However, it is not mandatory that this copy of $z$ occurs nicely aligned to its original occurrence; that is, the copy of $z$ does not necessarily occur on a position of the form $i \log n+1$. But, it is not hard to see that $z$ has a factor $y$ of length~$2^{k-1}\log n$, starting in its first $\log n$ positions and whose corresponding occurrence in $u_1$ starts on a position of the form $i \log n+1$. Further, we can use the fact that $u_1vu_2$ is $\alpha$-gapped and apply Lemma \ref{find_occ_range} to a suitable encoding of the input word to locate in constant time for each $y$ starting in the first $\log n$ positions of $z$ all possible occurrences of $y$ on a position of the form $i \log n+1$, occurring not more than $(8\alpha+2) |y|$ positions to the left of $z$. Intuitively, each occurrence of $y$ found in this way fixes a range where $u_1$ might occur in $w$, such that $u_1vu_2$ is $\alpha$-gapped. So, around each such occurrence of $y$ (supposedly, in the range corresponding to $u_1$) and around the $y$ from the original occurrence of $z$ we try to effectively construct the arms $u_1$ and $u_2$, respectively, and see if we really get an $\alpha$-gapped repeat. In the end, we just return the longest repeat we obtained, going through all the possible choices for $z$ and the corresponding $y$'s. We describe in the following an $\bigo(\alpha n)$ time implementation of this approach.

\begin{figure}
\begin{center}
\includegraphics[width=\linewidth]{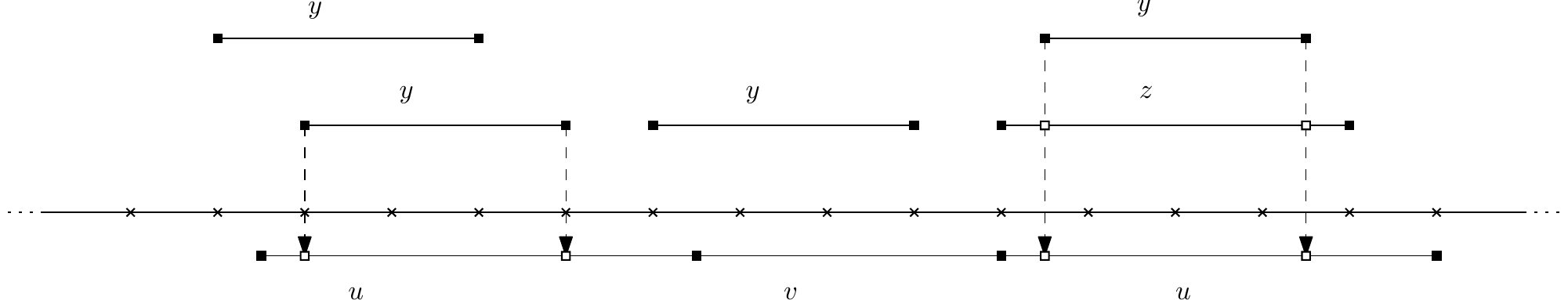}
\end{center}
\caption{
Proof of Theorem \ref{algorithm_rep_case_aperiodic}: Segment of $w$, split into blocks of length $\log n$. In this segment, $z$ is a $k$-block of length $2^k\log n$. For each factor $y$, of length $2^{k-1}\log n$, occurring in the first $\log n$ symbols of $z$ (not necessarily a sequence of blocks), we find the occurrences of $y$ that correspond to sequences of $2^{k-1}$ blocks, and start at most $(8\alpha+2)|y|=(4\alpha+1)\cdot 2^{k}\log n$ symbols (or, alternatively, $(4\alpha +1)\cdot 2^{k}$ blocks) to the left of the considered $z$. These $y$ factors may appear as runs or as separate occurrences. Some of them can be extended to form an $\alpha$-gapped repeat $u_1vu_2=uvu$ such that the respective occurrence of $y$ has the same offset in $u_1$ as the initial $y$ in $u_2$.}
\label{aper}
\end{figure} 

The first step of the algorithm is to construct a word $w'$, of length $\frac{n}{\log n}$, whose symbols, called {\em blocks}, encode $\log n $ consecutive symbols of $w$ grouped together. Basically, now we have two versions of the word $w$: the original one, and the one where it is split in blocks. Clearly, the blocks can be encoded into numbers between~$1$ and $n$ in linear time. Indeed, we produce the suffix array of $w$, and group together the suffixes that share a common prefix of length $\log n$. Then, we label (injectively) the groups obtained in this way with numbers between $1$ and $n$. Finally, a block is encoded as the label of the suffix that starts with that block. Consequently, we can construct in $\bigo(n)$ time the suffix arrays and $\LCP$-data structures for both $w$ and $w'$. We can also build in $\bigo(n)$ time the data structures of Lemma \ref{find_occ_range} for the word $w'$. 

Now, we try to find the longest $\alpha$-gapped repeat $u_1vu_2=uvu$ of $w$, with $u_1=u_2=u$, and $2^{k+1} \log n \leq |u| \leq 2^{k+2}\log n$, for each $k\geq 1$ if $\alpha>\log n$ or $k\geq \log \log n$, otherwise. Let us consider now one such $k$. We split again the word $w$, this time in factors of length $2^k \log n$, called {\em $k$-blocks}. For simplicity, assume that each split is exact. 

Clearly, if an $\alpha$-gapped repeat $u_1vu_2$ like above exists, then $u_2$ contains at least one of the $k$-blocks. Consider such a $k$-block $z$ and assume it is the leftmost $k$-block of $u_2$. On the other hand, $u_1$ contains at least $2^{k+1}-1$ consecutive blocks from $w'$, so there should be a factor $y$ of $w$ corresponding to $2^{k-1}$ of these $(2^{k+1}-1)$ blocks which is also a factor of $z$, and starts on one of the first $\log n$ positions of $z$. Now, for each $k$-block $z$ and each $y$, with $|y|=2^{k-1}\log n$ and starting in its prefix of length $\log n$, we check whether there are occurrences of $y$ in $w$ ending before~$z$ that correspond to exactly $2^{k-1}$ consecutive blocks of $w'$ (one of them should be the occurrence of $y$ in~$u_1$); note that the occurrence of $y$ in $z$ may not necessarily correspond to a group of $2^{k-1}$ consecutive blocks, but the one from $u_1$ should. As $u_1vu_2$ is $\alpha$-gapped and $|u_1|\leq 2^{k+2}\log n$, then the occurrence of $y$ from $u_1$ starts at most $(4\alpha+1)2^{k} \log n$ symbols before $z$ (as $|u_1v|\leq \alpha |u_2|\leq \alpha 2^{k+2}\log n$, and $z$ occurs with an offset of at most $2^{k}\log n$ symbols in $u_2$). So, the block-encoding of the occurrence of the factor $y$ from the left arm $u_1$ should occur in a factor of $(4\alpha+1) 2^{k}$ blocks of $w'$, to the left of the blocks corresponding to~$z$.

For the current $z$ and an $y$ as above, we check whether there exists a factor $y'$ of $w'$ whose blocks correspond to $y$, by binary searching the suffix array of $w'$ (using $\LCP$-queries on $w$ to compare the factors of $\log n$ symbols of $y$ and the blocks of $w'$, at each step of the search). If not, we try another possible $y$. If yes, using Lemma \ref{find_occ_range} for $w'$, we retrieve (in $\bigo(\log\log |w'|+\alpha)$ time) a representation of the occurrences of $y'$ in the range of $(4\alpha+1)2^{k}$ blocks of~$w'$ occurring before the blocks of $z$; this range corresponds to a range of length $(4\alpha+1)2^{k} \log n$~of~$w$. 

If $y'$ is aperiodic then there are only $\bigo(\alpha)$ such occurrences. Each factor of $w$ corresponding to one of these occurrences might be the occurrence of $y$ from $u_1$, so we try to extend both the factor corresponding to the respective occurrence of $y'$ from $w'$ and the factor $y$ from $z$ in a similar way to the left and right to see whether we obtain the longest $\alpha$-gapped repeat. If $y'$ is periodic (so, $y$ is periodic as well), we know that the representation of its occurrences consists of $\bigo(\alpha)$ separate occurrences and $\bigo(\alpha)$ runs in which $y'$ occurs (see Preliminaries). The separate occurrences are treated as above. Each run $r'$ of $w'$ where $y'$ occurs is treated differently, depending on whether its corresponding run $r$ from $w$ (made of the blocks corresponding to~$r'$) supposedly starts inside $u_1$, ends inside $u_1$, or both starts and ends inside $u_1$. We can check each of these three cases separately, each time trying to establish a correspondence between $r$ and the run containing the occurrence of $y$ from $z$, which should also start, end, or both start or end inside $u_2$, respectively. Then we define $u_1$ and $u_2$ as the longest equal factors containing these matching runs on matching positions. Hence, for each separate occurrence of $y'$ or run of such occurrences, we may find an $\alpha$-gapped repeat in $w$; we just store the longest. This whole process takes $\bigo(\alpha)$ time.

If $\alpha > \log n$, we run this algorithm for all $k\geq 1$ and find the longest $\alpha$-gapped repeat $uvu$, with $4\log n \leq |u|$, in $\bigo(\alpha n)$ time. 

If $\alpha \leq \log n$, we run this algorithm for all $k\geq \log \log n$ and find the longest $\alpha$-gapped repeat $uvu$, with $2^{\log\log n+1} \log n \leq |u|$, in $\bigo(\alpha n)$ time.  If our algorithm did not find such a repeat, we should look for $\alpha$-gapped repeats with shorter arm. Now, $|u|$ is upper bounded by $2^{\log\log n +1}\log n=2 (\log n)^2$, so $|uvu|\leq \ell_0$, for $\ell_0=\alpha\cdot 2(\log n)^2+2(\log n)^2=(2\alpha+2)(\log n)^2$. Such an $\alpha$-gapped repeat $uvu$ is, thus, contained in (at least) one factor of length $ 2\ell_0$ of $w$, starting on a position of the form $1+m\ell_0$ for $m\geq 0$.  So, we take the factors $w[1+m\ell_0..(m+2)\ell_0]$ of $w$, for $m\geq 0$, and apply for each such factor, separately, the same strategy as above. As an important detail, before running the algorithm presented above, we first encode the symbols of $w[1+m\ell_0..(m+2)\ell_0]$, which were numbers between $1$ and $n$, to numbers between $1$ and $2 \ell_0$; again, this is done by looking at the suffix array of $w$, and it allows us to apply recursively the algorithm described before. The total time needed to do that is $\bigo\left(\alpha\ell_0 \frac{n}{\ell_0}\right)=\bigo(\alpha n)$. Hence, we found the longest $\alpha$-gapped repeats $uvu$, with $2^{\log \log (2\ell_0) +1} \log (2\ell_0) \leq |u| $. If our search was still fruitless, we need to search $\alpha$-gapped repeats with $|u|\leq 2^{\log \log (2\ell_0) +1} \log (2\ell_0)\leq 16\log n$ (a rough estimation, based on the fact that $\alpha\leq \log n$). 

So, in both cases, $\alpha > \log n$ or $\alpha \leq \log n$, it is enough to find the longest $\alpha$-gapped repeats with $|u|\leq 16 \log n$. The right arm $u_2$ of such a repeat is contained in a factor $w[m\log n+1.. (m+17)\log n]$ of $w$, while $u_1$ surely occurs in a factor $x=w[m\log n-16\alpha\log n+1.. (m+17)\log n]$ (or, if $m\log n-16\alpha\log n+1\leq 0$, then in a factor $x=w[1.. (m+17)\log n]$); in total, there are $\bigo(n/ \log n)$ such $x$ factors. As before, we can process them in linear time in order to re-encode each of them as a word over the alphabet consisting of numbers which are in $\bigo(\alpha n)$. In each of these factors, we look for $\alpha$-gapped repeats $u_1vu_2=uvu$ with $2^{k+1}\leq |u|\leq 2^{k+2}$, where $0\leq k\leq \log\log n + 2$ (the case $|u|<2$ is trivial), and $u_2$ occurs in the suffix of length $17\log n$ of this factor. Moreover, $u_2$ contains a factor $y$ of the form $x[j2^k+1..(j+1)2^{k}]$. Using Lemma \ref{find_occ_small} and Remark \ref{find_occ_small_range}, 
for each such possible $y$ occurring in the suffix of length $17\log n$ of $x$, we assume it is the one contained in $u_2$ and we produce in $\bigo(\alpha)$ time a representation of the $\bigo(\alpha)$ occurrences of $y$ in the factor of length $(4\alpha+1) |y|$ preceding $y$. One of these should be the occurrence of $y$ from $u_1$. Similarly to the previous cases, we check in $\bigo(\alpha)$ time which is the longest $\alpha$-gapped repeat obtained by pairing one of these occurrences to $y$, and extending them similarly to the left and right. The time needed for this is $\bigo(\alpha \log n)$ per each of the $\bigo(\frac{n}{\log n})$ factors $x$ defined above. 
This adds up to an overall complexity of $\bigo(\alpha n)$, again.

This was the last case we needed to consider. In conclusion, we can find the longest $\alpha$-gapped repeat $uvu$ in $\bigo(\alpha n)$ time.
\end{proof}

Further we discuss the case of $\alpha$-gapped palindromes.

\begin{theorem}\label{algorithm_case_aperiodic}
Given a word $w$ of length $n$, the longest $\alpha$-gapped palindrome $u^Rvu$ contained in $w$ can be determined in $\bigo(\alpha n)$ time. 
\end{theorem}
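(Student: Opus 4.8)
The plan is to mirror the proof of Theorem~\ref{algorithm_rep_case_aperiodic}, replacing occurrences of a block $y$ by occurrences of its reversal $y^R$, and working with a suitable encoding of $w$ that lets us treat the palindromic symmetry the same way the repeat case treated translational symmetry. First I would build, exactly as before, the block-word $w'$ of length $n/\log n$ whose symbols encode $\log n$ consecutive symbols of $w$; additionally I construct the block-word $\widetilde{w'}$ encoding the \emph{reversed} groups, or equivalently I work with $w0w^R$ so that $\LCP$ queries of the form $\LCP(w[1..j]^R, w[1..i])$ are available in $\bigo(1)$ time, as described in the Preliminaries. I also build the suffix arrays, $\LCP$ data structures, and the data structures of Lemma~\ref{find_occ_range} for $w'$; this takes $\bigo(n)$ time.

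Next, for each $k$ (with $k\geq 1$ if $\alpha>\log n$ and $k\geq\log\log n$ otherwise) I look for the longest $\alpha$-gapped palindrome $u^Rvu$ with $2^{k+1}\log n\leq |u|\leq 2^{k+2}\log n$. As in the repeat case, the right arm $u$ contains a $k$-block $z$ of length $2^k\log n$ starting at a position of the form $j2^k\log n+1$; I iterate over all such $z$. Now the left arm $u^R$ must contain an occurrence of $z^R$. Picking a factor $y$ of $z$ of length $2^{k-1}\log n$ starting within the first $\log n$ positions of $z$, its mirror image $y^R$ must occur inside $u^R$, and one can choose $y$ so that this occurrence of $y^R$ aligns to a block boundary of $w'$ (i.e.\ corresponds to $2^{k-1}$ consecutive blocks read in reverse). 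Since $u^Rvu$ is $\alpha$-gapped and $|u^Rv|\leq\alpha|u|\leq\alpha 2^{k+2}\log n$, the occurrence of $y^R$ in $u^R$ starts within $\bigo(\alpha 2^k\log n)$ symbols to the left of $z$. So I binary-search the suffix array of $w'$ (or $\widetilde{w'}$) for the block-encoding of $y^R$, and then apply Lemma~\ref{find_occ_range} to obtain in $\bigo(\log\log n + \alpha)$ time a compact representation of the $\bigo(\alpha)$ occurrences (isolated occurrences plus runs) of that block-factor inside the relevant window of $\bigo(\alpha 2^k)$ blocks. For each isolated occurrence I extend outward from $y$ in $z$ and from the matched copy, checking the palindromic extension using $\LCP$ queries on $w0w^R$, and keep the longest valid $\alpha$-gapped palindrome; for each run I again split into the three subcases (the run starts inside $u^R$, ends inside it, or is contained in it) and match it against the run containing $y$ in $z$, defining $u^R$ and $u$ as the longest palindromically matching extensions. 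Each $z$--$y$ pair is handled in $\bigo(\alpha)$ time, so all $k$ together cost $\bigo(\alpha n)$.

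Finally, exactly as in Theorem~\ref{algorithm_rep_case_aperiodic}, the cases of short arms are handled separately: when $\alpha\leq\log n$ one recurses on overlapping factors $w[1+m\ell_0..(m+2)\ell_0]$ of length $2\ell_0$ with $\ell_0=(2\alpha+2)(\log n)^2$, re-encoding their symbols into a small alphabet; and in all cases it remains to find palindromes with $|u|\leq 16\log n$, which are caught by looking at the $\bigo(n/\log n)$ factors $x$ of length $\bigo(\alpha\log n)$ that must contain any such palindrome, re-encoding them, and applying Lemma~\ref{find_occ_small} and Remark~\ref{find_occ_small_range} to locate in $\bigo(\alpha)$ time the $\bigo(\alpha)$ candidate occurrences of the reversed small block. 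Summing over all factors gives $\bigo(\alpha n)$ again.

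The main obstacle I anticipate is handling the periodic case cleanly: when $y$ (hence $z$) is periodic, its reversal is periodic with the reversed period, and the runs in $w'$ encoding occurrences of $y^R$ must be correctly paired, up to reversal and alignment offsets, with the run of $w$ containing $y$ inside $u$. Getting the offsets right so that the chosen occurrence of $y^R$ sits at the position inside $u^R$ that mirrors the position of $y$ inside $u$ — and then extending to the maximal palindromic arms without missing the longest one — is the delicate bookkeeping step, though it is structurally identical to the repeat case and so can be carried out by the same argument with "translation" replaced by "reflection."
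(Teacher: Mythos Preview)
Your proposal is correct and follows essentially the same route as the paper: both proofs mirror Theorem~\ref{algorithm_rep_case_aperiodic} by replacing the search for occurrences of $y$ with occurrences of $y^R$, using the same block encoding $w'$, the same $k$-block decomposition, Lemma~\ref{find_occ_range} for locating the $\bigo(\alpha)$ candidate occurrences/runs, the same case split $\alpha>\log n$ versus $\alpha\leq\log n$ with the $\ell_0$-length recursion, and the final short-arm case via Lemma~\ref{find_occ_small} and Remark~\ref{find_occ_small_range}. The only cosmetic difference is that you mention an auxiliary reversed block-word $\widetilde{w'}$, whereas the paper simply binary-searches the suffix array of $w'$ using $\LCP$ queries on $ww^R$ to compare blocks of $w'$ against the $\log n$-length factors of $y^R$; the effect is the same.
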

\begin{proof} 
Our approach is similar to the case of repeats, presented in the previous theorem. For each $k$, we try to find the longest $\alpha$-gapped palindrome $u^Rvu$, with $2^{k+1} \log n \leq |u| \leq 2^{k+2}\log n$. In each such $\alpha$-gapped palindrome, the right arm $u$ must contain a factor $z$, of length $2^k \log n$, starting on a position of the form $j2^k \log n+1$. So, we try each such factor $z$, fixing in this way a range of the input word where $u$ could appear. Now, $u^R$ must contain a factor $z^R$; however, it is not mandatory that this factor  occurs at a position of the form $i \log n+1$. But, just like before, it is not hard to see that $z$ has a factor $y$, of length~$2^{k-1}\log n$, that starts in its first $\log n$ positions and whose corresponding occurrence $y^R$ from $u^R$ should start on a position of the form $i \log n+1$. 
Further, we can use the fact that $u^Rvu$ is $\alpha$-gapped and apply Lemma \ref{find_occ_range} to an encoding of the input word to locate in constant time for each $y$ starting in the first $\log n$ positions of $z$ all possible occurrences of $y^R$ on a position of the form $i \log n+1$, occurring not more than $(8\alpha+2) |y|$ positions to the left of $z$. Intuitively, each occurrence of $y$ found in this way fixes a range where $u^R$ might occur in $w$, such that $u^Rvu$ is $\alpha$-gapped. So, around each such occurrence of $y^R$ (supposedly, in the range corresponding to $u^R$) and around the $y$ from $z$ we try to effectively construct the arms $u^R$ and $u$, respectively, and see if we get the longest $\alpha$-gapped palindrome. This approach can be implemented in $\bigo(\alpha n)$ time, just like in the case of $\alpha$-gapped repeats.

The first step of the algorithm is to construct a word $w'$, of length $\frac{n}{\log n}$, whose symbols, called {\em blocks}, encode $\log n $ consecutive symbols of $w$ grouped together. Now we have two versions of the word $w$: the original one, and the one made of blocks. As before, the blocks can be encoded as numbers between $1$ and $n$ in linear time. We construct in $\bigo(n)$ time the suffix arrays and $\LCP$-data structures for both $w$ and $w'$, and we build in $\bigo(n)$ time the data structures of Lemma \ref{find_occ_range} for the word $w'$.

Considering the original word $w$, we find the $\alpha$-gapped palindrome $u^Rvu$ with $2^{k+1} \log n \leq |u| \leq 2^{k+2}\log n$, for each $k\geq \log \log n$. To this end, we split again the word $w$, this time in factors of length $2^k \log n$, called {\em $k$-blocks}. For simplicity, assume that each split we do is exact; to achieve this, we may have to pad the word with a new symbol in a suitable manner. 

If an $\alpha$-gapped palindrome $u^Rvu$ of the kind we search for exists, then $u$ contains at least one of the $k$-blocks. Consider such a $k$-block $z$ and assume it is the leftmost $k$-block of $u$. On the other hand, $u^R$ contains at least $2^{k+1}-1$ consecutive blocks from $w'$, so there should be a factor $y$ of $w$ with $y^R$ corresponding to $2^{k-1}$ of these $(2^{k+1}-1)$ blocks such that $y$ is a factor of $z$ that starts on one of its first $\log n$ positions. Now, for each $k$-block $z$ and each $y$ starting it its prefix of length $\log n$, with $|y|=2^{k-1}\log n$, we check whether there are occurrences of $y^R$ ending before~$z$ (one of them should be the occurrence of $y^R$ in $u^R$) that correspond to exactly $2^{k-1}$ consecutive blocks of $w'$. Note that the occurrence of $y$ in $z$ may not necessarily correspond to a group of $2^{k-1}$ consecutive blocks, but the one of $y^R$ from $u^R$ do. As $u^Rvu$ is $\alpha$-gapped and $|u^Rv|\leq \alpha 2^{k+2}\log n$, then the occurrence of $y^R$ from $u^R$ starts at most $(4\alpha + 1)\cdot 2^{k} \log n$ symbols before $z$. So, the block-encoding of $y^R$ should occur in a factor of  $(4\alpha +1)\cdot 2^k$ blocks of $w'$, to the left of the blocks corresponding to~$z$.

\begin{figure}\begin{center}
\includegraphics[width=\linewidth]{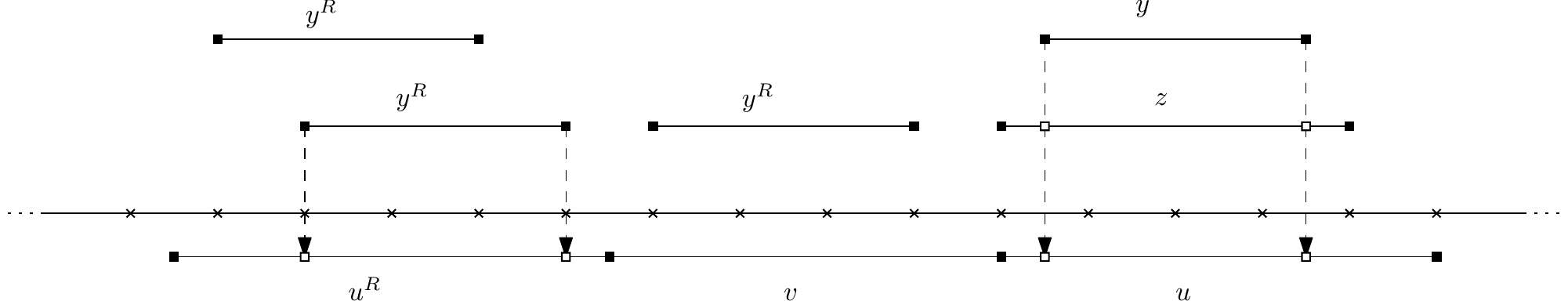}
\end{center}
\caption{
Proof of Theorem \ref{algorithm_case_aperiodic}: Segment of $w$, split into blocks of length $\log n$. In this segment, $z$ is a $k$-block of length $2^k\log n$. For each factor $y$, of length $2^{k-1}\log n$, occurring in the first $\log n$ symbols of $z$ (not necessarily a sequence of blocks), we find the occurrences of $y^R$ that correspond to sequences of $2^{k-1}$ blocks, and start at most $(8\alpha+2)|y|=(4\alpha+1)\cdot 2^{k}\log n$ symbols (or, alternatively, $(4\alpha +1)\cdot 2^{k}$ blocks) to the left of the considered $z$. These $y^R$ factors may appear as runs or as separate occurrences. Some of them can be extended to form an $\alpha$-gapped palindrome $u^Rvu$ such that the respective occurrence of $y^R$ is the mirror image of the initial $y$ in $u_2$.}
\end{figure}

For the considered $z$ and some $y$ as above, by binary searching the suffix array of $w'$ (using $\LCP$-queries on the word $ww^R$ to compare the factors of $\log n$ symbols of $y^R$ and the blocks of $w'$, at each step of the search), we check whether there exists a factor $y'$ of $w'$ whose blocks correspond to $y^R$. If not, we try another possible $y$. If yes, we continue. Using Lemma \ref{find_occ_range} for $w'$, we obtain a representation of the occurrences of $y'$ in the range of $(4\alpha+1)\cdot 2^k$ blocks of $w'$, occurring before the blocks that correspond to $z$; note that this range corresponds to a range of length $(4\alpha+1)\cdot 2^k \log n$ of $w$. 

If $y'$ is aperiodic then there are $\bigo(\alpha)$ such occurrences. Each factor corresponding to these occurrences might be the occurrence of $y^R$ from $u^R$, so we try to extend it and the $y$ from $z$ in a similar way to see whether we obtain the longest $\alpha$-gapped palindrome. 

If $y'$ is periodic (so, $y$ is periodic as well), the representation of its occurrences consists of $\bigo(\alpha)$ separate occurrences and $\bigo(\alpha)$ runs in which $y'$ occurs. The separate occurrences are treated as above. Each run $r'$ of $w'$ where $y'$ occurs is treated differently, depending on whether its corresponding run $r$ from $w$ (made of the blocks corresponding to $r'$) supposedly starts inside $u^R$, ends inside $u^R$, or both starts and ends inside $u^R$. We can check all these three cases separately, each time trying to establish a correspondence between $r$ and the run containing the occurrence of $y$ from $z$, which should also end, start, or both start or end inside $u$, respectively. Then we define $u^R$ and $u$ as the longest mirrored words containing these matching runs on mirrored positions. In this way, for each separate occurrence of $y'$ or run of such occurrences, we found a $\alpha$-gapped palindrome in $w$; we just store the longest. This whole process takes $\bigo(1)$ time for each run.

If $\alpha > \log n$, we run this algorithm for all $k\geq 1$ and find the longest $\alpha$-gapped palindrome $u^Rvu$, with $4\log n \leq |u|$, in $\bigo(\alpha n)$ time. 

If $\alpha \leq \log n$, we run this algorithm for all $k\geq \log \log n$ and find the longest $\alpha$-gapped palindrome $u^Rvu$, with $2^{\log\log n+1} \log n \leq |u|$, in $\bigo(\alpha n)$ time. If our algorithm did not find such a palindrome, we should look for $\alpha$-gapped palindrome with shorter arm. The length of this arm, $|u|$, is now upper bounded by $2^{\log\log n +1}\log n=2 (\log n)^2$, so $|u^Rvu|\leq \ell_0$, for $\ell_0=\alpha\cdot 2(\log n)^2+2(\log n)^2=(2\alpha+2)(\log n)^2$. Such an $\alpha$-gapped palindrome $u^Rvu$ is, thus, contained in (at least) one factor of length $ 2\ell_0$ of $w$, starting on a position of the form $1+m\ell_0$ for $m\geq 0$.  So, we take the factors $w[1+m\ell_0..(m+2)\ell_0]$ of $w$, for $m\geq 0$, and apply for each such factor, separately, the same strategy as above. As noted in the previous proof, these words can be re-encoded in linear time as words over an alphabet of size $\bigo(\log n)$. The total time needed to do that is $\bigo\left(\alpha\ell_0 \frac{n}{\ell_0}\right)=\bigo(\alpha n)$. Hence, we found the longest $\alpha$-gapped palindromes $u^Rvu$, with $2^{\log \log (2\ell_0) +1} \log (2\ell_0) \leq |u| $. If our search was still fruitless, we search $\alpha$-gapped palindromes with $|u|\leq 2^{\log \log (2\ell_0) +1} \log (2\ell_0)\leq 16\log n$ (a rough estimation, based on the fact that $\alpha\leq \log \log n$). 

Now in both cases (when $\alpha > \log n$ or $\alpha \leq \log n$) it is enough to find the $\alpha$-gapped palindromes with $|u|\leq 16 \log n$. The right arm $u$ of such a repeat is contained in a factor $w[m\log n+1.. (m+17)\log n]$ of $w$, while $u^R$ surely occurs in a factor $x=w[m\log n-16\alpha\log n+1.. (m+17)\log n]$ (or, if $m\log n-16\alpha\log n+1\leq 0$, then in a factor $x=w[1.. (m+17)\log n]$). In total, there are $\bigo(n/ \log n)$ such $x$ factors, and after a linear time preprocessing we can ensure that they are all over integer alphabets with respect to their length. In each of them, we look for $\alpha$-gapped palindromes $u^Rvu$ with $2^{k+1}\leq |u|\leq 2^{k+2}$, where $0\leq k\leq \log\log n + 1$ (the case $|u|<2$ is trivial), and $u_2$ occurs in the suffix of length $9\log n$ of this factor. Moreover, $u$ contains a factor $y$ of the form $x[j2^k+1..(j+1)2^{k}]$. Using Lemma \ref{find_occ_small} and Remark \ref{find_occ_small_range}, 
for each such possible $y$ occurring in the suffix of length $17\log n$ of $x$, we assume it is the one contained in $u$ and we produce in $\bigo(\alpha)$ time a representation of the $\bigo(\alpha)$ occurrences of $y^R$ in the factor of length $(4\alpha+1) |y|$ preceding $y$. One of these should be the occurrence of $y^R$ from $u^R$. Similarly to the previous cases, we check in $\bigo(\alpha)$ time which is the longest $\alpha$-gapped palindrome obtained by pairing one of these occurrences to $y$, and extending them similarly to the left and right. The time needed for this is $\bigo(\alpha \log n)$ per each of the $\bigo(\frac{n}{\log n})$ factors $x$ defined above. 
This adds up to an overall complexity of $\bigo(\alpha n)$, again.

This was the last case we needed to consider. In conclusion, we can find the longest $\alpha$-gapped palindrome $u^Rvu$ in $\bigo(\alpha n)$ time.
\end{proof}

The algorithms presented in the previous two proofs were non-trivially extended by \cite{STACS2016} to algorithms that construct the sets of all maximal $\alpha$-gapped repeats and maximal $\alpha$-gapped palindromes (which have a non-empty gap) in $\bigo(\alpha n)$ time. Essentially, instead of looking for the longest $\alpha$-gapped repeat (or palindrome) that contains a certain basic factor (as we did in this proof), we look for all the maximal $\alpha$-gapped repeats (respectively, palindromes) that contain the respective basic factor. Using a series of deep combinatorial observations on the structure of these maximal gapped repeats or palindromes, one can output all of them in $\bigo(1)$ time per repeat or palindrome. Using the crucial fact that the number of both $\alpha$-gapped repeats with non-empty gap as well as $\alpha$-gapped palindromes with non-empty gap is $\bigo(\alpha n)$ (in fact, the main result of \cite{STACS2016}), we get that they can all be identified and output in $\bigo(\alpha n)$ time.

Accordingly, we further show that given the set $S$ of all factors of a word which are maximal $\alpha$-gapped palindromes (respectively, repeats) we can compute the array $\LP$ (respectively, $\LR$) for that word in $\bigo(n+|S|)$ time. As a consequence, for constant $\alpha$, these problems can be solved in linear time. 

Note that, in this case, our strategy is fundamentally different from the ones we used in the cases of Problems \ref{LPFgG} and \ref{LPFg(i)}. There we were able to construct the desired data structures without constructing first the set of all maximal gapped palindromes and maximal gapped repeats whose gap fulfilled the required restrictions. Here we first find all maximal $\alpha$-gapped repeats and $\alpha$-gapped palindromes using the algorithms of \cite{STACS2016}, and then compute directly the desired data structures. 

\begin{theorem}\label{proof_LLAP}
Problem \ref{LLAP}(a) can be solved in $\bigo(\alpha n)$ time.
\end{theorem}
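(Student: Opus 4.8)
The plan is to reduce the construction of $\LP[\cdot]$ to a single weighted interval‑stabbing computation in the sense of Lemma~\ref{stabbing}, fed with the set $S$ of all maximal $\alpha$‑gapped palindromes with non‑empty gap (produced in $\bigo(\alpha n)$ time, with $|S|=\bigo(\alpha n)$, by the algorithm of \cite{STACS2016}) together with the empty‑gap data: for each position $i$ the length $E[i]$ of the longest $u$ with $w[i-|u|..i-1]=(w[i..i+|u|-1])^R$, computed in $\bigo(n)$ time by Manacher's algorithm~\cite{Manacher}.

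A gapped palindrome $u^Rvu$ with right arm $u=w[i..i+|u|-1]$ is determined by the triple (right‑arm start $i$, arm length $m=|u|$, gap length $g=|v|$). The structural heart of the argument concerns two moves. \emph{Extending outward} (prepend a symbol to $u^R$ and append one to $u$) keeps $i$ and $g$ fixed, never breaks the $\alpha$‑gapped condition $g\le(\alpha-1)m$ (since $(\alpha-1)m\le(\alpha-1)(m+1)$), and does not decrease the arm; \emph{extending inward} (grow both arms toward the gap) decreases $g$ by $2$, decreases $i$ by $1$, increases $m$ by $1$, preserves the $\alpha$‑gapped condition, and preserves outward‑maximality (the outer endpoints, hence the obstruction to outward growth, are unchanged). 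Hence, starting from an optimal gapped palindrome at a position $i$, one may first extend it outward until it is outward‑maximal — without changing $i$, changing $g$, or decreasing the arm — and then extend inward until the process stops (a mismatch at the inner ends, or the gap collapsing). The endpoint is a palindrome that is maximal in both directions: if its gap is $\ge1$ it lies in $S$, and if its gap is $0$ it is the maximal empty‑gap palindrome recorded by $E[\cdot]$; the original optimal palindrome is recovered from it by a bounded number $t$ of inward contractions, and since the $\alpha$‑gapped property was preserved throughout, this $t$ is admissible in the sense below. Conversely, every $P$ of right‑arm start $a_P$, arm length $M_P$ and gap $g_0$ (be it an element of $S$, or an $E[\cdot]$‑palindrome with $g_0=0$) gives, for each $t\ge0$, the contraction with right‑arm start $a_P+t$, arm $M_P-t$, gap $g_0+2t$; this is a genuine $\alpha$‑gapped palindrome exactly when $g_0+2t\le(\alpha-1)(M_P-t)$ and $M_P-t\ge1$, i.e.\ when $0\le t\le t_{\max}(P):=\min\{\lfloor((\alpha-1)M_P-g_0)/(\alpha+1)\rfloor,\,M_P-1\}$ (with $t_{\max}(P)\ge0$ because $P$ is itself $\alpha$‑gapped, so $(\alpha-1)M_P\ge g_0$), a quantity computable in $\bigo(1)$ time.

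Combining the two directions yields, for every $j$,
\[
\LP[j]=\max\bigl\{\,M_P-(j-a_P)\;:\;P\in S\cup S_0,\ a_P\le j\le a_P+t_{\max}(P)\,\bigr\}
\]
(and $\LP[j]=0$ when this set is empty), where $S_0$ is the family of $E[\cdot]$‑palindromes. Every term equals $(M_P+a_P-1)-(j-1)$, i.e.\ a fixed weight $M_P+a_P-1\in[1,n]$ minus $j-1$; so we attach to each $P\in S\cup S_0$ the interval $I_P=[a_P,a_P+t_{\max}(P)]$ with weight $M_P+a_P-1$, invoke Lemma~\ref{stabbing} to obtain, for all $j$, the value $H[j]=$ the maximum weight over all intervals $I_P$ containing $j$, in time $\bigo(n+|S\cup S_0|)=\bigo(\alpha n)$, and finally output $\LP[j]=H[j]+1-j$ (or $0$ if $j$ lies in no $I_P$) in a final $\bigo(n)$ sweep. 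All phases — producing $S$, computing $E[\cdot]$, and the stabbing step — run in $\bigo(\alpha n)$, which proves the statement.

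The main obstacle is precisely the structural claim behind the displayed identity: that every optimal gapped palindrome at a position is an inward contraction of some (possibly empty‑gap) palindrome that is maximal in both directions, and that conversely no such contraction overshoots $\LP$. Verifying it requires checking (i) that the ``grow outward, then grow inward to maximality'' procedure leaves the right‑arm start fixed until the inward phase and then moves it in the controlled way above, (ii) that $\alpha$‑gappedness is preserved at every step, and (iii) that the endpoint indeed lies in $S\cup S_0$ — in particular being careful that $S$ captures exactly the maximal palindromes with non‑empty gap while the empty‑gap endpoints are accounted for by $E[\cdot]$. This is what lets us bypass the naive $\Theta(n^2)$‑time strategy of explicitly enumerating all contractions of all maximal palindromes, replacing it by the single $\bigo(\alpha n)$ stabbing computation of Lemma~\ref{stabbing}.
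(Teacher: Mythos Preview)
Your proposal is correct and follows essentially the same route as the paper: enumerate the maximal $\alpha$-gapped palindromes via \cite{STACS2016}, attach to each one the interval of right-arm start positions reachable by contracting the arms at the inner ends (your $t_{\max}(P)=\lfloor((\alpha-1)M_P-g_0)/(\alpha+1)\rfloor$ is exactly the paper's $r$; your extra $\min$ with $M_P-1$ is harmless but redundant, since $(\alpha-1)M_P/(\alpha+1)<M_P$), weight it by the fixed right endpoint $a_P+M_P-1$ of the arm, and invoke Lemma~\ref{stabbing} to recover $\LP[j]=H[j]-j+1$.

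The one substantive difference is that you explicitly treat the empty-gap case through Manacher's $E[\cdot]$, to cover the possibility that extending the optimal palindrome at $j$ inward collapses the gap to zero before reaching an element of $S$. The paper's proof of this theorem silently omits that case (while, interestingly, it does handle the analogous situation for repeats in Theorem~\ref{sol_LLAR} via $SC[\cdot]$), so your argument is in fact slightly more complete on this point.
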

\begin{proof}
We assume that we are given an input word $w$, for which the set $S$ of all maximal $\alpha$-gapped palindromes is computed, using the algorithm of~\cite{STACS2016}. 

Let us consider a maximal $\alpha$-gapped palindrome $w[i..i+\ell-1]vw[j..j+\ell-1]$, with $w[i..i+\ell-1]^R=w[j..j+\ell-1]$. For simplicity, let us denote by $\delta=|v|=j-i-\ell$, the length of the gap; here, $i$ and $j+\ell-1$ will be called the {\em outer ends} of this palindrome, while $j$ and $i+\ell-1$ are the {\em inner ends}. 

It is not hard to see that from a maximal $\alpha$-gapped palindrome one can get a family of $\alpha$-gapped palindromes whose arms cannot be extended by appending letters simultaneously to their outer ends. We now show how this family of $\alpha$-gapped palindromes can be computed. Intuitively, we extend simultaneously the gap in both directions, decreasing in this way the length of the arms of the palindrome, until the gap becomes long enough to violate the $\alpha$-gapped restriction. The longest possible such extension of the gap can be easily computed. 

Indeed, let $r=\left \lfloor \frac{(\alpha-1)\ell-\delta}{\alpha +1}\right\rfloor$. It is not hard to check that for $r'\leq r$ we have that $w[i..i+\ell-r'-1]v'w[j+r'..j+\ell-1]$, with $v'=w[i+\ell-r'-1..i+\ell-1]vw[j..j+r'-1]$, is an $\alpha$-gapped palindrome whose left arm cannot be extended by appending letters to their outer ends. For $r'>r$ we have that $w[i..i+\ell-r'-1]v'w[j+r'..j+\ell-1]$, with $v'=w[i+\ell-r'-1..i+\ell-1]vw[j..j+r'-1]$, is still a gapped palindrome, but it is not $\alpha$-gapped anymore.  So, for a maximal $\alpha$-gapped palindrome $p=w[i..i+\ell-1]vw[j..j+\ell-1]$, we associate the interval $I_p=[j,j+r]$, and associate to it a weight $g(I_p)=j+\ell-1$. Intuitively, we know that at each position $j'\in I_p$ there exists a factor $u$, ending at position $j+\ell-1$, such that $u^Rv$ is a suffix of $w[1..j'-1]$ for some $v$. 

On the other hand, if $u$ is the longest factor starting at some position $j'\leq n$ such that $u^Rv$ is a suffix of $w[1..j-1]$, then the factor $w[i..j]=u^Rvu$ is, in fact, a maximal $\alpha$-gapped palindrome $x^Ryx$(i.e., $u^R$ is a prefix of $x^R$ and $u$ is a suffix of $x$). In other words, $u$ and $u^R$ could be extended simultaneously inside the gap, but not at the outer ends.

\begin{figure}
\begin{center}
\includegraphics[width=\linewidth]{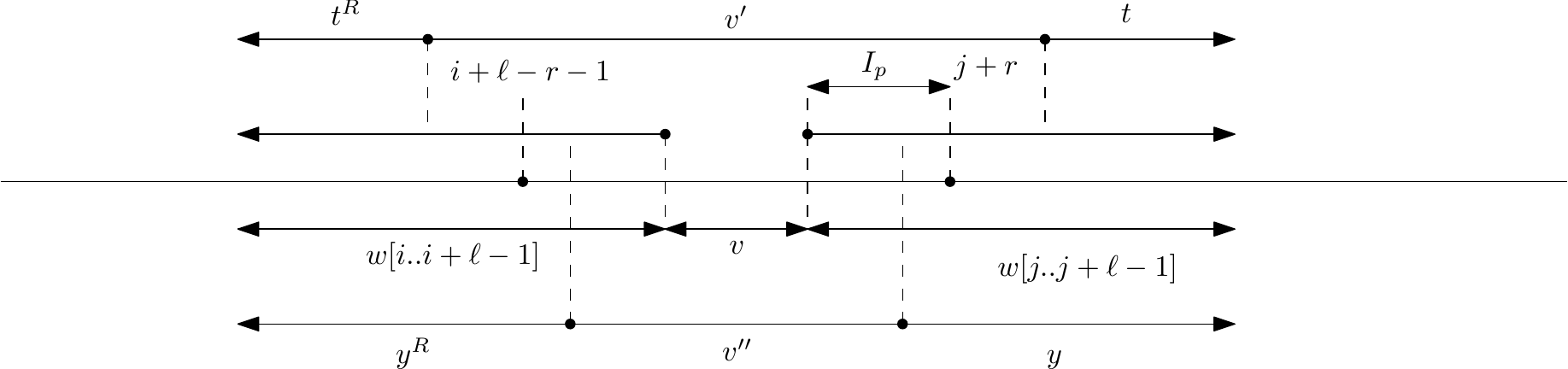}
\end{center}
\vspace{-0.5cm}
\caption{Proof of Theorem \ref{proof_LLAP}: $p=uvu$ is a maximal $\alpha$-gapped palindrome (here $\alpha=2$) with $|v|=\delta$ and $|u|=\ell$. We define $r=\left \lfloor \frac{(\alpha-1)\ell-\delta}{\alpha +1}\right\rfloor=\left \lfloor \frac{\ell-\delta}{3}\right\rfloor$. This allows us to define the interval $I_p$, where the right arm of an $\alpha$-gapped palindrome obtained from $p$ may start. Further, $y^Rv''y$ is an example of such an $\alpha$-gapped palindrome; $t^Rv't$ is a gapped palindrome obtained from $p$, which is not $\alpha$-gapped: it's right arm does not start in $I_p$.
The interval $I_p$ has weight $j+\ell-1$.}
\end{figure}
 
Consequently, to compute $\LP[i]$ for some $i\leq n$ we have to find the $\alpha$-gapped palindromes $p\in S$ for which the interval $I_p$ contains $i$. Then, we identify which of these intervals has the greatest weight. Say, for instance, that the interval $I_p$m which contains $i$, is the one that weight maximal weight $k$ from all the intervals containing $i$. Then $\LP[j]=k-j+1$. Indeed, from all the factors $u$ starting at position $j$, such that $u^Rv$ is a suffix of $w[1..j'-1]$ for some $v$, there is one that ends at position $k$, while all the other end before $k$ (otherwise, the intervals associated, respectively, to the maximal $\alpha$-gapped palindromes containing each of these factors $u^Rvu$ would have a greater weight). So, the palindrome ending at position $k$ is the longest of them all.

This allows us to design the following algorithm for the computation of $\LP[j]$. We first use the algorithm of \cite{KK09} to compute the set $S$ of all maximal $\alpha$-gapped palindromes of $w$. For each maximal $\alpha$-gapped palindrome $p=w[i..i+\ell-1]w[i+\ell..j-1]w[j..j+\ell-1]$, we associate the interval $I_p=[j,j+r]$, where $r=\left \lfloor \frac{(\alpha-1)\ell-\delta}{\alpha +1}\right\rfloor$  and $\delta=j-i-\ell$, and associate to it the weight $g(I_p)=j+\ell-1$. We process these $|S|$ intervals, with weights and bounds in $[1,n]$, in $\bigo(n+|S|)$ time as in Lemma \ref{stabbing}, to compute for each $j\leq n$ the maximal weight $H[j]$ of an interval containing $j$. Then we set $\LP[j]=H[j]-j+1$. 

The correctness of the above algorithm follows from the remarks at the beginning of this proof. Its complexity is clearly $\bigo(\alpha n)$. 
\end{proof}

The solution of Problem \ref{LLAP}(b) is very similar. 
\begin{theorem}\label{sol_LLAR}
Problem \ref{LLAP}(b) can be solved in $\bigo(\alpha n)$ time.
\end{theorem}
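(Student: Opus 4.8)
The plan is to follow the proof of Theorem~\ref{proof_LLAP}, with $\alpha$-gapped repeats in place of $\alpha$-gapped palindromes, and to handle the degenerate case in which the gap shrinks to $\lambda$ (i.e.\ squares) separately, via the runs of $w$. First I would use the algorithm of~\cite{STACS2016} to compute, in $\bigo(\alpha n)$ time, the set $S$ of all maximal $\alpha$-gapped repeats of $w$ with non-empty gap (so $|S|=\bigo(\alpha n)$), together with the list of all runs of $w$ in $\bigo(n)$ time.

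Next I would attach to each relevant structure a weighted interval of admissible right-arm starting positions. For $p=w[i..i+\ell-1]\,v\,w[j..j+\ell-1]\in S$ with gap length $\delta=j-i-\ell$ and any $j'\geq j$, the suffix $w[j'..j+\ell-1]$ of the right arm of $p$, paired with its aligned copy $w[i+(j'-j)..i+\ell-1]$ inside the left arm, is a gapped repeat $u'v'u'$ with $|u'|=j+\ell-j'$ and total length $|u'v'|=j-i$ independent of $j'$; hence it is $\alpha$-gapped exactly when $j'\leq j+r$, where $r=\left\lfloor\frac{(\alpha-1)\ell-\delta}{\alpha}\right\rfloor\geq0$ (nonnegativity being precisely the $\alpha$-gappedness of $p$). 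So I associate to $p$ the interval $I_p=[j,j+r]$ with weight $g(I_p)=j+\ell-1$. For each maximal run $w[a..b]$ of period $\pi$ and each integer $m\geq1$ I associate, when nonempty, the interval $J_{a,b,m}=[a+m\pi,\ b-\lceil m\pi/\alpha\rceil+1]$ with weight $b$: for every $j'$ in it, $w[j'..b]$ equals $w[j'-m\pi..b-m\pi]$ and the two form an $\alpha$-gapped repeat with right arm starting at $j'$ of arm length $b-j'+1$. Since $J_{a,b,m}$ can be nonempty only for $m$ smaller than the run's exponent, the total number of run-intervals is $\bigo(n)$ by the bound on the sum of run exponents; altogether there are $\bigo(\alpha n)$ weighted intervals, contained in $[1,n+1)$ and with weights in $[1,n]$.

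I would then apply Lemma~\ref{stabbing} to all these intervals, obtaining in $\bigo(\alpha n)$ time the largest weight $H[j']$ of an interval containing $j'$, and set $\LR[j']=H[j']-j'+1$ (and $\LR[j']=0$ if $j'$ lies in no interval). Each interval we created carries a genuine $\alpha$-gapped repeat whose right arm starts at every one of its points with arm length $(\text{weight})-j'+1$, so $\LR[j']$ is never overestimated. For the reverse inequality, take an $\alpha$-gapped repeat with right arm starting at $j'$ and extend it to a maximal $\alpha$-gapped repeat $p$; this stays $\alpha$-gapped because each extension step in the definition of maximality keeps $|uv|$ unchanged while enlarging $|u|$. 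If $p$ has non-empty gap, then $p\in S$ and the original repeat is no longer than the suffix-repeat of $p$ constructed above, so $j'\in I_p$; if $p$ has empty gap, then $p$ is a square, which lies inside a run $w[a..b]$, and the original repeat is dominated by $J_{a,b,m_0}$ for the period offset $m_0$ it uses inside that run.

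I expect the completeness claim in the degenerate case to be the main obstacle: one has to argue that a maximal $\alpha$-gapped repeat with empty gap is a maximal square, identify the run containing it, and check that the sub-repeats of such a square with right arm starting at $j'$ are exactly those enumerated by the intervals $J_{a,b,m}$ — in particular that the offset $m_0$ coming from the given repeat places $j'$ inside $J_{a,b,m_0}$ and that the associated arm $b-j'+1$ is $\alpha$-gapped there. The $\bigo(n)$ bound on the number of run-intervals and the reduction to Lemma~\ref{stabbing} are routine, and the non-degenerate part of the argument merely transcribes the proof of Theorem~\ref{proof_LLAP}.
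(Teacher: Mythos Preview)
Your non-degenerate part is exactly the paper's: one weighted interval $I_p=[j,j+r]$, $r=\lfloor((\alpha-1)\ell-\delta)/\alpha\rfloor$, weight $j+\ell-1$, per maximal $\alpha$-gapped repeat with non-empty gap, followed by Lemma~\ref{stabbing}. Where you diverge is the empty-gap case. The paper does \emph{not} build any run-based intervals; it simply invokes Lemma~\ref{centred_squares} to compute $SC[j]$ (the longest square centred at $j$) and sets $\LR[j]=\max\{H[j]-j+1,\ SC[j]\}$. So your proposal takes a different, more elaborate route for squares.

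That route, as written, is unsound. Your interval $J_{a,b,m}=[a+m\pi,\ b-\lceil m\pi/\alpha\rceil+1]$ with weight $b$ asserts that for every $j'$ in it the pair $w[j'-m\pi..b-m\pi]$, $w[j'..b]$ is an $\alpha$-gapped repeat with right arm at $j'$; but this needs the left copy to end strictly before position $j'$, i.e.\ $j'\geq b-m\pi+1$, and your left endpoint $a+m\pi$ can be smaller than that. Take $w=a^{6}$ (run $[1,6]$, $\pi=1$) and $\alpha=2$, $m=1$: then $J_{1,6,1}=[2,6]$ with weight $6$, so your formula outputs $\LR[3]\geq 6-3+1=4$; but in fact $\LR[3]=2$, since any admissible $u$ must satisfy $|uv|\leq |w[1..2]|=2$. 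Hence ``$\LR[j']$ is never overestimated'' is false. A second, subtler issue hides in your completeness sketch: you assume the shift $|uv|$ of the original repeat equals $m_0\pi$ for some integer $m_0$, which only holds if you select the run whose period is $\per(xx)$ for the reached square $xx$, not merely ``a run $w[a..b]$ containing it''. The paper's use of $SC[\cdot]$ bypasses both of these bookkeeping pitfalls in one line.
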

\begin{proof}
We first use the algorithm of \cite{STACS2016} to compute the set $S$ of all maximal $\alpha$-gapped repeats with non-empty gap of $w$. For each maximal $\alpha$-gapped repeat $p=w[i..i+\ell-1]w[i+\ell..j-1]w[j..j+\ell-1]$, we associate the interval $I_p=[j,j+r]$, where $r=\left \lfloor \frac{(\alpha-1)\ell-\delta}{\alpha}\right\rfloor$ and $\delta=j-i-\ell$, and associate to it the weight $g(I_p)=j+\ell-1$. We process these $|S|$ intervals, with weights and bounds in $[1,n]$, in $\bigo(n+|S|)$ time as in Lemma \ref{stabbing}, to compute for each $j\leq n$ the maximal weight $H[j]$ of an interval containing $j$. Now, we use Lemma \ref{centred_squares} to compute the values $SC[j]$ for each $j\leq n$. We set $\LR[j]=\max\{H[j]-j+1,SC[j]\}$.

The complexity of this algorithm is $\bigo(\alpha n)$, as $|S|\in \bigo(\alpha n)$ (see \cite{STACS2016}).

The correctness of the algorithm follows from the following remark. For a maximal $\alpha$-gapped repeat $p=w[i..i+\ell-1]w[i+\ell..j-1]w[j..j+\ell-1]$ let $r=\left \lfloor \frac{(\alpha-1)\ell-\delta}{\alpha}\right\rfloor$, where $\delta=j-i-\ell$. Then the factors $w[i+r'..i+\ell-1]w[i+\ell..j+r'-1]w[j+r'..j+\ell-1]$ are $\alpha$-gapped repeats for all $r'\leq r$, whose right arm cannot be extended anymore to the right. Moreover,  the factors $w[i+r'..i+\ell-1]w[i+\ell..j+r'-1]w[j+r'..j+\ell-1]$ are gapped repeats which are not $\alpha$-gapped for all $r'> r$. The rest of the arguments showing the soundness of our algorithm are similar to those of Theorem \ref{proof_LLAP}.
\end{proof}

\section{Future Work}

In this paper we proposed a series of algorithms that construct data structures giving detailed information on the longest gapped repeats  and palindromes occurring in a given word. There are several directions in which the work presented here can be continued.

Firstly, it seems interesting to us whether Problem \ref{LPFgG}(b) (the construction of the array $\LPdF_{g,G}[\cdot]$) can be solved faster. An intermediate problem could be to check whether we can find in linear time the longest gapped repeats with the length of the gap between a given lower bound and a given upper bound. 

Secondly, although the algorithms we propose in Theorems~\ref{algorithm_rep_case_aperiodic} and~\ref{algorithm_case_aperiodic} do not rely on computing and going through all the maximal $\alpha$-gapped repeats and palindromes when looking for the longest such structure, they have asymptotically the same complexity as the (optimal) algorithms finding all such structures. Thus, it seems natural and interesting to design algorithms finding the longest $\alpha$-gapped repeat or palindrome of a word that run faster than the algorithms we proposed here (and, in particular, than the algorithms finding all these structures). Also, it is interesting whether we can construct the data structures defined in Problem~\ref{LLAP} without producing first the list of all $\alpha$-gapped repeats and palindromes. 

Lastly, following the problems studied by \cite{GMMNT13,GMN14}, one could be interested in finding the longest gapped pseudo-repeats. More precisely, for a literal bijective anti-/morphism $f$, we want to find the longest word (or words) $u$ such that a given word $w$ contains a factor $f(u)vu$ with $|v|$ subject to different length-restrictions (e.g., between a lower and an upper bound, or shorter than $|u|$ multiplied by a factor, like in the case of $\alpha$-gapped repeats and palindromes, etc.). Such repeats and palindromes are sometimes used to formalise repeats and palindromes occurring in DNA sequences. In this setting one works with the alphabet $\{A,C,G,T\}$. When we are interested in direct repeats we may take $f$ work as a morphism and model the Watson-Crick complementarity: $f(A)=T, f(C)=G, f(G)=C,$ and $f(T)=A$;  when we are interested in inverted repeats in the genetic sequence we may take $f$ work as an antimorphism, still defined by the Watson-Crick complementarity. It is not hard to see that our algorithms can also be adapted in a straightforward manner to work in the context of such gapped pseudo-repeats.

\acknowledgements
The authors thank the anonymous referees of this paper, as well as those of the conference papers which we extend here, for their valuable remarks, suggestions, and comments, that improved the quality of this manuscript. The work of Florin Manea was supported by the DFG grant 596676.

\nocite{*}
\bibliographystyle{abbrvnat}
\bibliography{f_periodic_old}
\label{sec:biblio}

\end{document}